\documentclass[11pt]{article}
\usepackage[margin=1in]{geometry}
\usepackage{amsmath,amssymb,amsthm,thmtools}
\usepackage[dvipsnames]{xcolor}
\usepackage{xifthen}
\usepackage{bbm}
\declaretheorem[name=Theorem, numberwithin=section]{theorem}

\declaretheorem[name=Lemma,sibling=theorem]{lemma}
\declaretheorem[name=Corollary, sibling=theorem]{corollary}

\theoremstyle{definition}
\declaretheorem[name=Definition, sibling=theorem, style=definition]{definition}

\declaretheorem[name=Remark,sibling=theorem]{remark}

\usepackage[numbers,sort]{natbib}
\usepackage[colorlinks=true]{hyperref}
\hypersetup{
    colorlinks=true,
    linkcolor=PineGreen,
    filecolor=magenta,      
    urlcolor=cyan,
    citecolor=WildStrawberry,
}
\usepackage[capitalise,nameinlink,noabbrev]{cleveref}

\crefname{appendix}{appendix}{appendices}
\Crefname{appendix}{Appendix}{Appendices}

\newcommand{\EMD}{\mathsf{EMD}}
\newcommand{\TV}{\mathsf{TV}}
\newcommand{\OPT}{\mathsf{OPT}}
\newcommand{\MaxCSP}{\mathsf{MaxCSP}}
\newcommand{\set}[1]{\left\{#1\right\}}
\newcommand{\abs}[1]{\left|#1\right|}
\newcommand{\1}[1]{\mathbbm 1\left[#1\right]}

\newcommand{\argmin}{\mathop{\arg\min}}
\newcommand{\inner}[2]{\langle #1,#2\rangle}
\newcommand{\E}[2][]{ \ifthenelse{\isempty{#1}}
  {\mathbb{E}\left[#2\right]}
  {\mathop{\mathbb{E}}_{#1}\left[#2\right]} }
\renewcommand{\Pr}[2][]{ \ifthenelse{\isempty{#1}}
  {\mathbf{Pr}\left[#2\right]} {\mathop{\mathbf{Pr}}_{#1}\left[#2\right]} }

\title{Sensitivity Lower Bounds via Locally Testable Codes}
\author{Yuichi Yoshida\thanks{Supported by JSPS KAKENHI Grant Number 24K02903.}\\ National Institute of Informatics \\ \texttt{yyoshida@nii.ac.jp}
\and 
Zihan Zhang\thanks{Supported by JST SPRING, Japan Grant Number JPMJSP2104.} \\ National Institute of Informatics \\ \texttt{zihan@nii.ac.jp}}
\date{}

\begin{document}

\maketitle
\begin{abstract}
Sensitivity quantifies how far an algorithm's output can move in Hamming distance when a single input element is perturbed.
In this work, we present a general scheme that turns any locally testable code (LTC) into a sensitivity lower bound for the constraint satisfaction problem encoded by the code.

Instantiating the scheme with the $c^3$-LTC yields a constant $\varepsilon > 0$ for which every $(1-\varepsilon)$-approximation algorithm for satisfiable Max E3LIN2 must have sensitivity $\Omega(n)$, strengthening the previous $\Omega(n^\delta)$ lower bound that held only for general instances. 
Standard reductions then give an $n^{1-O(\varepsilon)}$ sensitivity lower bound for $n^{-\varepsilon}$-approximation algorithms for the maximum clique problem, and an $\Omega(k)$ sensitivity bound for $(1-\varepsilon)$-approximation algorithms for the maximum $k$-coverage problem, among others.
These lower bounds match the trivial upper bounds and imply that even slightly stable algorithms cannot achieve these approximations.

A second instantiation, using a simple repetition code, shows that any $(1-\varepsilon)$-approximation algorithm for Max Cut on bipartite graphs must have sensitivity $\Omega(1/\varepsilon)$ as long as $\varepsilon = O(1/\sqrt{n})$, extending the prior result for exact algorithms (the regime $\varepsilon < 1/n$). Thus even on bipartite graphs, where a perfect cut always exists, near-optimal solutions cannot be maintained stably.

Our sensitivity lower bounds also have two further applications.
First, our argument yields sensitivity lower bounds in an averaged setting, implying that any nearly optimal randomized algorithm for Max E3SAT has linearly many output bits that, after fixing the random seed, are Boolean functions with large influence.
Second, via the sensitivity-to-locality connection, our worst-case sensitivity lower bounds imply locality lower bounds in the non-signaling model, which generalizes $\mathsf{LOCAL}$ and quantum-$\mathsf{LOCAL}$.

\end{abstract}

\thispagestyle{empty}
\newpage
\thispagestyle{empty}
\tableofcontents
\setcounter{page}{0}
\newpage

\section{Introduction}\label{sec:intro}

The \emph{(worst-case) sensitivity} of an algorithm is the maximum change in its output, measured in Hamming distance, when a single input element is perturbed.
For randomized algorithms we compare output distributions using earth mover's distance (EMD).
Sensitivity offers a foundational robustness metric for algorithms and has deep connections to sublinear-time and distributed computation~\cite{fleming2026sensitivity}.
Sensitivity also underpins consistent decision making and reproducible knowledge discovery. 
Prior work has developed low-sensitivity algorithms for numerous combinatorial optimization problems~\cite{varma2023average,yoshida2021sensitivity,yoshida2026low,kumabe2022average,kumabe2022average-knapsack}, clustering tasks~\cite{yoshida2022average,peng2020average,li2025average}, and decision tree learning~\cite{hara2023average}.

On the lower-bound side, it is known that computing $2$-coloring of a bipartite graph requires sensitivity $\Omega(n)$~\cite{varma2023average}.
Recently, Fleming and Yoshida~\cite{fleming2026sensitivity} leveraged probabilistically checkable proofs (PCPs) to show that, for canonical constraint satisfaction problems (CSPs) such as Max E3SAT, Max E3LIN2, and Max Cut, there exist constants $\varepsilon, \delta > 0$ for which any $(1-\varepsilon)$-approximation algorithm must have sensitivity $\Omega(n^\delta)$, where $n$ denotes the number of variables or vertices.
Standard reductions then yield $\Omega(n^\delta)$ lower bounds for approximation algorithms for the maximum clique and minimum vertex cover problems. 
While their framework provides the first nontrivial lower bounds for approximation algorithms, it nevertheless has several shortcomings:
\begin{itemize}
\item The $\Omega(n^\delta)$ sensitivity bound is evidently far from tight.
\item The lower bounds for Max E3LIN2 and Max Cut are obtained via gadget reductions from Max E3SAT, so they say nothing about approximation algorithms for satisfiable instances.
\item The argument applies only to worst-case sensitivity, leaving the behavior under randomly chosen perturbations unresolved.
\end{itemize}
In this work we address these shortcomings by relating sensitivity to locally testable codes (LTCs) and showing how the existence of an LTC yields sensitivity lower bounds.
We outline the framework below.

\subsection{Our Contributions}

\paragraph{Sensitivity Lower Bounds for CSPs via LTCs.}

\emph{Locally testable codes (LTCs)}~\cite{goldreich2006locally} are error-correcting codes whose membership can be checked by probing only a constant number of coordinates of the received word. 
An LTC is called a \emph{strong LTC}~\cite{goldreich2019strong} if there exists $\kappa > 0$ such that the rejection probability is at least $\kappa$ times the relative distance from the code.
The parameter $\kappa$ is called the \emph{detection probability} of the code.
As we only consider strong LTCs in this work, we also call them LTCs.
In this work, we mainly consider linear codes, meaning they form a subspace of $\mathbb F_p^n$.

An LTC naturally induces a CSP instance: we regard each coordinate of the input word as a variable, and each test performed by the tester as a constraint on those variables.
The \emph{variable degree} $\theta$ of the tester is the maximum number of constraints that query any single coordinate.
If the input is a codeword the instance is satisfiable, whereas inputs far from the code violate a fraction of the constraints that is proportional to their distance from the code.
For an LTC $C$, let $\mathsf{MaxCSP}(C)$ denote the Max CSP whose constraints are restricted to those used in the tester for $C$.
Throughout the introduction, we measure sensitivity for CSP instances with respect to deleting a single constraint.
The main contribution of this work is showing a general scheme that converts a linear LTC $C$ to a sensitivity lower bound of $\mathsf{MaxCSP}(C)$.
\begin{theorem}\label{thm:main-intro}
    Let $\varepsilon > 0$ and $C$ be a linear LTC over $\mathbb F_p$ of block length $n$, relative distance $\delta$, rate $r$, variable degree $\theta$, with detection probability $\kappa$.
    Then, any $(1-\varepsilon)$-approximation algorithm for $\mathsf{MaxCSP}(C)$ has sensitivity at least
    \[
        \frac{n}{2\theta}\left(\delta\left(1-\frac{p}{p-1}\frac{\varepsilon}{r\kappa}\right)-\frac{2\varepsilon}{\kappa}\right) - \frac{1}{2\theta}.
    \]
    even on satisfiable instances.
\end{theorem}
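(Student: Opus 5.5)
The plan is a hybrid argument over a family of satisfiable instances indexed by words $w\in\mathbb F_p^n$: the instance is unchanged along a closed loop of words, yet the codeword "decoded" from the output has to shift along the way.

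\textbf{Instances and shifted cosets.} Each test of the tester for $C$ is a linear constraint $\inner{a}{x_S}=0$. For $w\in\mathbb F_p^n$, let $I_w$ be the instance with the constraints $\inner{a}{x_S}=\inner{a}{w_S}$. Its satisfying assignments are exactly the coset $w+C$, so $I_w$ is satisfiable. Moreover $I_w=I_{w'}$ whenever $w-w'\in C$. Let $y_w$ be the output of a $(1-\varepsilon)$-approximation algorithm on $I_w$. Then $y_w$ violates at most an $\varepsilon$ fraction of the constraints. The violated constraints of $I_w$ at $y_w$ are exactly the tests rejecting $y_w-w$, so by the strong local testability of $C$ we get $\mathrm{dist}(y_w-w,C)\le \varepsilon n/\kappa$. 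Let $c_w\in C$ be a codeword attaining this distance.

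\textbf{Neighbouring words give nearby instances.} If $w'=w+\alpha e_i$, then $I_w$ and $I_{w'}$ differ only in the constraints that query coordinate $i$, and there are at most $\theta$ of them. We can pass from $I_w$ to $I_{w'}$ by deleting those at most $\theta$ constraints and then re-adding them with new right-hand sides, which is at most $2\theta$ single-constraint modifications. So if the sensitivity is $s$, then $\mathrm{dist}(y_w,y_{w'})\le 2\theta s$. On the other hand, $y_w\approx w+c_w$ and $y_{w'}\approx w+\alpha e_i+c_{w'}$, each within $\varepsilon n/\kappa$. Hence if $c_w\neq c_{w'}$, the triangle inequality and the distance of $C$ give
\[
  \mathrm{dist}(y_w,y_{w'})\ \ge\ \delta n-1-\frac{2\varepsilon n}{\kappa}.
\]
Combining the two bounds yields $s\ge \frac{n}{2\theta}\bigl(\delta-\frac{2\varepsilon}{\kappa}\bigr)-\frac{1}{2\theta}$, which is the main shape of the bound in \cref{thm:main-intro}.

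\textbf{Forcing a switch.} It remains to show that some adjacent pair along a path must have $c_w\neq c_{w'}$. Take a nonzero codeword $c$ and walk $w_0=0,w_1,\dots,w_n=c$, changing one coordinate per step. Since $I_0=I_c$, the outputs coincide: $y_0=y_c$. So $y_0$ is close both to $c_{0}$ and to $c+c_{c}$. If the decoded codeword were constant along the walk, we would get $c_0=c+c_0$ up to the decoding ambiguity, which is impossible. Hence some step switches codewords.

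\textbf{Main obstacle: non-unique decoding.} The difficulty is that $\varepsilon n/\kappa$ need not be below $\delta n/2$, so $c_w$ need not be unique and the "switch" step must be argued more robustly. My plan is to average. Choose $c$ uniformly at random from $C$ (or walk from a uniformly random $w$ to $w+c$), and count how many codewords can lie within radius $\varepsilon n/\kappa$ of a given point relative to $|C|=p^{rn}$. The expected number of coordinates on which the decoded codeword can fail to change is then controlled. I expect this counting, using that a random nonzero codeword has each coordinate nonzero with probability about $(p-1)/p$, to be what produces the correction factor $1-\frac{p}{p-1}\frac{\varepsilon}{r\kappa}$ multiplying $\delta$. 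I expect to need to get this step exactly right, while the remaining steps are routine.
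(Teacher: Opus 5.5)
Your winding argument works for \emph{deterministic} algorithms, and the obstacle you single out is not real. The target bound is negative unless $2\varepsilon/\kappa<\delta$, and in that regime the radius $\varepsilon n/\kappa$ is already below $\delta n/2$. Moreover, if you pick one nearest point $z_{[w]}$ in each coset and set $c_w:=z_{[w]}-w$, you get $c_{w+c}=c_w-c$ exactly. That forces a switch along any walk from $0$ to $c\neq 0$, with no uniqueness and no counting, and yields $\frac{n}{2\theta}(\delta-\frac{2\varepsilon}{\kappa})-\frac{1}{2\theta}$.

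The genuine gap is randomized algorithms. The theorem covers them: sensitivity is an $\EMD$, and the paper's proof treats randomized $\mathcal A$. For them your instance family cannot work. All $n$ coordinates of $I_w$ are free, so its solution set is the whole coset $w+C$. Consider the exact algorithm that outputs a uniformly random satisfying assignment; it has small $\EMD$ at every step of every chain you use. A chain instance is $I_w$ or $I_{w'}$ minus a set $D$ of at most $\theta$ tests, with solution set $w+L_D$ for a subspace $L_D\supseteq C$. Deleting one more test adds a direction $u$ that fails at most $\theta$ tests. By local testability, after subtracting the nearest codeword, $u$ can be taken of weight at most $\theta n/(\kappa|\mathcal S|)$, and this bounds the $\EMD$ of that step. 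This is $O(1)$ for constant-query codes, since every coordinate is queried by some test. For this algorithm the decoded codeword $c_w$ is exactly uniform on $C$ for every $w$, so nothing ever switches in law. No averaging over $c$ or counting of codewords in balls can repair this, and the factor $1-\frac{p}{p-1}\frac{\varepsilon}{r\kappa}$ does not arise from such counting.

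What is missing is the paper's choice of instances. Via the systematic form (\Cref{lem:lin-sys}), the message coordinates are hard-wired into the constraints: $\mathcal I_C(m)$ has only the $n-k$ parity variables, and its only solution is the parity part of $f(m)$. So your $I_w$ is in fact not the paper's $\MaxCSP(C)$, whose inputs are messages. With this family, even a randomized near-optimal output decodes under the nearest-message decoder $\sigma$ to a message within expected relative distance $\varepsilon/\kappa$ of $m$ (\Cref{lem:m-approximation}).

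A single pair $m\sim m'$ may still decode identically, so the paper averages over all edges of $\Sigma^k$. The set $\mathcal D_\varepsilon$ of achievable decoded laws is convex, and the average-$\TV$ functional $L$ is convex and invariant under coordinate and symbol permutations. Hence one may assume $\nu_m(m')=g(d_{\mathrm H}(m,m'))$. Comparing coordinate $1$ under $m$ and under $\hat m$ then gives average $\TV\ge 1-\frac{p}{p-1}\frac{\varepsilon}{r\kappa}$, which is exactly where that factor comes from. A coupling converts this into $\EMD\ge\delta\cdot\TV-\frac{2\varepsilon}{\kappa}-\frac1n$ for some neighbouring pair. The $2\theta$-step deletion chain, which you already have, finishes the proof.
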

We emphasize that this lower bound is information-theoretic and does not rely on any complexity assumption.

We first instantiate \Cref{thm:main-intro} using the $c^3$-LTC of~\cite{dinur2022locally,panteleev2022asymptotically}, which enjoys constant relative distance, constant rate, a constant number of queries, and constant detection probability.
The associated CSP is Max E3LIN2, whose constraints are linear equations of the form $x + y + z = b$ over $\mathbb F_2$.
\begin{theorem}\label{thm:e3lin2-intro}
    There exists $\varepsilon > 0$ such that every $(1-\varepsilon)$-approximation algorithm for Max E3LIN2 must have sensitivity $\Omega(n)$ even when the input instances are satisfiable.
\end{theorem}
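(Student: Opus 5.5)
This is a direct instantiation of \Cref{thm:main-intro}. We take $C$ to be the $c^3$-LTC of~\cite{dinur2022locally,panteleev2022asymptotically} over $\mathbb F_2$. It has block length $n$, constant relative distance $\delta>0$, constant rate $r>0$, and a tester with constant detection probability $\kappa>0$. We arrange that the tester makes a constant number $q$ of queries, all tests are linear parity checks $\sum_{i\in S} x_i = 0$ with $|S|\le q$, and the variable degree $\theta$ is a constant. With $p=2$ the bound of \Cref{thm:main-intro} becomes
\[
\frac{n}{2\theta}\left(\delta\left(1-\frac{2\varepsilon}{r\kappa}\right)-\frac{2\varepsilon}{\kappa}\right)-\frac{1}{2\theta}.
\]
We then choose $\varepsilon$ small enough, for instance $\varepsilon \le \min\{r\kappa/8,\ \delta\kappa/8\}$, so that the bracket is at least $\delta/2$. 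This gives a lower bound of $\frac{\delta n}{4\theta}-O(1)=\Omega(n)$ for every $(1-\varepsilon)$-approximation algorithm for $\MaxCSP(C)$, even on satisfiable instances. The instances are satisfiable because every codeword satisfies all tests.

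The remaining work is to make $\MaxCSP(C)$ a subclass of Max E3LIN2, i.e., every constraint must be a linear equation on exactly three variables. The $c^3$-LTC tester checks local codes on constant-size neighborhoods, so its tests are parity checks of constant but possibly larger arity. I would first try to choose a presentation of the tester in which every check has weight exactly $3$. If that is not available, the fallback is to reduce arity while preserving local testability. We replace each check $x_1+\dots+x_k=0$ by a chain of weight-$3$ equations using auxiliary variables $y_j$ that represent the prefix sums. The new code $C'$ consists of the extended codewords, and its tests are the uniform distribution over all $3$-variable equations. The other parameters stay constant: the rate drops by at most a factor $O(k)$, the relative distance drops by at most a factor $O(k)$ since auxiliaries are determined by the original bits, and $\theta$ grows by at most a constant factor. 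Equations with only two variables can be padded with a fresh variable forced to zero, or can be avoided by the chain construction.

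The step I expect to be the main obstacle is showing that $C'$ still has constant detection probability. Suppose a word $(x,y)$ violates a $\mu$ fraction of the chained equations. Then for each original check, some prefix-sum equation in its chain fails whenever the check fails on $x$. So the fraction of failed original checks is at most $O(k)\mu$, which by the LTC property bounds the distance of $x$ from $C$ by $O(k\mu/\kappa)$. Let $c$ be the nearest codeword. We also need $y$ to be close to the canonical auxiliary values of $c$. For each check, the number of wrong auxiliaries in its chain is at most $k$ times the indicator that the chain contains a violated equation or touches a coordinate where $x$ and $c$ differ. Because the degree is bounded, summing over checks gives distance $O(k\theta(\mu+\mathrm{dist}(x,C)))$. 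Together these show $C'$ is a strong LTC with detection probability $\Omega(\kappa/(k^2\theta))$, which is still a constant. Applying the first paragraph to $C'$ then yields \Cref{thm:e3lin2-intro}, with $n$ changing only by a constant factor.

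Finally, Max E3LIN2 as a problem contains $\MaxCSP(C')$ as a special case, and constraint deletions stay within this family. So any $(1-\varepsilon)$-approximation algorithm for satisfiable Max E3LIN2 restricts to one for $\MaxCSP(C')$, and the $\Omega(n)$ sensitivity bound transfers.
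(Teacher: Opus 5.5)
Your proposal follows the paper's proof: instantiate \Cref{thm:main-intro} with the $c^3$-LTC over $\mathbb F_2$, convert its tests into single parity checks, reduce their arity to $3$ with the prefix-sum chain gadget, and take $\varepsilon$ small enough that the bracket is $\Omega(\delta)$. Your explicit argument that the chain gadget keeps the detection probability constant (losing a factor $O(q^2\theta)$) supplies a step the paper only asserts. The paper is more careful than you on the first conversion: the DELLM tests are memberships in local tensor codes $C_0\otimes C_0$, i.e.\ systems of parities rather than single parities, so one must first sample a random row or column and a random check of $C_0$, which costs only a constant factor $1/(2n_0k_0)$ in detection.
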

Since the assignment length is $n$, every algorithm automatically has sensitivity at most $n$, so \Cref{thm:e3lin2-intro} matches this trivial upper bound up to the hidden constant.
Equivalently, for satisfiable Max E3LIN2, even slightly stable algorithms cannot achieve a $(1-\varepsilon)$-approximation.

Previously, the best lower bound was $\Omega(n^\delta)$ for some $\delta > 0$, and it applied only to arbitrary (not necessarily satisfiable) instances~\cite{fleming2026sensitivity}.
This non-tight exponent comes from the PCP-based reduction: it blows an $n$-variable instance up to size about $N := n^{1/\delta}$, so even if the $\Omega(n)$ sensitivity lower bound survives the reduction, expressing the result in terms of the enlarged instance size $N$ leaves only an $\Omega(N^\delta)$ lower bound.

Via standard reductions from Max E3LIN2 to Max 3SAT and Max Cut (see, e.g.,~\cite{trevisan2000gadgets}), we can show that $(1-\varepsilon)$-approximation algorithms for Max E3SAT must have sensitivity $\Omega(n)$ even on satisfiable instances, and that $(1-\varepsilon)$-approximation algorithms for Max Cut on general graphs incur sensitivity $\Omega(n)$ as well.
These results again improve upon the prior sensitivity bound of $\Omega(n^\delta)$~\cite{fleming2026sensitivity}.

To derive a sensitivity bound specific to Max Cut on bipartite graphs, we construct an LTC whose
associated CSP is precisely Max Cut and substitute it into \Cref{thm:main-intro}. The result is
striking because it holds even on bipartite inputs, where a perfect cut always exists:
\begin{theorem}\label{thm:max-cut-intro}
    There exists a constant $\varepsilon_0>0$ such that for every even $n$ and every $0<\varepsilon\le \varepsilon_0/\sqrt{n}$, any $(1-\varepsilon)$-approximation algorithm for Max Cut on bipartite graphs on $n$ vertices must have sensitivity $\Omega(1/\varepsilon)$.
\end{theorem}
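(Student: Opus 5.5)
We apply \Cref{thm:main-intro} to a repetition-type LTC whose tester constraints are exactly Max Cut edges. Identify the $n$ vertices with coordinates and take the binary code $C=\set{0^{n/2}1^{n/2},\,1^{n/2}0^{n/2}}$ after the shift $x\mapsto x+\mathbf 1_{R}$, where $R$ is the right half. Equivalently, use the linear repetition code $\set{0^n,1^n}$ over $\mathbb F_2$ with equality constraints $x_u=x_v$, twisted into inequality constraints $x_u\neq x_v$ for $u\in L$, $v\in R$. A direct twist keeps the code affine rather than linear, but \Cref{thm:main-intro} only uses the subspace structure up to translation, and cut constraints are exactly the affine constraints $x_u+x_v=1$. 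The tester picks a random edge of a bipartite expander on $L\cup R$; to make every constraint a cut edge, I take a $d$-regular bipartite expander between $L$ and $R$. Then every instance $\mathsf{MaxCSP}(C)$ is a bipartite graph, so it is a valid input.

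Next I compute the parameters. The rate is $r=1/n$ and the relative distance is $\delta=1$. The variable degree is $\theta=d$, a constant. For the detection probability, let $w$ differ from the nearest codeword on a set $S$ with $|S|=\alpha n\le n/2$. The violated constraints are exactly the edges crossing $S$ in the untwisted graph. By edge expansion $h$ of the $d$-regular graph, this gives rejection probability at least $h|S|/(dn/2)=(2h/d)\alpha$, so $\kappa=2h/d$ is a constant. Plugging into \Cref{thm:main-intro} with $p=2$ gives a lower bound of
\[
\frac{n}{2d}\left(1-\frac{2\varepsilon n}{\kappa}-\frac{2\varepsilon}{\kappa}\right)-\frac1{2d},
\]
which is $\Omega(n)$ only when $\varepsilon\lesssim 1/n$. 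That recovers the exact regime but not the claimed $\Omega(1/\varepsilon)$ for $\varepsilon$ up to $1/\sqrt n$.

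The main obstacle is the poor rate $r=1/n$, which kills the bound once $\varepsilon\gg 1/n$. My plan is to trade block length for rate: use a smaller repetition code on $m$ blocks rather than $n$ vertices. Concretely, I would partition the $n$ vertices into $m$ groups and build the instance as a disjoint union of $k=n/m$ copies of an $m$-vertex bipartite expander-repetition LTC. This is the direct product/concatenation of $k$ independent repetition codes, which gives rate $k/n=1/m$, relative distance $1/m$, and constant $\kappa$, since the rejection probability averages over components. Alternatively, keep one component of size $m$ and pad the rest with isolated vertices, then track how $\varepsilon$ rescales when the graph is embedded in a larger instance.

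With the single-component version, the embedding costs only a constant factor in edge counts, and applying \Cref{thm:main-intro} at block length $m$ gives sensitivity $\Omega(m)$ provided $\varepsilon\le c/m$. Choosing $m=\Theta(1/\varepsilon)$ then yields $\Omega(1/\varepsilon)$, and $m\le n$ requires $\varepsilon\gtrsim 1/n$. The smaller-$\varepsilon$ case follows by monotonicity from the exact-regime bound. The condition $\varepsilon\le\varepsilon_0/\sqrt n$ should emerge from an error term I expect when approximation is measured globally. With $k$ components the algorithm may distribute its $\varepsilon$-loss unevenly, and a counting or averaging step over components, for instance choosing the perturbed edge in a component where the loss is at most its average, needs $k\cdot(\text{loss per component})$ to balance against $m$. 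That balance gives $m\gtrsim\sqrt n$, hence the $1/\sqrt n$ threshold. Verifying this averaging step carefully is where I expect the real work to be.
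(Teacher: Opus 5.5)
Your single-component route is correct and already proves the statement; it takes a genuinely different route from the paper's, and the ``averaging step'' you defer is not needed. Padding an $m$-vertex instance with $n-m$ isolated vertices changes neither the cut values nor the optimum, so the restricted algorithm is still $(1-\varepsilon)$-approximate with no rescaling of $\varepsilon$ at all. Restriction is also $1$-Lipschitz in Hamming distance, so sensitivity can only drop. Plug in the one-bit signed repetition code of length $m$, with $\delta=1$, $r=1/m$, $\theta=d$ and constant $\kappa$; the affine twist is harmless because only systematicity is used (\Cref{thm:sensitivity-opt-message} plus the deletion chain). This gives $\frac{m}{2d}\bigl(1-\frac{2(m+1)\varepsilon}{\kappa}\bigr)-\frac{1}{2d}$, which is $\Omega(1/\varepsilon)$ for $m\approx\kappa/(8\varepsilon)\le n$. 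You therefore get $\Omega(\min\{n,1/\varepsilon\})$ for every $\varepsilon$ below a constant, which contains the theorem, and the hypothesis $\varepsilon\le\varepsilon_0/\sqrt n$ never has to ``emerge''. The paper instead fills all $n$ vertices with $k=\lceil n/t\rceil$ disjoint expander blocks (the code $\mathrm{Rep}^{\pm}(k,t)$), which is your multi-component variant. Note that in your notation its relative distance is $m/n=1/k$, not $1/m$ as you wrote; the rate is $1/m$. There the $1/\sqrt n$ threshold comes from direct parameter substitution, not a per-component averaging argument. \Cref{lem:emd-expectation} already absorbs an arbitrarily concentrated loss through the additive term $2\varepsilon/\kappa$, i.e.\ an absolute error $2\varepsilon n/\kappa$, which must stay below the absolute distance $t\approx\kappa/(8\varepsilon)$. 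This gives $\frac{t}{d}\bigl(1-\frac{2(t+k)\varepsilon}{\kappa}\bigr)-\frac{1}{d}$ and forces $k\varepsilon\approx 8n\varepsilon^2/\kappa\lesssim\kappa$. As for what each approach buys: your padded instance has only $\Theta(1/\varepsilon)$ edges, so the algorithm may violate just $O(1)$ of them, and the result is the near-exact bound transplanted to a smaller graph. The paper's instance has $\Theta(n)$ edges and tolerates $\Theta(\varepsilon n)$ violated edges (up to $\Theta(\sqrt n)$), a genuinely approximate regime, and that extra slack is exactly what costs the $\sqrt n$ restriction.
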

Thus even on this highly structured class, where the optimum cut is trivial, any near-optimal
algorithm must make $\Omega(1/\varepsilon)$ changes after a single edge deletion. As we mentioned,
it was known that any algorithm for $2$-coloring a bipartite graph requires $\Omega(n)$
sensitivity~\cite{varma2023average}, which matches the regime $\varepsilon < 1/n$ in
\Cref{thm:max-cut-intro}. Our result generalizes this by showing an $\Omega(1/\varepsilon)$
sensitivity bound for all $\varepsilon$ as large as $\Theta(1/\sqrt{n})$.

\paragraph{Sensitivity Lower Bounds for Other Combinatorial Problems.}

Applying the standard FGLSS reduction~\cite{feige1996interactive} together with the serial repetition to \Cref{thm:e3lin2-intro}, as in~\cite{fleming2026sensitivity}, yields the following lower bound for the maximum clique problem. 
Here, we consider the setting that the algorithm outputs a vertex set that forms a clique, and sensitivity is measured with respect to deleting a single edge.
\begin{theorem}\label{thm:clique-intro}
    Let $\varepsilon > 0$.
    Then, any algorithm for the maximum clique problem that outputs an $n^{-\varepsilon}$-approximate clique with probability $1-O(1/n)$ must have sensitivity $\Omega(n^{1-c\varepsilon})$ for some universal constant $c>0$.
\end{theorem}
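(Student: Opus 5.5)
We derive \Cref{thm:clique-intro} from \Cref{thm:e3lin2-intro} by the FGLSS reduction, amplified with serial repetition, following the template of~\cite{fleming2026sensitivity}. The reduction must send satisfiable Max E3LIN2 instances to clique instances so that three things hold: deleting one constraint changes only a controlled number of edges; a large clique decodes back to a good assignment; and the decoding map has bounded sensitivity.

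\textbf{Step 1 (amplification).} Start with a satisfiable Max E3LIN2 instance $\Phi$ on $n$ variables with $m=O(n)$ constraints. We may take $m=O(n)$ because the $c^3$-LTC has constant variable degree. Form the $k$-fold repeated CSP $\Phi^{\otimes k}$, whose constraints are $k$-tuples of constraints of $\Phi$. Perfect completeness is preserved. Soundness decays: every assignment satisfies at most a $(1-\varepsilon)^{k}$ fraction of tuples, because the tuples are independent and the constraint sets are identical.

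\textbf{Step 2 (FGLSS graph).} Build the FGLSS graph $G$. Its vertices are pairs (tuple, satisfying local assignment to the at most $3k$ variables in the tuple). Two vertices are adjacent if their local assignments are consistent and they come from distinct tuples. Set $k=\Theta(\log n)$, or more precisely $k=c'\log N/\varepsilon_0$ so that the numbers work out. Then $N=|V(G)|=m^k 4^k=n^{O(k)}$. Clique size is $m^k$ in the satisfiable case, while any clique of size at least a $N^{-\varepsilon}$ fraction of that yields an assignment satisfying many tuples.

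To get polynomial $N$ and an $n^{1-c\varepsilon}$ bound, I will instead sample tuples: take $M=\mathrm{poly}(n)$ random $k$-tuples with $k=\Theta(\varepsilon\log n)$-type parameters, as in~\cite{fleming2026sensitivity}. A Chernoff/union bound over all $2^n$ assignments shows that sampling preserves soundness. This is the reason the clique algorithm is allowed failure probability $O(1/n)$.

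\textbf{Step 3 (decoding).} Given a clique $K$ of size $\ge N^{-\varepsilon}\cdot\OPT$, define the assignment by reading each variable's value from the vertices of $K$. These values are consistent because $K$ is a clique. Variables not covered by $K$ get value $0$. By soundness, this assignment $(1-\varepsilon_1)$-approximates $\Phi$, where $\varepsilon_1$ is the constant from \Cref{thm:e3lin2-intro}.

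\textbf{Step 4 (sensitivity transfer).} Deleting a constraint of $\Phi$ is simulated by deleting all vertices, equivalently all edges, touching tuples that contain it. That is about $Mk/m$ tuples, with $2^{O(k)}$ local assignments each, for a total of $D=n^{O(\varepsilon)}$ edge deletions. If the clique algorithm has sensitivity $s$ per edge deletion, the composed algorithm has sensitivity at most $D\cdot s$ in the decoded assignment. Each vertex change alters at most $3k$ variables, so the bound picks up an extra $O(k)$ factor, though only variables whose coverage changes are affected.

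By \Cref{thm:e3lin2-intro}, $D\,s\cdot O(k)\ge\Omega(n)$, so $s\ge n^{1-O(\varepsilon)}$. Finally, $n=N^{\Theta(1)}$ relative to the graph size, and after rescaling $\varepsilon$ this gives $\Omega(N^{1-c\varepsilon})$ provided $N=n^{1+O(\varepsilon)}$.

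\textbf{Main obstacle.} The hard part is the parameter balancing in Step 2. The graph size must be only $n^{1+O(\varepsilon)}$, otherwise the exponent degrades the same way the PCP blowup did. Achieving this requires choosing $k\approx\varepsilon\log n/\log(1/(1-\varepsilon_1))$ and $M\approx n^{1+O(\varepsilon)}$ random tuples. The gap is then only $N^{-\varepsilon}$-ish, which is exactly the approximation ratio in the statement. I would first try uniform random tuples and a union bound over $2^n$ assignments, which needs $M\gtrsim n/(1-\varepsilon_1)^{k}$. This is compatible with $N=n^{1+O(\varepsilon)}$. Randomized-algorithm failures are handled through the EMD definition, using the $O(1/n)$ failure probability.
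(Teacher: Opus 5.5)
Your overall route is the one the paper points to: FGLSS applied to a randomly sampled, randomness-efficient $k$-fold repetition with $k=\Theta(\varepsilon\log N)$ and $M=nN^{O(\varepsilon)}$ tuples, decoding a clique by reading its labels, and pushing the $\Omega(n)$ E3LIN2 bound along a short path of graphs. However, Step~4 has a genuine error: the quantity you call $D$ counts vertices, not edges. In your graph, a vertex $(j,\alpha)$ is adjacent to every consistent vertex of every other tuple. All but an $O(k^2/n)$ fraction of the tuples share no variable with $T_j$, so its degree is $\Theta(N)$. In particular, every vertex of the planted clique has degree at least $\omega(G)-1=M-1$, and each tuple containing the deleted constraint contributes such a vertex. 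Removing the vertices of those tuples therefore costs at least $M-1=N^{1-O(\varepsilon)}$ edge deletions, and in fact about $N^{1+O(\varepsilon)}$. With the correct $D$, your inequality $D\cdot s\cdot O(k)\ge\Omega(n)$ gives only $s\ge N^{-O(\varepsilon)}$, which is vacuous. Your count $N^{O(\varepsilon)}$ is the number of \emph{vertex} deletions. So your argument, modulo the points below, proves a vertex-deletion version, whereas the theorem measures sensitivity under deleting a single edge. Any variant that implements a constraint change by switching FGLSS vertices on or off hits the same degree barrier.

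One way to repair Step~4 is to make the vertex set independent of the instance, so that constraints affect only edges between overlapping tuples. Take all pairs $(j,\alpha)$ with $\alpha\in\{0,1\}^{\mathrm{vars}(T_j)}$. Join $(j,\alpha)$ and $(j',\alpha')$ iff $j\ne j'$, the two views agree on common variables, and, when $T_j$ and $T_{j'}$ share a variable, both views satisfy their tuples. This construction has the properties you need:
\begin{itemize}
\item A clique still uses at most one vertex per tuple, so $\omega=M$ on satisfiable inputs.
\item A clique is pairwise consistent, so your decoder applies and satisfies every tuple whose vertex is accepting.
\item The non-accepting vertices of a clique lie in pairwise variable-disjoint tuples, so there are at most $n$ of them. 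This is negligible against $N^{-\varepsilon}M$ once $M\ge 2nN^{\varepsilon}$, which is the order your union bound needs anyway.
\item Changing $\le\theta$ constraints only toggles edges between the $2^{O(k)}$ views of the $O(kM/m)$ affected tuples and the views of the $O(k^2M/n)$ tuples overlapping them, so $D=N^{O(\varepsilon)}$ really is an edge count.
\end{itemize}

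There are three smaller points.
\begin{itemize}
\item Fix one tuple sample over the fixed scope list of the LTC family, and apply \Cref{thm:sensitivity-opt-message} to $\mathcal I_C(m),\mathcal I_C(m')$ with $m\sim m'$; intermediate graphs need no approximation guarantee. Do not invoke \Cref{thm:e3lin2-intro} as a black box with per-instance sampling: then the graph for $\Phi-C$ is not obtained from that of $\Phi$ by the deletions you describe.
\item The $O(1/n)$ failure probability is unrelated to the tuple sampling, which is fixed once by the probabilistic method. Its role is to make the composed E3LIN2 algorithm $(1-\varepsilon_1)$-approximate in expectation.
\item The $2^{\Theta(k)}$ views per tuple enter $N$, so $(1-\varepsilon_1)^k\le N^{-\varepsilon}/2$ can only be met for $\varepsilon$ below a constant depending on $\varepsilon_1$. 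Larger $\varepsilon$ must be absorbed into the choice of $c$.
\end{itemize}
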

Consider the naive (and inefficient) strategy that computes the maximum clique and then truncates it to size at most $n^{1-\varepsilon}$. 
This procedure achieves an $n^{-\varepsilon}$ approximation with sensitivity $n^{1-\varepsilon}$, and \Cref{thm:clique-intro} shows that this dependence is essentially optimal.
Also, the previous lower bound~\cite{fleming2026sensitivity} shows that there exist \emph{some} constants $\varepsilon,\delta>0$ such that any $n^{-\varepsilon}$-approximation algorithm has sensitivity $\Omega(n^\delta)$.
By contrast, our lower bound of $\Omega(n^{1-c\varepsilon})$ holds for \emph{any} $\varepsilon > 0$.

We also study the \emph{maximum $k$-coverage problem}, where an instance consists of a universe $U$ and a family $\mathcal F$ of subsets of $U$.
The goal is to select a subfamily $\mathcal S \subseteq \mathcal F$ of $k$ sets that covers as many elements as possible.
Sensitivity is measured with respect to deleting a single element from $U$ (and removing it from every set in $\mathcal F$).
Via a reduction from Max E3SAT we derive the following:
\begin{theorem}\label{thm:coverage-intro}
    There exists $\varepsilon_0 > 0$ such that, for every $0 < \varepsilon < \varepsilon_0$ and $k < 1/(2\varepsilon)$, any $(1-\varepsilon)$-approximation algorithm for the maximum $k$-coverage problem must have sensitivity $\Omega(k)$, even when the instance admits a solution that covers all elements with $k$ sets.
\end{theorem}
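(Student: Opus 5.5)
We reduce from Max E3SAT on satisfiable instances and transfer the $\Omega(n)$ sensitivity bound. By \Cref{thm:e3lin2-intro} and the standard gadget reduction from Max E3LIN2 to Max E3SAT, there is a constant $\varepsilon_1>0$ with the following property: every $(1-\varepsilon_1)$-approximation algorithm for satisfiable Max E3SAT has sensitivity $\Omega(n)$ under deletion of a single clause. The gadget replaces each equation by the four clauses that forbid its odd-parity violating assignments. A small-$k$ hard family should come from instances with $n=\Theta(k)$ variables; since the bound is linear in $n$, we obtain $\Omega(k)$.

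The construction is the standard coverage encoding of SAT. The universe $U$ has one element $e_C$ per clause $C$, plus one element $u_i$ per variable $x_i$. For each variable and each literal value we create two sets:
\[
S_{i,b}=\{u_i\}\cup\{e_C : \text{setting } x_i=b \text{ satisfies } C\}, \quad b\in\{0,1\}.
\]
We set $k=n$. A satisfying assignment, picking one set per variable, covers all of $U$, so the instance is perfectly coverable.

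Next we decode an approximate coverage into an assignment. Suppose $\mathcal S$ of size $k$ covers at least $(1-\varepsilon)|U|$ elements. We need the $u_i$ elements to carry enough weight that the solution is forced to pick roughly one set per variable. To achieve this, we replicate each $u_i$ with a weight $w$ comparable to the average clause degree, so that $|U|=\Theta(n)$ still holds. Each variable $i$ for which both or neither of its sets is chosen loses at least $w$ elements. Hence all but $O(\varepsilon n)$ variables have exactly one set chosen, and these define an assignment. Ambiguous variables are set arbitrarily. The uncovered clause elements number at most $\varepsilon|U|$, so the decoded assignment satisfies a $(1-O(\varepsilon))$ fraction of clauses once we fix $\varepsilon<\varepsilon_0$ small relative to $\varepsilon_1$.

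For the sensitivity transfer, deleting a clause in the SAT instance corresponds to deleting the element $e_C$. The decoding map changes an assignment bit only when the set choice for that variable changes. So the assignment Hamming distance is at most $O(1)$ times the symmetric difference of the chosen families, and for randomized algorithms the EMD bound follows by coupling. An algorithm with sensitivity $s$ for coverage would therefore give Max E3SAT sensitivity $O(s)$, and hence $s=\Omega(n)=\Omega(k)$.

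The main obstacle is reconciling parameters. The condition $k<1/(2\varepsilon)$ suggests the hard instances have $n=\Theta(k)$ variables and that $\varepsilon$ is taken down to the order $1/k$, a finer scale than a fixed constant $\varepsilon_1$. I would handle this by using the constant-$\varepsilon$ hard family from \Cref{thm:e3lin2-intro} at size $n\approx \min(k,\Theta(1))$-scaled, padding with dummy variables and elements as needed. I would also check that the $u_i$ weighting keeps the decoding loss linear in $\varepsilon$ without inflating $|U|$ beyond $O(n)$.
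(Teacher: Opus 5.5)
\emph{Correct core, but a real gap in covering every $k$.}

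Your reduction is the paper's block gadget $R_k^+$ with singleton blocks. The $w$ copies of $u_i$ are the guards $G_i$, and $S_{i,b}$ is $S_{i,\alpha}$ with $\abs{V_i}=1$; even the weights agree up to a constant, since the paper's $M=m/k$ becomes $m/n$ at $k=n$. What differs is the analysis. The paper forces exactly one set per block pointwise via $M>\varepsilon\abs{U^+}$, and this is exactly where $k<1/(2\varepsilon)$ comes from. You instead allow ambiguous variables and pay for them on average, which is sound after two repairs.

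\textbf{Repair 1.} Your claim that every variable with both or neither set chosen loses $w$ elements is false for doubly chosen variables. The correct count: $\abs{\mathcal S}\le n$ forces the number $a$ of doubly chosen variables to be at most the number $z$ of unchosen ones. Since $wz$ is at most the number of uncovered elements, $\mathbb{E}[a+z]\le 2\varepsilon\abs{U}/w$.

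\textbf{Repair 2.} Few uncovered clause elements does not mean few unsatisfied clauses. A doubly chosen variable, set arbitrarily, can falsify clauses covered only through its discarded set. You need bounded occurrence (at most $B$ clauses per variable, true for the LTC family) to charge this loss at $B\,\mathbb{E}[a]$.

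Two smaller points. Keep $w$ a fixed constant, so that deleting $e_C$ from the coverage instance of $\Phi$ gives the coverage instance of $\Phi-C$. And use the family-level form of the lower bound (the proof of \Cref{thm:main-intro} only needs accuracy on the instances $\mathcal I_C(m)$), because the SAT algorithm obtained from $\mathcal B$ via your reduction and decoder is only guaranteed accurate on bounded-occurrence formulas. With these fixes you get $\Omega(n)$ for a constant $\varepsilon$ and under in-expectation guarantees. That is more than the pointwise guard argument delivers.

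\textbf{The gap: the coupling $k=n$.} The statement must hold for every $k$, so you need a hard satisfiable instance with $\Theta(k)$ variables for each $k$. The paper only supplies an infinite family, and the bound in \Cref{thm:main-intro} is vacuous at small block length. Padding does not repair a size mismatch. Dummy variable gadgets increase $k$ and $\abs{U}$ but not the sensitivity, so a hard instance with $n_0\ll k$ variables gives only $\Omega(n_0)$. An instance with more than $k$ variables is unusable, because you need one set per variable. You could close this by proving that hard instances exist at every scale up to constant factors and treating small $k$ separately, but the proposal does not do this.

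Your diagnosis of the obstacle is also off. The condition $k<1/(2\varepsilon)$ costs your construction nothing, since shrinking $\varepsilon$ only shrinks the class of algorithms.

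\textbf{The missing idea} is the paper's grouping of the $n$ variables into $k$ blocks of size $n/k$, with one set per block assignment. The decoder is then $n/(2k)$-Lipschitz (\Cref{lem:decoder-lip}), so a single sufficiently large hard instance yields $cn/(n/(2k))=2ck$ for every $k$. Your averaging analysis carries over to blocks: in expectation there are $O(\varepsilon k)$ ambiguous blocks, each costing $O(n/k)$ clauses, and the decoder is $n/k$-Lipschitz. Combined with the grouping, it would even remove the restriction $k<1/(2\varepsilon)$.
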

Observe that any algorithm for the maximum $k$-coverage problem has sensitivity at most $k$, since its output contains at most $k$ sets. 
Thus \Cref{thm:coverage-intro} shows that even permitting a slight approximation does not improve over this trivial upper bound.
We also note that, combining the prior $\Omega(n^\delta)$ lower bound for Max E3SAT~\cite{fleming2026sensitivity} with the same reduction yields only $\Omega(k / n^{1-\delta})$, which is negligible.
Achieving a linear lower bound is therefore essential.

\paragraph{Boolean-Function Consequences.}
The LTC construction behind \Cref{thm:e3lin2-intro} also gives a Boolean-function barrier for satisfiable Max E3SAT. Under the standard reduction, each message $m \in \{0,1\}^k$ indexes a formula $\Phi(m)$ in an explicit satisfiable family, and flipping one message bit changes only $O(1)$ clauses.
Composing a randomized approximation algorithm with the map $m \mapsto \Phi(m)$ and then fixing its internal randomness turns each output bit into a Boolean function on $\{0,1\}^k$. The \emph{influence} of an output bit is the expected number of message coordinates whose flip changes that bit, equivalently, the expected number of adjacent formulas $\Phi(m)$ and $\Phi(m^{\oplus i})$ on which that output bit differs. Our sensitivity lower bound therefore implies that, averaged over the internal randomness, linearly many output coordinates have influence $\Omega(k)$:

\begin{theorem}[Informal version of Corollary~\ref*{cor:sat-deterministic-cumulative-influence}]
There is an explicit bounded-occurrence family of satisfiable Max E3SAT formulas on
$n=\Theta(k)$ variables such that, for every such randomized algorithm, averaged over the
internal randomness, linearly many output bits have influence $\Omega(k)$. 
\end{theorem}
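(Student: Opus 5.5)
The plan is to derive this statement from \Cref{thm:e3lin2-intro} by composing with a standard gadget reduction from E3LIN2 to E3SAT, and then converting the worst-case EMD sensitivity bound into an averaged statement about influences. First, instantiate the family. Take the $c^3$-LTC $C\subseteq\mathbb F_2^n$ of dimension $k=\Theta(n)$ with generator $G$. For a message $m\in\{0,1\}^k$, consider the satisfiable E3LIN2 instance whose constraints are the tester's checks $x_a+x_b+x_c=0$ (shared by all $m$), together with a family of equations that pin the assignment to the codeword $Gm$. The simplest way to pin it is to add constraints $x_{i}+x_{j}+x_{l}=(Gm)_i+(Gm)_j+(Gm)_l$ on tester triples, so that $m$ enters only through the right-hand sides. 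Then apply the standard 4-clause gadget for each equation to obtain $\Phi(m)$. The variable degree is bounded, $n=\Theta(k)$, and every $\Phi(m)$ is satisfied by $Gm$. The key point to check is that flipping $m_i$ changes only the right-hand sides of constraints touching $\mathrm{supp}(G e_i)$. To make this $O(1)$, I would choose a basis of $C$ with bounded-weight generators (if available), or alternatively index the family by which constraint is perturbed. That substitution is the main technical obstacle.

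Second, extract sensitivity from the proof of \Cref{thm:main-intro}. As I expect it to go, that proof considers two satisfiable instances differing in $O(\theta)$ constraints. Their near-optimal outputs must lie near distinct codewords at distance at least $\delta n$, so the outputs differ in $\Omega(n)$ coordinates. Summing over a path of single-constraint deletions, some step has EMD $\Omega(n/\theta)$. For the averaged version, I will instead run the argument for every edge $(m,m^{\oplus i})$ of the hypercube. Any $(1-\varepsilon)$-approximate output on $\Phi(m)$ must be within relative distance $O(\varepsilon/\kappa)$ of $Gm$, because the violated fraction is at least $\kappa\cdot\mathrm{dist}$. Hence on every edge the outputs are close to $Gm$ and $Gm^{\oplus i}$ respectively. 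Since these codewords differ in $|Ge_i|\ge\delta n$ coordinates, the outputs differ in at least $\delta n-O(\varepsilon n/\kappa)=\Omega(n)$ bits with high probability over the seed. By coupling, this also holds with the seed fixed.

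Third, convert this to influence. For a fixed seed $\rho$, let $f_j^\rho(m)$ be output bit $j$ of the algorithm run on $\Phi(m)$. Summing over all edges gives
\[
\sum_j \mathrm{Inf}(f_j^\rho)=\mathbb E_m\sum_i \#\{j: f_j^\rho(m)\ne f_j^\rho(m^{\oplus i})\}\ge k\cdot\Omega(n).
\]
Averaging over $\rho$ preserves this bound. Each $\mathrm{Inf}\le k$, so a Markov/averaging argument shows that at least $\Omega(n)$ coordinates $j$ have $\mathbb E_\rho\,\mathrm{Inf}(f_j^\rho)=\Omega(k)$, which is the claimed statement. Finally, because success is only with high probability, I will discard the $O(\varepsilon)$-measure of bad $(m,\rho)$ pairs, which costs only a constant-factor loss.
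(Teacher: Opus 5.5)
There are two genuine gaps: your hard family does not encode $m$ with $O(1)$-local edges, and the per-edge separation you rely on is false.

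\textbf{The family.} As you define it, the family does not depend on $m$ at all. Every codeword satisfies every tester check, so for a tester triple $\{i,j,l\}$ the right-hand side $(Gm)_i+(Gm)_j+(Gm)_l$ is identically $0$. Your ``pinning'' equations therefore just duplicate the tests. Your fallback, a basis of bounded-weight generators, cannot exist either: every nonzero codeword, in particular $Ge_i$, has weight at least $\delta n$. So flipping $m_i$ would perturb $\Omega(n)$ pinning equations, not $O(1)$.

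The missing idea is the systematic encoding (\Cref{lem:lin-sys} and \Cref{def:CSP-LTC}). Permute coordinates so the message occupies the first $k$ positions, and treat those positions as constants rather than variables. Substitute them into the tester's own equations. Then flipping $m_i$ changes only the right-hand sides of the at most $\theta$ tests that query coordinate $i$. This is what makes each hypercube edge $O(1)$-local after the four-clause gadget.

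\textbf{Per-edge separation.} Your Step 2 asserts a separation on every edge, and this is false. Strong testability only says a near-optimal output is close to \emph{some} codeword $f(\sigma(w))$ with $\sigma(w)$ near $m$. It does not say the output is close to $f(m)$ itself. Indeed, because the edges are local, the satisfying assignment for $m^{\oplus i}$ violates at most $\theta=O(1)$ of the $\Theta(n)$ constraints of $\Phi(m)$. So an algorithm may output essentially the same assignment on both endpoints of any given edge and still meet the approximation guarantee.

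Only an edge-averaged bound can hold, and proving it is the core of the paper's argument:
\begin{itemize}
\item \Cref{thm:closest-message-tv} uses convexity of $\mathcal D_\varepsilon$ and symmetrization over coordinate and symbol permutations. It shows that the decoded-message laws of adjacent messages have average total variation at least $1-2\varepsilon/(r\kappa)$.
\item \Cref{lem:emd-expectation} lifts this to an edge-averaged EMD bound of $\Omega(n)$.
\end{itemize}
Your proposal has no substitute for this step.

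\textbf{What survives.} Your Step 3 essentially matches \Cref{cor:sat-deterministic-cumulative-influence,cor:sat-many-influential-coordinates}: same-seed coupling, summing over edges, then Markov using $\mathrm{Inf}\le k$. It needs only the averaged bound, so it goes through once Step 2 is replaced. You should also record the routine transfer: a $(1-\eta)$-approximation on the gadget formulas yields a $(1-4\eta)$-approximation on the E3LIN2 instances, since each violated equation still satisfies three of its four clauses.
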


Friedgut's junta theorem and Hatami's pseudo-junta theorem~\cite{Friedgut98,Hatami12} show that small total
influence forces strong structural simplification. Our theorem therefore implies that linearly many
output bits stay a constant distance away from all small juntas and pseudo-juntas
(\Cref{cor:sat-anti-junta}). It also implies that linearly many output
coordinates have linear decision-tree depth (\Cref{cor:sat-dt-depth}). Likewise, average-sensitivity bounds of Boppana and Rossman~\cite{Boppana97,Rossman18} yield lower bounds for
randomized polynomial-size constant-depth circuits (that is, randomized $AC^0$) and randomized
bounded-depth formulas (\Cref{cor:sat-witness-circuit-lb}).

Unlike previous lower bounds for random $k$-SAT, including the bounded-depth \emph{decision} lower bound of Gamarnik, Mossel, and Zadik and the low-degree-polynomial \emph{search} lower bound of Bresler and Huang \cite{GamarnikMosselZadik2023,BreslerHuang2021}, these lower bounds apply to an explicit satisfiable Max E3SAT family and rule out randomized $AC^0$ circuits and bounded-depth formulas for producing near-satisfying assignments even when the algorithm may output any of many near-optimal assignments.

\paragraph{Locality Lower Bounds for the Non-signaling Model.}

Our sensitivity bounds remain strong even in the non-signaling model, which strictly generalizes
classical $\mathsf{LOCAL}$, quantum-$\mathsf{LOCAL}$, and bounded-dependence models: satisfiable
Max E3LIN2 still requires locality $\Omega(\log n)$, and bipartite Max Cut requires locality
$\Omega(\log(1/\varepsilon))$.

In the \emph{non-signaling model}~\cite{akbari2025online,coiteux2024no,arfaoui2014can,gavoille2009can}, given a graph $G=(V,E)$, we can produce an arbitrary output distribution as long as it does not violate the non-signaling principle: for any set of vertices $S \subseteq V$, modifying the structure of the input graph at more than a distance $t$ from $S$ does not affect the output distribution of $S$.
The parameter $t$ is called the \emph{locality} of the model.
This model is stronger than any ``physical'' synchronous distributed computing model, and in particular, it contains classical $\mathsf{LOCAL}$~\cite{linial1992locality}, quantum-$\mathsf{LOCAL}$~\cite{gavoille2009can}, and the bounded-dependence model~\cite{holroyd2017finitary}.
Consequently, any locality lower bound in the non-signaling model applies automatically to these models. 

The following observation from~\cite{fleming2026sensitivity} connects locality to sensitivity.
Suppose we generate a distribution in the non-signaling model for a graph problem with locality $t$.
Sampling from this distribution and outputting the result yields an algorithm whose sensitivity is $O(\Delta^t)$, where $\Delta$ denotes the maximum degree, because deleting an edge can affect the marginal distributions of at most $\Delta^t$ vertices.
Because the CSP instance produced by the $c^3$-LTC~\cite{dinur2022locally,panteleev2022asymptotically} has constant degree, \Cref{thm:e3lin2-intro} translates into the following lower bound:
\begin{theorem}\label{thm:max-e3lin2-non-signaling}
    There exists $\varepsilon > 0$ such that any $(1-\varepsilon)$-approximate non-signaling distribution for Max E3LIN2 must have locality $\Omega(\log n)$, even on satisfiable instances.
\end{theorem}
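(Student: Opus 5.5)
We will prove the contrapositive. Suppose there is a $(1-\varepsilon)$-approximate non-signaling distribution for Max E3LIN2 with locality $t$. We turn it into a randomized algorithm with small sensitivity and then invoke \Cref{thm:e3lin2-intro}.

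\emph{Step 1: fix the instance family.} Use the satisfiable family from the proof of \Cref{thm:e3lin2-intro}. These are the CSP instances $\mathsf{MaxCSP}(C)$ from the $c^3$-LTC of~\cite{dinur2022locally,panteleev2022asymptotically}, together with the instances obtained from them by deleting one constraint. View each instance as its constraint graph: the bipartite variable--constraint incidence graph, on which the non-signaling model is defined. Because the tester has constant query complexity and constant variable degree $\theta$, this graph has maximum degree $\Delta=O(1)$. Deleting one constraint only removes edges, so the degree bound still holds for the neighbouring instance.

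\emph{Step 2: locality gives low sensitivity.} Take a non-signaling distribution $\mathcal D_G$ with locality $t$. Let $G$ and $G'$ differ by the deletion of a single constraint node $c$. Let $S$ be the set of variables at distance more than $t$ from $c$. By the non-signaling principle, the marginals of $\mathcal D_G$ and $\mathcal D_{G'}$ on $S$ coincide. We couple the two distributions in two stages. First, draw the $S$-part once from this common marginal. Then draw the remaining coordinates from the respective conditional distributions. Under this coupling, the outputs can differ only on variables within distance $t$ of $c$. There are at most $\sum_{i\le t}\Delta^i=O(\Delta^t)$ such variables. Hence the EMD between the output distributions is $O(\Delta^t)$, so the sampling algorithm has sensitivity $O(\Delta^t)$. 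The approximation guarantee (in expectation, or with the required probability) is inherited directly, since the algorithm outputs a sample of $\mathcal D_G$.

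\emph{Step 3: compare with the lower bound.} By \Cref{thm:e3lin2-intro} (applied with its randomized/EMD sensitivity notion), there is an $\varepsilon>0$ such that any such algorithm must have sensitivity $\Omega(n)$ on these satisfiable instances. Hence $\Delta^t=\Omega(n)$, so $t=\Omega(\log n/\log\Delta)=\Omega(\log n)$ because $\Delta=O(1)$.

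The main obstacle is the coupling in Step 2. It must be valid for the randomized sensitivity measure used in \Cref{thm:e3lin2-intro}, namely EMD between output distributions. It must also apply on exactly the instance family where that lower bound holds: it has to cover the edited instance $G'$, and the distance must be measured in the same graph on which locality is defined. The two-stage coupling handles this, provided the non-signaling principle is applied to the full set $S$ at once rather than vertex by vertex. Once this is checked, the rest is arithmetic.
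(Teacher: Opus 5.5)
Your proposal is correct and follows the paper's route: the paper cites the Fleming--Yoshida observation that sampling a locality-$t$ non-signaling distribution yields an algorithm of sensitivity $O(\Delta^t)$, notes that the $c^3$-LTC instances have constant degree, and combines this with \Cref{thm:e3lin2-intro}. Your two-stage coupling through the joint marginal on the far set spells out the step the paper states informally, and one harmless imprecision is that the witnessing instances in the proof of \Cref{thm:main-intro} come from deleting up to $\theta$ constraints rather than one, which your argument still covers since deletions only lower degrees.
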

Marwaha and Hadfield~\cite{marwaha2022bounds} showed that achieving a $(1-\varepsilon)$-approximation on satisfiable instances already requires $\Omega(\log n)$ rounds in quantum-$\mathsf{LOCAL}$, and Fleming and Yoshida~\cite{fleming2026sensitivity} proved the same locality bound in the non-signaling model for general instances. Our theorem unifies these results by establishing that even satisfiable Max E3LIN2 instances demand $\Omega(\log n)$ locality for non-signaling distributions.

Similarly, \Cref{thm:max-cut-intro} yields the following lower bound for the non-signaling model:
\begin{theorem}\label{thm:max-cut-non-signaling}
    There exists a constant $\varepsilon_0>0$ such that for every even $n$ and every $0<\varepsilon\le \varepsilon_0/\sqrt{n}$, any $(1-\varepsilon)$-approximate non-signaling distribution for the maximum cut problem on bipartite graphs must have locality $\Omega(\log (1/\varepsilon))$.
\end{theorem}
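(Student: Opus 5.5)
This follows from \Cref{thm:max-cut-intro} via the sensitivity-to-locality observation of~\cite{fleming2026sensitivity}, argued by contradiction. Suppose there were a $(1-\varepsilon)$-approximate non-signaling distribution for Max Cut on bipartite graphs with locality $t$. Sampling from it gives a randomized $(1-\varepsilon)$-approximation algorithm whose output distribution we compare across edge deletions in EMD. Delete one edge $e=\{u,v\}$. By the non-signaling principle, the marginal distribution on any vertex set $S$ at distance more than $t$ from $\{u,v\}$ does not change. So only vertices in the radius-$t$ balls around $u$ and $v$ can have their marginals altered.

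To turn this into an EMD bound, I would couple the two output distributions. Let $B$ be the union of the two radius-$t$ balls. Take the (unchanged) joint marginal on $V\setminus B$, then sample the labels on $B$ conditionally under each distribution. The two outputs then differ only on $B$, so the EMD is at most $|B|\le 2\Delta^{t+1}$. This is exactly the $O(\Delta^t)$ sensitivity bound quoted above, and I would simply invoke it.

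The main obstacle is that \Cref{thm:max-cut-intro} as stated does not control the maximum degree $\Delta$ of the hard bipartite instances. I would therefore open the repetition-code construction behind it. The tester for the repetition code should be instantiated with a constant-degree expander: the Max Cut instance is the bipartite double cover of a constant-degree expander on $n/2$ vertices, or equivalently a bipartite expander. Its variable degree $\theta$ is then constant, and the detection probability $\kappa$ is constant by edge expansion. I would check that the $\Omega(1/\varepsilon)$ bound, valid in the regime $\varepsilon\le\varepsilon_0/\sqrt n$, is obtained on graphs of degree $O(1)$. Here $\theta$ enters the bound of \Cref{thm:main-intro} only as a constant factor, while the relevant distance and rate parameters are handled inside the proof of \Cref{thm:max-cut-intro}.

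With $\Delta=O(1)$ established, the conclusion is a short calculation. Combining $2\Delta^{t+1}\ge \Omega(1/\varepsilon)$ with $\Delta=O(1)$ gives $t\ge \log_\Delta(\Omega(1/\varepsilon))-1=\Omega(\log(1/\varepsilon))$. This uses $1/\varepsilon\ge\sqrt n/\varepsilon_0$, which is large, so the additive constants are absorbed.
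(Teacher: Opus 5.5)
Your proposal is correct and is essentially the paper's argument: the paper also obtains this theorem by combining the $\Omega(1/\varepsilon)$ sensitivity bound of \Cref{thm:max-cut-intro} with the $O(\Delta^{t})$ sensitivity of sampling from a locality-$t$ non-signaling distribution, using that the hard instances have maximum degree $d=O(1)$ (item~4 of \Cref{thm:const-degree}). Your explicit coupling through the unchanged joint marginal on $V\setminus B$ is a rigorous version of the paper's one-line remark, and one small correction does not affect the argument: the hard graph is not a single expander on $n$ vertices but $k=\lceil n/t\rceil$ disjoint copies (one per message coordinate) of a $d$-regular bipartite expander on $t=\Theta(1/\varepsilon)$ vertices, and only its degree bound is used here.
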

Hence setting $\varepsilon = n^{-\delta}$ for any small constant $\delta > 0$ forces locality $\Omega(\log n)$.
The previously best locality barrier on bipartite graphs is that any \emph{one-round} quantum-$\mathsf{LOCAL}$ algorithm achieves at most a $(\frac{1}{2}+\frac{1}{\sqrt{2d}})$-approximation on $d$-regular bipartite graphs~\cite{barak2022classical}.
No stronger locality lower bound was known that rules out $(1-\varepsilon)$-approximations on bipartite inputs, even for $\mathsf{LOCAL}$.  

On general graphs, there exist $(1-\varepsilon)$-approximation algorithms in $\mathsf{LOCAL}$ (and even $\mathsf{CONGEST}$) that run in $O(\mathrm{poly}(1/\varepsilon)\cdot \log n)$ rounds~\cite{censor2017fast}.
By contrast, Fleming and Yoshida~\cite{fleming2026sensitivity} showed an $\Omega(\log n)$ locality lower bound in the non-signaling model for achieving $(1-\varepsilon)$ approximation for some $\varepsilon > 0$.

\subsection{Proof Overview}
We first sketch the argument underlying \Cref{thm:main-intro} and then outline how it yields \Cref{thm:e3lin2-intro,thm:max-cut-intro}.

\paragraph{An analogy between optimization problems and systematic LTCs.}
We exploit a tight parallel between optimization problems and systematic LTCs, where \emph{systematic} means that the encoder copies the $k$ message symbols verbatim into designated coordinates, inducing a canonical input/output split.  
We write $x \sim x'$ for neighboring instances.

Consider an optimization problem with input space $\mathcal I$, output space $\mathcal O$, and evaluation map $e : \mathcal I \times \mathcal O \to [0,1]$.  
A deterministic $(1-\varepsilon)$-approximation algorithm $\mathcal A : \mathcal I \to \mathcal O$ satisfies $e(x,\mathcal A(x)) \in [1-\varepsilon,1]$ for every $x$.  Attaching the input to the output, $\tilde{\mathcal A}(x) = (x,\mathcal A(x))$, yields the diagram
\[
\mathcal I \xrightarrow{\;\tilde{\mathcal A}\;} \mathcal I \times \mathcal O \xrightarrow{\;e\;} [1-\varepsilon,1].
\]
The sensitivity of $\mathcal A$ is the maximum, over neighboring inputs $x\sim x'$, of the distance between $\tilde{\mathcal A}(x)$ and $\tilde{\mathcal A}(x')$ inside $\mathcal I\times\mathcal O$.
To derive a lower bound we therefore search for neighboring inputs that force any $(1-\varepsilon)$-approximation algorithm to output points that are well separated for at least two inputs in $\mathcal I\times\mathcal O$, mirroring the coding-theoretic requirement that distinct messages map to distant codewords.

Now replace the algorithm by a systematic LTC $C$ with an encoder $f$.
Let $n$ be the block length, $k$ be the message length, $\delta$ be the relative distance, and $T$ be the tester with detection probability $\kappa$.
Then, define $e(x) = \Pr{T \text{ accepts } x}$. 
Every codeword $f(m)$ for a message $m$ satisfies $e(m, f(m)) = 1$, so we obtain the analogous diagram
\[
\Sigma^k \xrightarrow{\;f\;} \Sigma^k \times \Sigma^{n-k} \xrightarrow{\;e\;} \set{1}.
\]
Because the code has relative distance $\delta$, distinct messages map to codewords whose relative Hamming distance is at least $\delta$, mirroring the separation requirement for optimization algorithms.

To reason about approximate outputs we invoke strong testability.
Think of $f$ as an exact algorithm that always attains the optimum $1$ of $e$.
If another encoder $\mathcal B$ satisfies $e(m,\mathcal B(m)) \ge 1-\varepsilon$, strong testability guarantees that $\tilde{\mathcal B}(m)$ lies within distance $\varepsilon/\kappa$ of some codeword.
Consequently we get the diagram
\[
\Sigma^k \xrightarrow{\;\tilde{\mathcal B}\;} \Sigma^k \times \Sigma^{n-k} \xrightarrow{\;e\;} [1-\varepsilon,1].
\]
The distance between the codewords of neighboring messages $m\sim m'$ therefore forces a comparable separation between $\tilde{\mathcal B}(m)$ and $\tilde{\mathcal B}(m')$, yielding the desired sensitivity lower bound.

The systematic nature of the code is crucial: it embeds the message directly into the codeword, so the tester $T$ enforces consistency between input and output rather than merely checking membership in the code.

\paragraph{Any approximate encoder has high sensitivity.}
With this analogy in place we outline how to prove the sensitivity lower bound for approximation algorithms on the induced CSP, thereby establishing \Cref{thm:main-intro}.
Fix a randomized $(1-\varepsilon)$-approximation algorithm $\mathcal A$ for $\mathsf{MaxCSP}(C)$.
Given a message $m$, we construct a CSP instance $\mathcal I_C(m)$ that is satisfiable exactly by the codeword for $m$, and strong testability ensures that the output of $\tilde{\mathcal A}$ on $\mathcal I_C(m)$ lies within Hamming distance $\varepsilon/\kappa$ of that codeword.
Hence we may view $\tilde{\mathcal A}$ as an approximate encoder that attempts to extend $m$.
To force large sensitivity we must find neighboring messages $m\sim m'$ whose outputs fall into balls centred at different codewords.
To this end we introduce the nearest-message decoder $\sigma:\Sigma^n \to \Sigma^k$ and show that for some $m\sim m'$ the values $\sigma\circ\tilde{\mathcal A}(\mathcal I_C(m))$ and $\sigma\circ \tilde{\mathcal A}(\mathcal I_C(m'))$ differ with constant probability; by the code’s relative distance this forces $\tilde{\mathcal A}(\mathcal I_C(m))$ and $\tilde{\mathcal A}(\mathcal I_C(m'))$ to have large earth mover's distance.

However, it is not enough to focus on a single pair $m \sim m'$: the encoder $\tilde{\mathcal A}$ might always output the codeword for $m$, achieving zero sensitivity while staying nearly optimal for $m$ and $m'$.
Instead we average over all such pairs.
For any approximate encoder $\tilde{\mathcal A}$ we define a functional $L$ that averages the earth mover's distance between $\tilde{\mathcal A}(\mathcal I_C(m))$ and $\tilde{\mathcal A}(\mathcal I_C(m'))$ across all $m\sim m'$.
This functional lower bounds the worst-case sensitivity and, more strongly, the average sensitivity under uniformly random constraint deletions.

The difficulty is that $\mathcal A$ can otherwise be arbitrary.
Yet $L$ is convex, so we may symmetrize $\tilde{\mathcal A}$ by averaging over permutations of message coordinates and alphabet symbols without increasing $L$.
Composing the symmetrized encoder $\bar{\mathcal A}$ with the nearest-message decoder $\sigma$ produces, for each message $m$, a distribution $\nu_m^*$ over $\Sigma^k$ whose probabilities depend only on the Hamming distance to $m$, making the analysis tractable.
We then show that some neighboring pair $m\sim m'$ forces $\sigma\circ\bar{\mathcal A}(\mathcal I_C(m))$ and $\sigma\circ\bar{\mathcal A}(\mathcal I_C(m'))$ to differ with constant probability, yielding the desired sensitivity lower bound.

\paragraph{Sensitivity lower bounds from explicit LTCs}
Given the connection between sensitivity lower bounds and LTCs, we instantiate our framework with two explicit families.
First, we adapt the left-right Cayley complex (LRCC) codes of Dinur, Evra, Livne, Lubotzky, and Mozes~\cite{dinur2022locally} into systematic LTCs whose induced CSP is Max E3LIN2.
This transformation preserves their rate, distance, and detection probability, letting us transfer the $c^3$-LTC guarantees and derive $\Omega(n)$ average-sensitivity lower bounds for every $(1-\varepsilon)$-approximation algorithm, thereby proving \Cref{thm:e3lin2-intro}.

Second, we analyze signed repetition codes that duplicate each message coordinate alongside its complement.
We partition the coordinates into $A$ (original copies) and $B$ (flipped copies), embed a bipartite expander on $A\cup B$, and use a two-query tester that checks random edges for disagreement, thereby realizing a Max Cut instance.
The replication factor controls both the rate and distance, and feeding this construction into our framework establishes \Cref{thm:max-cut-intro}.

\subsection{Related Work and Discussion}

\paragraph{Max Cut.}
It is natural to ask whether one can beat the sensitivity lower bound of \Cref{thm:max-cut-intro} by applying \Cref{thm:main-intro} to a different two-query LTC. 
However, in such a tester, every constraint enforces that a pair of queried coordinates differs, so codewords correspond exactly to 2-colorings of the tester's constraint graph. The number of codewords, and hence the rate, is dictated by the number of connected components, while the minimum distance is governed by the sizes of those components. 
Consequently, an LTC obtained this way cannot simultaneously achieve high rate and large distance, preventing stronger lower bounds via this route.

\paragraph{Lower Bounds for Other CSPs.}
For a linear code $C$, $\mathsf{MaxCSP}(C)$ necessarily collapses to a system of linear equations.
Attempts to extend our lower bounds via gadget reductions such as pp-definitions or pp-interpretations~\cite{bulatov2008recent,barto2017polymorphisms} stall here: starting from linear equations one cannot reach target languages whose expressive power is incomparable with (or weaker than) linear systems, so reductions to templates like 2SAT or Horn-SAT are ruled out.
Another idea is to replace LTCs with dictatorship tests from the PCP literature~\cite{haastad2001some,khot2007optimal}, but it remains unclear whether such tests can be adapted to yield sensitivity lower bounds.

\paragraph{Machine Unlearning}
\emph{(Exact) machine unlearning} is often formalized by requiring that, after deleting a data item, the algorithm’s output matches the model one would obtain by retraining on the pruned dataset~\cite{guo2020certified,bourtoule2021machine,nguyen2025survey}. 
Viewed through this lens, our sensitivity lower bounds translate directly into recourse lower bounds for unlearning. 
In particular, \Cref{thm:coverage-intro} shows that for the maximum $k$-coverage problem—a canonical abstraction for summarization and selection tasks—the deletion of a single universe element forces any exact-unlearning mechanism that maintains a $(1-\varepsilon)$-approximation to modify $\Omega(k)$ members of the selected set family to match the retrained solution. 
Hence exact unlearning cannot rely on small, local ``patches'': even a single deletion necessitates wholesale restructuring of the learned summary.

\paragraph{Sensitivity on Planar Graphs}
The constraint graph of a 2CSP instance has one vertex per variable and an edge for each constraint connecting the variables it involves.
Our results establish lower bounds for the class of constraint graphs induced by the LTCs we employ; for instance, the repetition code we use for Max Cut yields an expander constraint graph.

A natural question is whether algorithms with low (average) sensitivity exist for non-expanding graphs such as planar ones.
When the planar graph also has bounded degree, the partition oracle~\cite{hassidim2009local,kumar2022random} gives a $(1-\varepsilon)$-approximate Max Cut algorithm with average sensitivity $\mathrm{poly}(1/\varepsilon)$ under vertex deletions.
However, it remains unclear whether an analogous guarantee holds for general planar graphs, where the maximum degree may be unbounded.
For the lower-bound side, an LTC whose Tanner graph is planar and has rate $\geq 5/8$ must have distance $O(1/n)$~\cite{srinivasan2012codes}.
Consequently, applying \Cref{thm:main-intro} to planar graphs would require LTCs of smaller rate.

\paragraph{Overlap Gap Property.}

The \emph{overlap gap property (OGP)} is a geometric/topological description of near-optimal solution spaces: in many random optimization problems, the set of $(1-\varepsilon)$-optimal solutions splits into clusters whose pairwise overlaps avoid a nontrivial interval. 
OGP has emerged as a unifying explanation for barriers for local algorithms in high-dimensional random structures~\cite{gamarnik2014limits,chen2019suboptimality} and is surveyed by Gamarnik, who frames it as a topological obstruction to optimization~\cite{gamarnik2021overlap}.
Classical OGP analyzes overlaps between near-optimal solutions on the same instance (or across randomly coupled instances), whereas our sensitivity framework compares outputs of algorithms on two almost-identical instances.
At a high level the direction is similar: both aim to rule out stable/local algorithms, i.e., OGP via geometric gaps in the near-optimal solution space, and our results via explicit worst-case instances where tiny input perturbations force large output changes.

\paragraph{Circuit-complexity lower bounds for SAT and related search problems.}
For unrestricted circuits, lower bounds for \emph{search}-SAT are closely connected to lower bounds for \emph{decision}-SAT: Pich proved implications between formalized search and decision lower-bound principles \cite{Pich2015}. 
Because such general lower bounds remain out of reach, known assumption-free results mostly concern restricted models, on both the decision and the search side.
On the decision side, Gamarnik, Mossel, and Zadik proved that depth-$d$ bounded-depth circuits for random $2$-SAT near the satisfiability threshold require size $\exp(n^{\Theta(1/d)})$~\cite{GamarnikMosselZadik2023}.
On the search side, Bresler and Huang showed that low-degree polynomial algorithms cannot find satisfying assignments for random $k$-SAT at clause densities of order $2^k \log k / k$, excluding a broad family of algorithms, including local methods and several message-passing procedures~\cite{BreslerHuang2021}.
In a different restricted model, G\"o\"os, Kamath, Robere, and Sokolov proved exponential monotone circuit lower bounds for a monotone variant of XOR-SAT~\cite{GoosKamathRobereSokolov2019}.
By contrast, our results give a lower bound for randomized bounded-depth circuits and formulas for producing near-satisfying assignments on an explicit satisfiable Max E3SAT family, rather than an average-case decision lower bound for random formulas or a lower bound in a different restricted search model.

\section{Preliminaries}\label{sec:prelim}

For a positive integer $n$, let $[n]$ denote the set $\{1,2,\ldots,n\}$. 
For two strings $x,y \in \Sigma^n$, we write $x \sim y$ when they differ on exactly one coordinate.
Also, we write $d_{\mathrm H}(x,y) := \#\{i \in [n] : x_i \neq y_i\}$ for the \emph{(unnormalized) Hamming distance}
and $\delta_{\mathrm H}(x,y):= d_{\mathrm H}(x,y)/n$ for its normalized/relative version.
Note that we use the relative Hamming distance only in the context of error-correcting codes. 
To avoid tedious transforms between normalized Hamming distances with different denominators, 
we always use the block length $n$ of the code as the denominator when dealing with codes.

\subsection{Sensitivity}

Let $(\Omega,d)$ be a finite metric space and let $\mu,\nu$ be distributions over $\Omega$. 
The \emph{earth mover's distance} between $\mu$ and $\nu$ is
\[
    \EMD_d(\mu,\nu)
    := \inf_{\pi\in\Pi(\mu,\nu)} \sum_{x,y\in\Omega} d(x,y)\,\pi(x,y),
\]
where $\Pi(\mu,\nu)$ denotes the set of all couplings of $\mu$ and $\nu$.
When the metric is clear from context we write simply $\EMD(\mu,\nu)$.
If $d$ is the discrete metric on $\Omega$, i.e., $d(x, y) = 1$ when $x\neq y$ and $0$ otherwise, then $\EMD_d(\mu,\nu) = \TV(\mu,\nu)$, the total variation distance between $\mu$ and $\nu$.

Consider a problem in which an input instance $I$ consists of elements $e_1,\ldots,e_n$ and the output space is a metric space $(\Omega,d)$.
For an element $e \in I$, let $I - e$ denote the instance obtained by deleting $e$.
The \emph{sensitivity} of a (possibly randomized) algorithm $\mathcal A$ at $I$ is
\begin{align*}
  \mathsf{Sens}(\mathcal A,I) &:= \max_{i \in [n]} \EMD_d(\mathcal{A}(I), \mathcal{A}(I-e_i)), 
\end{align*}
where we identify $\mathcal A(I)$ with its output distribution.

\subsection{Locally Testable Codes}

For $w\in\Sigma^n$ and a subset $C\subseteq \Sigma^n$, we set $\delta_{\mathrm H}(w, C) := \min_{c\in C} \delta_{\mathrm H}(w, c)$.

\begin{definition}[Error-correcting code]\label{def:ecc}
    Let $n,k\in\mathbb N$ with $n\ge k$ and let $\Sigma$ be a finite alphabet.
    A \emph{(block) error-correcting code} of length $n$ and message length $k$ over $\Sigma$ is a pair $(C,f)$, where $C\subseteq \Sigma^n$ and the encoder $f:\Sigma^k\to C$ is a bijection.
    The \emph{rate} of the code is $r=k/n$, and its \emph{(relative) distance} is $\min_{c,c'\in C, c\neq c'} \delta_{\mathrm H}(c,c')$.
    We often identify the code with the subset $C$ when the encoder is clear from context.
\end{definition}

\begin{definition}[Linear code]
    A code $(C,f)$ over $\mathbb F_p$ is \emph{linear} if $C$ is a linear subspace of $\mathbb F_p^n$, equivalently if $f$ is a linear map.
    The (absolute) distance of a linear code equals the minimum Hamming weight of a nonzero codeword.
\end{definition}

\begin{definition}[Locally testable codes]\label{def:ELTC}
    Let $r, \delta, \kappa\in \mathbb R_{\ge 0}$, $q, \theta\in \mathbb N$.
    We say an error-correcting code $(C,f)$ is a \emph{locally testable code} (LTC) with detection probability $\kappa$, $q$ queries, and variable degree $\theta$ if the following properties hold:
    \begin{itemize}
        \itemsep=0pt
        \item There exists a collection $\mathcal S$ of $q$-subsets of $[n]$ such that each subset $S$ is 
              associated with a set $V_S\subseteq \Sigma^S$ called \emph{allowed local views},
              satisfying $c|_{S}\in V_S$ for all $c \in C$, and
                \[
                \forall w\in \Sigma^n:\; \Pr[S\sim \mathcal S]{w|_S\notin V_S} \ge \kappa\cdot \delta_{\mathrm H}(w, C),
                \]
                where $S \sim \mathcal S$ denotes the uniform distribution over $\mathcal S$.
        \item For any $i\in [n]$, we have $\sum_{S\in\mathcal S} \1{i\in S}\le\theta$.
    \end{itemize}
    When we need to emphasize the local tests, we denote the LTC by $(C, f, \mathcal S, \mathcal V)$, where $\mathcal V = \{V_S\}_{S \in \mathcal S}$.
\end{definition}

\begin{definition}[Systematic code]\label{def:EECC}
    Let $n,k\in\mathbb{N}$ with $n\ge k$ and let $\Sigma$ be a finite alphabet. 
    Let 
    $\pi:\Sigma^n\to\Sigma^k$ be the projection onto the first $k$ coordinates.
    A \emph{systematic code} of length $n$ and message length $k$ over $\Sigma$ is an error-correcting code $(C,f)$
    with an encoder $f:\Sigma^k\to\Sigma^n$ such that $\pi\circ f=\mathrm{id}_{\Sigma^k}$.
    Thus the message $x$ appears verbatim as the first $k$ entries of the codeword $f(x)$.
    We also say that a code $C\subseteq\Sigma^n$ is \emph{systematic} if it admits such an encoder.
\end{definition}
The next lemma shows that any linear locally testable code becomes systematic after permuting coordinates.
The proof is deferred to \Cref{sec:lin-sys}.
\begin{lemma}\label{lem:lin-sys}
    If $C \subseteq \mathbb F_p^n$ is a linear LTC, then there exists a permutation matrix $\Pi$ such that $\Pi C$ is a systematic LTC with the same parameters.
\end{lemma}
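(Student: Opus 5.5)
The plan is to use the standard fact that a linear code of dimension $k$ has an information set: $k$ coordinates on which the projection is a bijection. Permuting those coordinates to the front gives a systematic code. All LTC parameters are then carried along by relabeling the tests.

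\textbf{Step 1: choose the information set.} Let $k=\dim C$ and fix a generator matrix $G\in\mathbb F_p^{n\times k}$ whose columns form a basis of $C$, so that $C=\{Gm : m\in\mathbb F_p^k\}$. Since $G$ has rank $k$, some $k$ rows of $G$ are linearly independent; call their index set $I\subseteq[n]$, with $|I|=k$. Let $G_I$ denote the $k\times k$ submatrix on those rows, which is invertible.

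\textbf{Step 2: permute and define the encoder.} Choose a permutation $\tau$ of $[n]$ sending $I$ onto $[k]$, and let $\Pi$ be the corresponding permutation matrix. Define the encoder $f(m):=\Pi G G_I^{-1}m$. This map is linear and injective, and its image is exactly $\Pi C$. Moreover, the first $k$ coordinates of $f(m)$ are $G_IG_I^{-1}m=m$. Hence $\pi\circ f=\mathrm{id}$, so $(\Pi C,f)$ is a linear systematic code in the sense of \Cref{def:EECC}.

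\textbf{Step 3: transfer the LTC structure.} Take the tests $\mathcal S'=\{\tau(S):S\in\mathcal S\}$ and let each $V_{\tau(S)}$ be the image of $V_S$ under the induced relabeling of coordinates. The only property to verify is that $\Pi$ is a Hamming isometry, which gives the following.
\begin{itemize}
\item \textbf{Code parameters.} Block length, dimension (hence rate) and relative distance are unchanged.
\item \textbf{Completeness.} Codewords of $\Pi C$ pass every test.
\item \textbf{Soundness.} For every $w$ we have $\delta_{\mathrm H}(w,\Pi C)=\delta_{\mathrm H}(\Pi^{-1}w,C)$. Also, $w|_{\tau(S)}\in V_{\tau(S)}$ holds iff $(\Pi^{-1}w)|_S\in V_S$. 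So the rejection probability of $w$ equals that of $\Pi^{-1}w$ under the original tester, and is therefore at least $\kappa\,\delta_{\mathrm H}(w,\Pi C)$.
\item \textbf{Query count and variable degree.} These are preserved, because $\tau$ is a bijection on coordinates and on tests.
\end{itemize}

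The main (mild) obstacle is bookkeeping. One must be precise that the coordinate permutation acts consistently on words, test sets and local views, so that the detection-probability inequality transfers verbatim. The information-set existence in Step 1 is just rank over a field.
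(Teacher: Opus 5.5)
Your proof is correct and is the same as the paper's. You pick $k$ linearly independent rows of a generator matrix, permute them to the front, and right-multiply by the inverse of that $k\times k$ block; the paper writes this as $\widetilde M(P\widetilde M)^{-1}$. Your Step 3 just spells out the relabeling of tests and local views that the paper leaves implicit in ``with the same parameters.''

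One tiny bookkeeping point: define $G_I$ as the first $k$ rows of $\Pi G$, in the order $\tau$ places them, or choose $\tau$ order-preserving on $I$. Otherwise the first $k$ coordinates of $f(m)$ could be a permutation of $m$ rather than $m$ itself.
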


\subsection{Constraint Satisfaction Problems}

Let $V$ be a finite set of variables and let $\Sigma$ be a finite alphabet.
An instance of a \emph{constraint satisfaction problem (CSP)} over $\Sigma$ is a tuple $(V, \Sigma, \mathcal C)$, where $\mathcal C$ is a finite collection of constraints.
Each constraint is given by a pair $(S, R)$ with $S \subseteq V$ its scope and $R \subseteq \Sigma^{S}$ the set of allowed local assignments.
An assignment $\alpha \in \Sigma^{V}$ \emph{satisfies} $(S, R)$ if $\alpha|_{S} \in R$, and it is a solution to the instance when it satisfies every constraint in $\mathcal C$.
The associated \emph{maximum constraint satisfaction problem (Max CSP)} asks for an assignment $\alpha \in \Sigma^{V}$ that maximizes the number of satisfied constraints.

For a CSP instance $\mathcal I = (V, \Sigma, \mathcal C)$ and a constraint $c\in \mathcal C$, we write $\mathcal I - c$ for the instance obtained by removing $c$ from $\mathcal C$.

\section{Sensitivity Lower Bounds via LTCs}\label{sec:lower_bound}
The goal of this section is to prove \Cref{thm:main-intro}.
We organize the argument in four stages.
First, \Cref{sec:csp-ltc} shows how a systematic LTC induces a CSP whose constraints mirror the tester.
Next, in \Cref{sec:struct-tools} we discuss the nearest-message decoder, which recovers the planted message from near-optimal assignments, and accompanying stability bounds.
Building on these tools, \Cref{sec:tv-distance} symmetrizes the decoder's output distributions and proves a large total-variation gap between neighboring messages.
Finally, \Cref{sec:worst-case} converts this gap into an earth mover's distance lower bound for worst-case sensitivity, and the same argument also yields an average-sensitivity analogue, giving \Cref{thm:main-intro}.
\subsection{CSP Induced by an LTC}\label{sec:csp-ltc}
For $m \in \Sigma^k$ and a subset $U \subseteq \Sigma^{n-k}$, we write
$m \oplus U := \{(m, u) : u \in U\} \subseteq \Sigma^{n}$.
\begin{definition}\label{def:CSP-LTC}
    For a systematic LTC $(C, f, \mathcal S, \mathcal V = \{V_S\}_{S \in \mathcal S})$ and a message $m\in \Sigma^k$, we construct a CSP instance $\mathcal I_{C,f,\mathcal S, \mathcal V}(m) = (V,\Sigma,\mathcal C)$ as follows:
    \begin{itemize}
    \itemsep=0pt
    \item $V := \{x_i\}_{i=k+1}^n$.
    \item 
    For every $S\in\mathcal S$, let $S' := S \cap \{k+1,\dots,n\}$ and define
    \[
    R_S := \left\{\alpha \in \Sigma^{S'} \mid \exists \beta \in V_S: \beta|_{[k]} = m \wedge \beta|_{S'} = \alpha\right\}.
    \]
    The constraint corresponding to $S$ is $(S', R_S)$.
    \item $\mathcal C := \{(S', R_S) : S\in\mathcal S\}$.
    \end{itemize}
    For brevity, we write $\mathcal I_{C}(m)$ for $\mathcal I_{C,f,\mathcal S, \mathcal V}(m)$.
\end{definition}

Intuitively, the message coordinates of a systematic codeword are fixed once we choose $m$, so the only freedom left is the parity part. 
The tester’s local view on $S$ allows exactly the projections of valid codewords onto $S$. 
Intersecting $V_S$ with $m \oplus \Sigma^{n-k}$ therefore isolates those local patterns that agree with $m$ on the message positions, and projecting them onto the parity indices $S'$ yields the admissible assignments for the remaining variables. 
The family $\mathcal C$ is just the collection of these derived local constraints, so any assignment to the parity coordinates satisfies the CSP instance precisely when it extends $m$ to a full codeword in $C$.

\begin{definition}
    For a systematic LTC $C$, we define an optimization problem $\MaxCSP(C)$ as follows.
    The input is a message $m\in \Sigma^k$, and the goal is to compute an assignment in $\Sigma^V$ that maximizes the number of satisfied constraints in $\mathcal I_C(m)$.
\end{definition}
Notice that all constraints in $\mathcal I_C(m)$ can be satisfied by assigning the code bits of $f(m)$
to the variables.

Throughout this section we reuse the convention from the introduction: for any approximate algorithm $\mathcal A$ for $\MaxCSP(C)$ we define the randomized map
\[
    \tilde{\mathcal A}(\mathcal I_C(m)) := (m, \mathcal A(\mathcal I_C(m)))
\]
which simply pairs the input message with the assignment returned by $\mathcal A$.

\subsection{Nearest-Message Decoder}\label{sec:struct-tools}
We collect here the decoder and stability statements that connect near-optimal assignments to nearby codewords.

\begin{definition}
    Let $\sigma: \Sigma^n \to \Sigma^k$ be the \emph{nearest-message decoder} for $(C,f)$, that is,
    for every $w\in\Sigma^n$ the value $\sigma(w)$ belongs to
    \[
        \argmin_{m'\in\Sigma^k} \delta_{\mathrm H}\left(f(m'), w\right).
    \]
    We break ties arbitrarily.
\end{definition}

\begin{lemma}\label{lem:m-approximation}
    Let $C$ be a systematic LTC with rate $r = k/n$ and let $\mathcal A$ be an $(1-\varepsilon)$-approximation randomized algorithm for $\MaxCSP(C)$.
    Then for every $m\in\Sigma^k$,
    \[
        \mathbb E\bigl[\delta_{\mathrm H}(\sigma(\tilde{\mathcal A}(\mathcal I_C(m))), m)\bigr] \le \frac{\varepsilon}{\kappa}.
    \]
\end{lemma}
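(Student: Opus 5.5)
The plan is to chain three facts: the CSP $\mathcal I_C(m)$ faithfully simulates the tester on the word $w := \tilde{\mathcal A}(\mathcal I_C(m)) = (m, \mathcal A(\mathcal I_C(m)))$; strong testability converts a small rejection probability into closeness to $C$; and systematicity converts closeness of codewords into closeness of messages.

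\textbf{Step 1 (constraints mirror the tester).} Fix any realization $u \in \Sigma^{n-k}$ of $\mathcal A(\mathcal I_C(m))$ and let $w = (m,u)$. For each $S\in\mathcal S$, I will check that the constraint $(S',R_S)$ is satisfied by $u$ if and only if $w|_S \in V_S$. Here $\beta|_{[k]} = m$ in \Cref{def:CSP-LTC} is read as agreement with $m$ on $S\cap[k]$.
\begin{itemize}
\item If $w|_S\in V_S$, then $\beta := w|_S$ witnesses $u|_{S'}\in R_S$.
\item Conversely, if $u|_{S'}\in R_S$, the witness $\beta\in V_S$ agrees with $m$ on $S\cap[k]$ and with $u$ on $S'$. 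Hence $\beta = w|_S$.
\end{itemize}
Therefore the fraction of unsatisfied constraints of $\mathcal I_C(m)$ under $u$ equals $\Pr[S\sim\mathcal S]{w|_S\notin V_S}$.

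\textbf{Step 2 (use approximation and testability).} Since $f(m)$ restricted to the parity coordinates satisfies every constraint, the optimum is $|\mathcal C|$. A $(1-\varepsilon)$-approximation therefore leaves at most an $\varepsilon$ fraction of constraints unsatisfied. This holds pointwise, or in expectation over the algorithm's randomness if the guarantee is only in expectation; both suffice because we only need an expectation bound. By the LTC property,
\[
\delta_{\mathrm H}(w,C) \le \frac{1}{\kappa}\Pr[S\sim\mathcal S]{w|_S\notin V_S},
\]
so $\mathbb E[\delta_{\mathrm H}(w,C)]\le \varepsilon/\kappa$.

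\textbf{Step 3 (from codeword distance to message distance).} Let $m^* := \sigma(w)$. Then $f(m^*)$ is a nearest codeword to $w$, so $\delta_{\mathrm H}(f(m^*),w)=\delta_{\mathrm H}(w,C)$. Because the code is systematic, the first $k$ coordinates of $f(m^*)$ are $m^*$ and those of $w$ are $m$. Counting disagreements only on these coordinates gives $d_{\mathrm H}(m^*,m)\le d_{\mathrm H}(f(m^*),w)$. With the paper's convention of normalizing by the block length $n$, this yields $\delta_{\mathrm H}(\sigma(w),m)\le \delta_{\mathrm H}(w,C)$. Taking expectations and combining with Step 2 finishes the proof.

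The only step needing real care is Step 1. One must confirm that the projected relations $R_S$ neither lose nor add satisfying patterns relative to $V_S$ once the message part is fixed. The rest is a direct application of the definitions.
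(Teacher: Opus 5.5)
Your proof is correct and follows the same route as the paper. You bound the expected fraction of violated constraints by $\varepsilon$, convert this to $\mathbb E[\delta_{\mathrm H}(w,C)]\le\varepsilon/\kappa$ via strong testability, and then compare $\sigma(w)$ with $m$ on the first $k$ (systematic) coordinates. Your Step 1 additionally verifies that $u$ satisfies $(S',R_S)$ exactly when $w|_S\in V_S$, which spells out an identification the paper uses without comment.
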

\begin{proof}
    Because $\mathcal A$ achieves approximation ratio $1-\varepsilon$, the expected fraction of satisfied constraints on the instance $\mathcal I_C(m)$ is at least $1-\varepsilon$, whence
    \[
        \mathbb E\left[\frac{\#\text{ of unsatisfied constraints in }\mathcal I_C(m)}{\lvert\mathcal C\rvert}\right] \le \varepsilon.
    \]
    For every word $w\in\Sigma^n$, \Cref{def:ELTC} guarantees that the rejection probability is at least $\kappa\cdot\delta_{\mathrm H}(w,C)$.
    Applying this to the random output $\tilde{\mathcal A}(\mathcal I_C(m))$ gives
    \[
        \frac{\#\text{ of unsatisfied constraints in }\mathcal I_C(m)}{\lvert\mathcal C\rvert} \ge \kappa\cdot\delta_{\mathrm H}(\tilde{\mathcal A}(\mathcal I_C(m)), C).
    \]
    Taking expectations yields $\mathbb E[\delta_{\mathrm H}(\tilde{\mathcal A}(\mathcal I_C(m)), C)] \le \varepsilon/\kappa$.

    Let $w := \tilde{\mathcal A}(\mathcal I_C(m))$ and let $c^\star := f(\sigma(w))$ be the codeword selected by the nearest-message decoder.
    By definition, $\delta_{\mathrm H}(w, C) = \delta_{\mathrm H}(w, c^\star)$.
    Because $f$ is systematic, the first $k$ entries of $w$ coincide with the message $m$, while the first $k$ entries of $c^\star$ coincide with $\sigma(w)$.
    Consequently every coordinate where $\sigma(w)$ and $m$ differ must also be a coordinate where $w$ and $c^\star$ differ.
    We therefore have
    \[
        \delta_{\mathrm H}(\sigma(w), m) \le \delta_{\mathrm H}(w, c^\star) =\delta_{\mathrm H}(w, C).
    \]
    Taking expectations over the randomness of $\tilde{\mathcal A}$ completes the proof.
\end{proof}

\subsection{Total Variation Gap for Adjacent Messages}\label{sec:tv-distance}

In this section, we consider symmetrizing the decoder's output distributions for near-optimal assignments
and show that the output distributions are well separated in total variation in expectation over neighboring messages.

Let
\begin{align*}
    &\mathcal D_\varepsilon :=\\
    &\Bigl\{
        (\nu_m)_{m\in\Sigma^k}
        \;\Big|\;
        \exists\,(1-\varepsilon)\text{-approx algorithm }\mathcal A
        \text{ for }\MaxCSP(C) 
        \text{ with }\nu_m = \mathrm{law}(\sigma\circ \tilde{\mathcal A}(\mathcal I_C(m)))\text{ for all }m
    \Bigr\}.
\end{align*}
Here $\mathrm{law}(X)$ denotes the distribution of a random variable $X$ and $\sigma$ is the nearest-message decoder.

\begin{lemma}\label{lem:Deps-convex}
    The set $\mathcal D_\varepsilon$ is convex.
\end{lemma}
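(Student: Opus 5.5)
To show $\mathcal D_\varepsilon$ is convex, I will realize any convex combination of two of its members by a randomized mixture of the two witnessing algorithms. Take $(\nu_m)_m,(\nu'_m)_m\in\mathcal D_\varepsilon$ and $\lambda\in[0,1]$. Let $\mathcal A$ and $\mathcal A'$ be $(1-\varepsilon)$-approximation algorithms for $\MaxCSP(C)$ witnessing them, so $\nu_m=\mathrm{law}(\sigma\circ\tilde{\mathcal A}(\mathcal I_C(m)))$ and $\nu'_m=\mathrm{law}(\sigma\circ\tilde{\mathcal A'}(\mathcal I_C(m)))$ for all $m$.

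Define the randomized algorithm $\mathcal A_\lambda$ as follows. On input $\mathcal I_C(m)$, it flips an independent coin $b$ with $\Pr{b=1}=\lambda$, then runs $\mathcal A$ if $b=1$ and $\mathcal A'$ otherwise. The crucial point is that the coin does not depend on $m$. So for every $m$, the output law of $\mathcal A_\lambda$ is the mixture $\lambda\,\mathrm{law}(\mathcal A(\mathcal I_C(m)))+(1-\lambda)\,\mathrm{law}(\mathcal A'(\mathcal I_C(m)))$.

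The map $w\mapsto\sigma((m,w))$ is a fixed deterministic function for each $m$: the tie-breaking in $\sigma$ is fixed once and for all. Pushforward of a mixture under a fixed map is the mixture of the pushforwards, so
\[
\mathrm{law}(\sigma\circ\tilde{\mathcal A}_\lambda(\mathcal I_C(m)))=\lambda\nu_m+(1-\lambda)\nu'_m .
\]

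It remains to check that $\mathcal A_\lambda$ is a $(1-\varepsilon)$-approximation algorithm. I take the approximation guarantee in the sense used in \Cref{lem:m-approximation}: the expected fraction of satisfied constraints on each instance is at least $1-\varepsilon$. By linearity of expectation, the expected value of $\mathcal A_\lambda$ on $\mathcal I_C(m)$ is $\lambda$ times that of $\mathcal A$ plus $(1-\lambda)$ times that of $\mathcal A'$, which is at least $1-\varepsilon$.

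This is the only point requiring care. If approximation were instead demanded with probability one, the mixture would still work, since every sample comes from one of the two algorithms and so meets the guarantee. Either convention is therefore preserved, and $\mathcal D_\varepsilon$ is convex.
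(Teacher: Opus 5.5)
Your proposal is correct and matches the paper's proof: both run one witnessing algorithm with probability $\lambda$ and the other otherwise, note that the law of the decoded output is the corresponding mixture, and use linearity of expectation to keep the $(1-\varepsilon)$ guarantee. Your remarks on the fixed tie-breaking in $\sigma$ and on the probability-one convention are harmless refinements that the paper leaves implicit.
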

\begin{proof}
    Let $(\mu_m)$ and $(\nu_m)$ lie in $\mathcal D_\varepsilon$, witnessed respectively by randomized $(1-\varepsilon)$-approximation algorithms $\mathcal A_1$ and $\mathcal A_2$.
    For $\lambda\in[0,1]$, define an algorithm $\tilde{\mathcal A}_\lambda$ which, on input $m$, runs $\tilde{\mathcal A}_1(\mathcal I_C(m))$ with probability $\lambda$ and $\tilde{\mathcal A}_2(\mathcal I_C(m))$ otherwise, and returns the resulting assignment.
    The expected number of satisfied constraints produced by $\tilde{\mathcal A}_\lambda$ is the convex combination of those for $\tilde{\mathcal A}_1$ and $\tilde{\mathcal A}_2$, so $\tilde{\mathcal A}_\lambda$ is again a $(1-\varepsilon)$-approximation algorithm.
    Moreover,
    $\mathrm{law}(\sigma\circ \tilde{\mathcal A}_\lambda(\mathcal I_C(m))) = \lambda\,\mu_m + (1-\lambda)\,\nu_m$
    for every $m\in\Sigma^k$.
    Hence the family $\left(\lambda\,\mu_m + (1-\lambda)\,\nu_m\right)_{m\in\Sigma^k}$ belongs to $\mathcal D_\varepsilon$, establishing convexity.
\end{proof}

\begin{theorem}\label{thm:closest-message-tv}
    Let $C$ be a systematic LTC and $\mathcal A$ be a randomized $(1-\varepsilon)$-approximation algorithm for $\MaxCSP(C)$.
    Then we have
    \[
    \E[m\in \Sigma^k]{ \E[m'\sim m]{
    \TV\left( \sigma\circ \tilde{\mathcal A}(\mathcal I_C(m)), \sigma\circ \tilde{\mathcal A}(\mathcal I_C(m')) \right)
    }
    }
    \ge 1- \frac{|\Sigma|}{|\Sigma|-1} \cdot \frac{\varepsilon}{r \kappa}
    .
    \]
\end{theorem}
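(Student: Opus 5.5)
The plan is to reduce to a symmetric family of decoded distributions and then lower-bound the total-variation distance between neighbors by an explicit overlap estimate. Write $q := |\Sigma|$ and $\nu_m := \mathrm{law}(\sigma\circ\tilde{\mathcal A}(\mathcal I_C(m)))$. I will not try to stay inside $\mathcal D_\varepsilon$. Instead I work with the larger set $\mathcal D'$ of families $(\nu_m)_{m\in\Sigma^k}$ of distributions on $\Sigma^k$ satisfying the single constraint from \Cref{lem:m-approximation}. Since $\delta_{\mathrm H}$ is normalized by $n = k/r$, that constraint reads $\mathbb E_{x\sim\nu_m}[d_{\mathrm H}(x,m)] \le k\varepsilon/(r\kappa)$ for every $m$. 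By \Cref{lem:m-approximation}, $(\nu_m)\in\mathcal D'$. So it suffices to lower-bound
\[
L\bigl((\nu_m)\bigr) := \mathbb E_{m}\,\mathbb E_{m'\sim m}\,\TV(\nu_m,\nu_{m'})
\]
over all of $\mathcal D'$.

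\textbf{Step 1 (symmetrization).} Let $G$ be the group generated by permutations of the $k$ coordinates together with independent permutations of $\Sigma$ in each coordinate. Let $G$ act on a family by $(g\nu)_m := g_*\nu_{g^{-1}m}$. Three facts are needed:
\begin{itemize}
\item $G$ preserves Hamming distance, so $\mathcal D'$ is $G$-invariant.
\item $G$ preserves the neighbor relation $m\sim m'$ and the uniform measure on pairs, so $L$ is $G$-invariant.
\item $\mathcal D'$ is convex and $L$ is convex, because $\TV$ is jointly convex.
\end{itemize}
Hence the average $\nu^* := \frac{1}{|G|}\sum_{g} g\nu$ lies in $\mathcal D'$ and satisfies $L(\nu^*)\le L(\nu)$. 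The family $\nu^*$ is invariant under the stabilizer of each $m$. In particular, for every coordinate $i$, the marginal $\Pr_{x\sim\nu^*_m}[x_i\ne m_i] =: \rho$ is independent of $i$ and $m$. Moreover, conditioned on $x_i\ne m_i$, the symbol $x_i$ is uniform over the $q-1$ symbols different from $m_i$. Summing over coordinates gives $k\rho = \mathbb E[d_{\mathrm H}(x,m)]$, so $\rho\le \varepsilon/(r\kappa)$.

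\textbf{Step 2 (overlap bound).} Fix $m\sim m'$ differing in coordinate $i$. I use
\[
\TV(\nu^*_m,\nu^*_{m'}) = 1-\sum_x\min\bigl(\nu^*_m(x),\nu^*_{m'}(x)\bigr).
\]
Split the sum according to whether $x_i=m_i$. On $\{x_i\ne m_i\}$, bound the minimum by $\nu^*_m(x)$; this contributes at most $\rho$. On $\{x_i=m_i\}$, bound it by $\nu^*_{m'}(x)$; this contributes $\Pr_{\nu^*_{m'}}[x_i=m_i]$. Since $m_i\ne m'_i$, the symbol-uniformity from Step 1 gives $\Pr_{\nu^*_{m'}}[x_i=m_i]=\rho/(q-1)$. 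Therefore
\[
\TV(\nu^*_m,\nu^*_{m'}) \ge 1-\rho-\frac{\rho}{q-1} = 1-\frac{q}{q-1}\,\rho \ge 1-\frac{q}{q-1}\cdot\frac{\varepsilon}{r\kappa}.
\]
This holds for every neighboring pair, so it holds on average. Combined with $L(\nu)\ge L(\nu^*)$, this gives the theorem.

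\textbf{Main obstacle.} The delicate point is the symmetrization. One must check that $G$ acts compatibly on the index $m$ and on the output space, so that $L$ is invariant and the distance constraint is preserved. It is also essential to relax from $\mathcal D_\varepsilon$ (whose convexity is \Cref{lem:Deps-convex}) to $\mathcal D'$. A symmetrized family need not be realizable by an actual algorithm, because the code and its tester need not be symmetric under $G$. Only the distance constraint of \Cref{lem:m-approximation} survives this relaxation, and Step 2 uses nothing else. The rest is routine.
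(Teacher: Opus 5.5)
Your proof is correct. Its skeleton is the paper's: average over coordinate permutations and per-coordinate symbol permutations, use invariance and convexity of $L$ to get $L(\nu^*)\le L(\nu)$, then compare coordinate-$i$ marginals. Your overlap split is the paper's event bound $\TV(\nu^*_m,\nu^*_{m'})\ge \nu^*_m(x_i=m_i)-\nu^*_{m'}(x_i=m_i)\ge (1-\rho)-\rho/(q-1)$ in different notation.

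The one real difference is where the symmetrized family lives, and your version is the more careful one. The paper stays inside $\mathcal D_\varepsilon$, takes a minimizer of $L$ there, and asserts that $\nu^*\in\mathcal D_\varepsilon$ ``by \Cref{lem:Deps-convex}''. Convexity only gives this if each permuted family $\{\nu^{\pi,\tau}_m\}$ is itself realized by some $(1-\varepsilon)$-approximation algorithm. The paper never shows this, and it need not hold, because the instances $\mathcal I_C(\cdot)$ and the tester are not invariant under these permutations. The paper then applies \Cref{lem:m-approximation} to $\nu^*$ as though it came from an algorithm.

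Your relaxation to $\mathcal D'$ avoids this at no cost. $\mathcal D'$ is cut out only by the bound of \Cref{lem:m-approximation}, so it is convex, and it is $G$-invariant because $d_{\mathrm H}(gx,m)=d_{\mathrm H}(x,g^{-1}m)$. That bound is the only property the rest of the proof uses anyway. Averaging the given family directly, instead of a minimizer, also removes any need to argue that a minimizer of $L$ exists.
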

\begin{proof}
    We define $L$ as a functional on $\mathcal D_\varepsilon$ by setting
    \begin{align*}
        L\left( \set{\nu_m}_{m\in \Sigma^k} \right)
        &:=
        \frac{1}{k(|\Sigma|-1)|\Sigma|^{k}}
        \sum_{\substack{m_1, m_2\in \Sigma^k \\ m_1 \sim m_2}}
        \TV\left( \nu_{m_1}, \nu_{m_2} \right)
        \\&=
    \E[m\in \Sigma^k]{ \E[m'\sim m]{
    \TV\left( \sigma\circ \tilde{\mathcal A}(\mathcal I_C(m)), \sigma\circ \tilde{\mathcal A}(\mathcal I_C(m')) \right)
    } }
    .
    \end{align*}
    that is, the average total variation distance between the distributions associated with neighboring messages.
    We want to find an element in $\mathcal D_\varepsilon$ that minimizes $L$.

    Suppose $\set{\nu_m}\in \mathcal D_\varepsilon$ is a minimizer of $L$.
    In the following, we construct another minimizer $\set{\nu^*_m}$ that is symmetric in the sense that $\nu^*_m(m')$ only depends on $\delta_{\mathrm H}(m, m')$.
    To do so, we define two kinds of permutations that act on $\mathcal D_\varepsilon$.
    For a finite set $S$ let $\mathrm{perm}(S)$ denote the collection of all permutations of $S$.
    We then have:
    \begin{enumerate}
        \item For any $\pi\in\mathrm{perm}([k])$, define $\pi(m) = (m_{\pi(1)}, \ldots, m_{\pi(k)})$. Then define $\set{\nu_m^\pi}$ by \[\nu_m^\pi(m') := \nu_{\pi(m)}(\pi(m')).\]
        \item For any $\tau = (\tau_1,\ldots,\tau_k)\in\mathrm{perm}(\Sigma)^k$, define $\tau(m) = (\tau_1(m_1), \ldots, \tau_k(m_k))$. Then define $\set{\nu_m^\tau}$ by \[\nu_m^\tau(m') := \nu_{\tau(m)}(\tau(m')).\]
    \end{enumerate}
    We can also combine the two kinds of permutations. For any $\pi\in\mathrm{perm}([k])$ and $\tau\in\mathrm{perm}(\Sigma)^k$, we define $\set{\nu_m^{\pi,\tau}}$ by \[\nu_m^{\pi,\tau}(m') := \nu_{\tau(\pi(m))}(\tau(\pi(m'))).\]
    Since both $\pi$ and $\tau$ are bijections that preserve the Hamming distance, we have
    \[
        L\left( \set{\nu_m^{\pi,\tau}} \right) =
        L\left( \set{\nu_m} \right).
    \]

    We define $\set{\nu^*_m}$ as the mixture of all $\set{\nu_m^{\pi,\tau}}$:
    \[ \nu^*_m(m') := \frac{1}{(|\Sigma|!)^kk!}\sum_{\tau\in\mathrm{perm}(\Sigma)^k}\sum_{\pi\in\mathrm{perm}([k])} \nu^{\pi,\tau}_{m}(m'). \]
    The mixture $\set{\nu^*_m}$ is in $\mathcal D_\varepsilon$ by \Cref{lem:Deps-convex}. Note that $L$ is convex since $\TV(\cdot, \cdot)$ is convex.
    Thus we have
    \[
     L\left(\set{\nu^*_m}\right) \le
    \frac{1}{(|\Sigma|!)^kk!}\sum_{\tau\in\mathrm{perm}(\Sigma)^k}\sum_{\pi\in\mathrm{perm}([k])}L\left(\set{ \nu^{\pi,\tau}_{m}}\right)
    =
    L\left(\set{\nu_m}\right).
    \]
    So $\set{\nu^*_m}$ is also a minimizer of $L$.
    We may fix two different characters $a_0, a_1\in\Sigma$, and define $\bar m = (a_0, \ldots, a_0)$.
    For any $m, m'\in \Sigma^k$, we can choose $\pi^*\in\mathrm{perm}([k])$ such that $\pi^*(m)$ and $\pi^*(m')$ are different at exactly the first $d_{\mathrm H}(m, m')$ coordinates, and $\tau^*\in\mathrm{perm}(\Sigma)^k$ such that
    \begin{align*}
        \forall 1\le i\le k: \tau^*_i(\pi^*(m)_i) = a_0,\quad
        \tau^*_i(\pi^*(m')_i) =
        \begin{cases}
            a_1, & 1\le i\le d_{\mathrm H}(m, m')\\
            a_0, & d_{\mathrm H}(m, m')< i\le k
        \end{cases}.
    \end{align*}
    Then we have
    \begin{align*}
       \nu^*_m(m') &= 
       \frac{1}{(|\Sigma|!)^kk!}\sum_{\tau\in\mathrm{perm}(\Sigma)^k}\sum_{\pi\in\mathrm{perm}([k])} \nu^{\pi,\tau}_{m}(m')
       \\&=
       \frac{1}{(|\Sigma|!)^kk!}\sum_{\tau\in\mathrm{perm}(\Sigma)^k}\sum_{\pi\in\mathrm{perm}([k])} \nu^{\pi,\tau}_{\tau^*(\pi^*(m))}(\tau^*(\pi^*(m')))
       \\&=
       \nu^*_{\tau^*(\pi^*(m))}(\tau^*(\pi^*(m')))
       \\&=
         \nu^*_{\bar m}\left((\underbrace{a_1, \ldots, a_1}_{d_{\mathrm H}(m, m')}, \underbrace{a_0, \ldots, a_0}_{k - d_{\mathrm H}(m, m')})\right).
    \end{align*}
    Thus $\nu^*_m$ is symmetric in the sense that for any $m, m'\in \Sigma^k$,
    $\nu^*_m(m')$ only depends on $\delta_{\mathrm H}(m, m')$.
    That is, there is a function $g:\{0, 1, \ldots, k\}\to [0, 1]$ such that \[\nu^*_m(m') = g\left(d_{\mathrm H}(m, m')\right).\]

    Now we can lower bound $L\left(\set{\nu^*_m}\right)$.
    Since $\set{\nu^*_m}$ is the distribution constructed from the output of a $(1-\varepsilon)$-approximation algorithm $\mathcal A$ by definition,
    we have \[\E{\delta_{\mathrm H}(\sigma(\tilde{\mathcal A}(\mathcal I_C(m))), m)}\le \frac{\varepsilon}{\kappa}\]
    by \Cref{lem:m-approximation}.

    Pick an arbitrary $m\in \Sigma^k$.
    Notice that $\Pr[m'\sim\nu^*_m]{m'_i = m_i}$ is independent of $i$ because $\nu^*_m$ depends on $m'$ only through its Hamming distance from $m$, and every coordinate plays the same role.
    So we have
    \begin{align*}
    \Pr[m'\sim\nu^*_m]{m'_1 \neq m_1} &= \frac1k\sum_{i\in [k]}
    \Pr[m'\sim\nu^*_m]{m'_i \neq m_i} \\&=
    \frac1k\cdot
    \E{d_{\mathrm H}(\sigma(\tilde{\mathcal A}(\mathcal I_C(m))), m)} 
    \\&=
    \frac{n}{k}\cdot
    \E{\delta_{\mathrm H}(\sigma(\tilde{\mathcal A}(\mathcal I_C(m))), m)} 
    \\&\le \frac{\varepsilon}{r\kappa}.
    \end{align*}
    This means
    \begin{equation}\label{eq:pr-lower-bound}
    \Pr[m'\sim\nu^*_m]{m'_1 = m_1} \ge 1 - \frac{\varepsilon}{r\kappa}.
    \end{equation}

    Similarly, picking a character $a\in\Sigma \setminus \{m_1\}$, we define $\hat m$ as the message $(a, m_2,\cdots, m_k)$, we have
    $\Pr[m'\sim\nu^*_{\hat m}]{m'_1 = a} \ge 1-\frac{\varepsilon}{r\kappa}$,
    which means
    \begin{equation}\label{eq:pr-upper-bound}
        (|\Sigma| - 1)\Pr[m'\sim\nu^*_{\hat m}]{m'_1 = m_1} = 
        \sum_{b\in\Sigma\setminus\{a\}} \Pr[m'\sim\nu^*_{\hat m}]{m'_1 = b}
        =
        1 - \Pr[m'\sim\nu^*_{\hat m}]{m'_1 = a}
        \le
        \frac{\varepsilon}{r\kappa},
    \end{equation}
    and thus
    \[\Pr[m'\sim\nu^*_{\hat m}]{m'_1 = m_1} \le \frac{1}{|\Sigma|-1}\frac{\varepsilon}{r\kappa}.\]

    Putting \Cref{eq:pr-lower-bound,eq:pr-upper-bound} together, we have
    \begin{align*}
        L\left(\set{\nu^*_m}\right) = \TV(\nu^*_m, \nu^*_{\hat m}) \ge
        \left|\Pr[m'\sim\nu^*_m]{m'_1 = m_1} - \Pr[m'\sim\nu^*_{\hat m}]{m'_1 = m_1}\right|
        \ge 1- \frac{|\Sigma|}{|\Sigma|-1} \cdot \frac{\varepsilon}{r \kappa},
    \end{align*}
    and the claim follows.
\end{proof}

\begin{lemma}\label{lem:emd-expectation}
    Let $C$ be a systematic LTC with detection probability $\kappa$ and $\mathcal A$ be a randomized $(1-\varepsilon)$-approximation algorithm for $\MaxCSP(C)$.
    Then we have
    \[
    \E[m\in \Sigma^k]{ \E[m'\sim m]{
    \EMD_{\delta_\mathrm H}\left( \mathcal A(\mathcal I_C(m)), \mathcal A(\mathcal I_C(m')) \right)
    }
    }
    \ge
    \delta\left( 1-\frac{|\Sigma|}{|\Sigma|-1}\frac{\varepsilon}{r \kappa} \right) - \frac{2\varepsilon}{\kappa}
    - \frac1n
    .
    \]
\end{lemma}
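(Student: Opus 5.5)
The plan is to pass from the total variation gap of \Cref{thm:closest-message-tv} to an earth mover's distance bound. The bridge is a coupling argument applied to the full words $\tilde{\mathcal A}(\mathcal I_C(m))$ in $\Sigma^n$, together with the triangle inequality through the decoded codewords.

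Fix neighboring messages $m\sim m'$ and an optimal coupling $\pi$ of $\mathcal A(\mathcal I_C(m))$ and $\mathcal A(\mathcal I_C(m'))$ with respect to $\delta_{\mathrm H}$ (normalized by $n$). Lift it to a coupling of $(w,w') := (\tilde{\mathcal A}(\mathcal I_C(m)),\tilde{\mathcal A}(\mathcal I_C(m')))$. The lifted words differ additionally only in the single message coordinate, so $\E[\pi]{\delta_{\mathrm H}(w,w')} \le \EMD_{\delta_{\mathrm H}}(\mathcal A(\mathcal I_C(m)),\mathcal A(\mathcal I_C(m'))) + 1/n$. This accounts for the $-1/n$ term.

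Under $\pi$, consider the decoded codewords $c=f(\sigma(w))$ and $c'=f(\sigma(w'))$. On the event $\sigma(w)\neq\sigma(w')$ we have $\delta_{\mathrm H}(c,c')\ge\delta$ by the distance of the code. The triangle inequality then gives
\[
\delta_{\mathrm H}(w,w') \ge \delta\cdot\1{\sigma(w)\neq\sigma(w')} - \delta_{\mathrm H}(w,c) - \delta_{\mathrm H}(w',c').
\]
Since $\delta_{\mathrm H}(w,c)=\delta_{\mathrm H}(w,C)$, the proof of \Cref{lem:m-approximation} gives $\E{\delta_{\mathrm H}(w,C)}\le\varepsilon/\kappa$, and the same holds for $w'$; these two terms produce the $-2\varepsilon/\kappa$. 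For the indicator term, any coupling satisfies $\Pr{\sigma(w)\ne\sigma(w')}\ge \TV(\sigma\circ\tilde{\mathcal A}(\mathcal I_C(m)),\sigma\circ\tilde{\mathcal A}(\mathcal I_C(m')))$, because the pushforward of a coupling is a coupling of the decoded laws.

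Taking expectations under $\pi$, then averaging over $m$ uniform and $m'\sim m$, and finally applying \Cref{thm:closest-message-tv} to the averaged TV term yields
\[
\delta\left(1-\frac{|\Sigma|}{|\Sigma|-1}\frac{\varepsilon}{r\kappa}\right)-\frac{2\varepsilon}{\kappa}-\frac1n.
\]
The only delicate point is bookkeeping. The EMD in the statement is over assignments to $V$ only, measured with the block-length denominator $n$ per the convention in the preliminaries, whereas the codeword comparison happens in $\Sigma^n$. This is exactly why the message coordinate costs $1/n$ and no more. I expect the remaining steps to be routine.
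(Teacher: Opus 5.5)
Your proposal is correct and follows essentially the same route as the paper. Both arguments combine three ingredients: the averaged total-variation gap from \Cref{thm:closest-message-tv}; the triangle inequality through the decoded codewords $f(\sigma(w))$ and $f(\sigma(w'))$, with the two $\varepsilon/\kappa$ error terms coming from strong testability; and a $1/n$ loss for the single flipped message coordinate. The one difference is the direction of the coupling argument. You start from an arbitrary (optimal) coupling of the algorithm's outputs and push it forward to the decoded messages, which is exactly what is needed to lower-bound the infimum defining $\EMD$. The paper instead lifts a coupling of the decoded laws to the outputs, so your write-up is, if anything, slightly cleaner on this point.
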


\begin{proof}
    By \Cref{thm:closest-message-tv}, we have
    \[
    \E[m\in \Sigma^k]{ \E[m'\sim m]{
        \TV\left( \sigma\circ\tilde{\mathcal A}(\mathcal I_C(m)), \sigma\circ \tilde{\mathcal A}(\mathcal I_C(m')) \right)
    }}
        \ge 1 - \frac{|\Sigma|}{|\Sigma| - 1} \cdot \frac{\varepsilon}{r \kappa}
    .
    \]
    Let $\nu_m$ and $\nu_{m'}$ denote the laws of $\sigma\circ\tilde{\mathcal A}(\mathcal I_C(m))$ and $\sigma\circ\tilde{\mathcal A}(\mathcal I_C(m'))$, respectively.
    Since total variation on $\Sigma^k$ coincides with $\EMD_{d_c}$, where $d_c$ is the discrete measure, every coupling $\pi_k\in\Pi(\nu_m,\nu_{m'})$ satisfies
    \[
        \E[(a,b)\sim\pi_k]{d_c(a,b)} \geq 1 - \frac{|\Sigma|}{|\Sigma|-1}\frac{\varepsilon}{r\kappa}.
    \]
    Let $\mu_m$ and $\mu_{m'}$ be the laws of $\tilde{\mathcal A}(\mathcal I_C(m))$ and $\tilde{\mathcal A}(\mathcal I_C(m'))$.
    By conditioning on the decoded message, we extend any such $\pi_k$ to a coupling $\Pi\in\Pi(\mu_m,\mu_{m'})$ on $\Sigma^n\times\Sigma^n$: draw $(a,b)\sim\pi_k$ and then sample $x\sim\mu_m$ and $y\sim\mu_{m'}$ conditioned on $\sigma(x)=a$ and $\sigma(y)=b$ (the conditionals exist because $\pi_k$ has marginals $\nu_m$ and $\nu_{m'}$).
    Under $\Pi$ we have $(\sigma(x),\sigma(y))=(a,b)$ almost surely.
    For any outputs $x,y\in\Sigma^n$ of $\tilde{\mathcal A}$, the triangle inequality gives
    \[
        \delta_{\mathrm H}(x,y) \ge \delta_{\mathrm H}(f\circ\sigma(x),f\circ\sigma(y)) - \delta_{\mathrm H}(x,f\circ\sigma(x)) - \delta_{\mathrm H}(y,f\circ\sigma(y)).
    \]
    Whenever $d_c(\sigma(x),\sigma(y))=1$, the first term is at least $\delta$, so averaging over $\Pi$ yields
    \[
        \E[(x,y)\sim\Pi]{\delta_{\mathrm H}(x,y)} \ge \delta\,\E[(a,b)\sim\pi_k]{d_c(a,b)} - \E[x\sim\mu_m]{\delta_{\mathrm H}(x,f\circ\sigma(x))} - \E[y\sim\mu_{m'}]{\delta_{\mathrm H}(y,f\circ\sigma(y))}.
    \]
    Lemma~\ref{lem:m-approximation} bounds each error term by $\varepsilon/\kappa$, giving
    \[
        \EMD_{\delta_{\mathrm H}}\bigl(\tilde{\mathcal A}(\mathcal I_C(m)),\tilde{\mathcal A}(\mathcal I_C(m'))\bigr)
        \ge \delta\cdot \TV\left( \tilde{\mathcal A}(\mathcal I_C(m)),\tilde{\mathcal A}(\mathcal I_C(m')) \right) - \frac{2\varepsilon}{\kappa}.
    \]
    By the definition of $\tilde{\mathcal A}$, the first $k$ coordinates of the outputs are fixed to $m$ and $m'$ respectively,
    so we have
    \begin{align*}
        \EMD_{\delta_{\mathrm H}}\bigl({\mathcal A}(\mathcal I_C(m)),{\mathcal A}(\mathcal I_C(m'))\bigr)
        &=
        \EMD_{\delta_{\mathrm H}}\bigl(\tilde{\mathcal A}(\mathcal I_C(m)),\tilde{\mathcal A}(\mathcal I_C(m'))\bigr)
        - \frac1n
        \\&\ge
        \delta\cdot \TV\left( \tilde{\mathcal A}(\mathcal I_C(m)),\tilde{\mathcal A}(\mathcal I_C(m')) \right) - \frac{2\varepsilon}{\kappa} - \frac1n
        .
    \end{align*}

    Averaging over $m$ and $m'$ completes the proof.
\end{proof}

\subsection{Worst-Case Sensitivity Bound}\label{sec:worst-case}

We now convert the total-variation gap into an earth-mover lower bound for neighboring messages.

\begin{theorem}[Worst-case sensitivity via LTC]\label{thm:sensitivity-opt-message}
    Let $(C,f, \mathcal S, \mathcal V)$ be a systematic LTC over $\Sigma$ with rate $r$, distance $\delta$, variable degree $\theta$, and detection probability $\kappa$.
    Suppose $\mathcal A$ is a $(1-\varepsilon)$-approximation algorithm for $\MaxCSP(C)$.
    Then there exist messages $m, m' \in \Sigma^k$ with $m \sim m'$ such that
    \[
        \EMD_{\delta_{\mathrm H}}\!\left(\mathcal A(\mathcal I_C(m)), \mathcal A(\mathcal I_C(m'))\right)
        \ge
        \delta\left(1 - \frac{|\Sigma|}{|\Sigma| - 1}\cdot\frac{\varepsilon}{r \kappa} \right) - \frac{2\varepsilon}{\kappa} - \frac1n
        .
    \]
\end{theorem}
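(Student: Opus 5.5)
This follows from \Cref{lem:emd-expectation} by an averaging argument. That lemma lower bounds the expectation, over a uniformly random $m\in\Sigma^k$ and a uniformly random neighbor $m'\sim m$, of
\[
\EMD_{\delta_{\mathrm H}}\bigl(\mathcal A(\mathcal I_C(m)),\mathcal A(\mathcal I_C(m'))\bigr)
\]
by exactly the quantity on the right-hand side of the theorem. Any expectation is at most the maximum of the random variable over its support. The support here is the finite set of ordered pairs $(m,m')$ with $m\sim m'$, which is nonempty whenever $k\ge 1$ and $|\Sigma|\ge 2$. Hence some pair $(m,m')$ with $m\sim m'$ attains an EMD value at least the average, which gives the claim.

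So the plan is short. First, I invoke \Cref{lem:emd-expectation} for the given $(1-\varepsilon)$-approximation algorithm $\mathcal A$; its hypotheses (systematic LTC with detection probability $\kappa$, rate $r$, distance $\delta$) are exactly those assumed here. Second, I observe that the iterated expectation $\E[m]{\E[m'\sim m]{\cdot}}$ is an average of finitely many nonnegative reals, one for each adjacent pair. Third, I pick $(m,m')$ maximizing the EMD among adjacent pairs and conclude.

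There is no real obstacle. The only point to check is that the averaging distribution is indeed supported on adjacent pairs $m\sim m'$ and not on arbitrary pairs, and this holds by the definition of the inner expectation. The variable degree $\theta$ plays no role at this stage. It enters only later, when converting from neighboring messages to deletion of a single constraint: changing one message symbol alters at most $\theta$ constraints of $\mathcal I_C(m)$, and this conversion is what produces the factor $1/(2\theta)$ in \Cref{thm:main-intro}.
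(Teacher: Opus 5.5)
Your proposal is correct and is the paper's argument: the theorem follows from \Cref{lem:emd-expectation}, because some adjacent pair $(m,m')$ must attain at least the average EMD. Your remark that $\theta$ plays no role here also matches the paper; it enters only in the proof of \Cref{thm:main-intro}, when passing from a message change to single-constraint deletions.
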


\begin{proof}
The claim follows directly from \Cref{lem:emd-expectation} 
since the expectation over $m$ and $m'\sim m$ implies the existence of some pair $m, m'$ achieving at least the expected value.
\end{proof}

\begin{proof}[Proof of \Cref{thm:main-intro}]
    Let $C \subseteq \mathbb F_p^n$ be a linear LTC with rate $r$, distance $\delta$, and detection probability $\kappa$, and let $\mathcal A$ be a $(1-\varepsilon)$-approximation algorithm for $\mathsf{MaxCSP}(C)$.
    By \Cref{lem:lin-sys}, after a fixed permutation of coordinates we may regard $C$ as a systematic LTC; we henceforth work in this permuted coordinate system and write $f$ for the associated systematic encoder.
    Consequently every instance $\mathcal I_C(m)$ is satisfiable, witnessed by the codeword $f(m)$.
    Applying \Cref{thm:sensitivity-opt-message} yields adjacent messages $m \sim m'$ satisfying
    \[
        \EMD_{\delta_{\mathrm H}}\!\left(\mathcal A(\mathcal I_C(m)), \mathcal A(\mathcal I_C(m'))\right)
        \ge
        \delta\left(1-\frac{p}{p-1}\frac{\varepsilon}{r\kappa}\right) - \frac{2\varepsilon}{\kappa} - \frac1n
        .
    \]
    The two instances differ only on the tester constraints that query the unique message coordinate where $m$ and $m'$ disagree; by \Cref{def:ELTC} their number is at most $\theta$.
    Enumerate these constraints in $\mathcal I_C(m)$ as $c_1,\ldots,c_t$ with $t\le \theta$, and let $c'_j$ denote the corresponding constraint in $\mathcal I_C(m')$ for each $j$.
    For $j=0,\ldots,t$ define the chains of satisfiable instances
    \[
        \mathcal I^{(0)} := \mathcal I_C(m),\qquad
        \mathcal I^{(j)} := \mathcal I^{(j-1)} - c_j,
    \]
    and
    \[
        \mathcal I'^{(0)} := \mathcal I_C(m'),\qquad
        \mathcal I'^{(j)} := \mathcal I'^{(j-1)} - c'_j.
    \]
    Removing these constraints leaves the same residual instance, so $\mathcal I^{(t)} = \mathcal I'^{(t)}$.
    Each $\mathcal I^{(j)}$ is satisfied by $f(m)$ and every $\mathcal I'^{(j)}$ is satisfied by $f(m')$.
    Applying the triangle inequality along the two chains,
    \begin{align*}
        \EMD_{\delta_{\mathrm H}}\!\left(\mathcal A(\mathcal I_C(m)), \mathcal A(\mathcal I_C(m'))\right)
        &\le
        \sum_{j=1}^t
        \EMD_{\delta_{\mathrm H}}\!\left(\mathcal A(\mathcal I^{(j-1)}), \mathcal A(\mathcal I^{(j)})\right)
        +
        \sum_{j=1}^t
        \EMD_{\delta_{\mathrm H}}\!\left(\mathcal A(\mathcal I'^{(j-1)}), \mathcal A(\mathcal I'^{(j)})\right).
    \end{align*}
    Combining this inequality with the lower bound above and recalling $t\le \theta$, there exists an index $j^\star$ such that
    \[
        \max\left\{
        \EMD_{\delta_{\mathrm H}}\!\left(\mathcal A(\mathcal I^{(j^\star-1)}), \mathcal A(\mathcal I^{(j^\star)})\right),
        \EMD_{\delta_{\mathrm H}}\!\left(\mathcal A(\mathcal I'^{(j^\star-1)}), \mathcal A(\mathcal I'^{(j^\star)})\right)
        \right\}
        \ge
        \frac{1}{2\theta}\left(
        \delta\left(1-\frac{p}{p-1}\frac{\varepsilon}{r\kappa}\right) - \frac{2\varepsilon}{\kappa} - \frac1n
        \right).
    \]
    In either case we obtain a satisfiable instance together with the deletion of a single constraint that witnesses the required sensitivity gap.
    Because $\delta_{\mathrm H} = d_{\mathrm H}/n$, multiplying by $n$ converts this to the unnormalized Hamming metric.
    Hence there exists a satisfiable instance witnessing worst-case sensitivity at least
    \[
        \frac{n}{2\theta}\left(
        \delta\left(1-\frac{p}{p-1}\frac{\varepsilon}{r\kappa}\right) - \frac{2\varepsilon}{\kappa}\right) - \frac{1}{2\theta},
    \]
    completing the proof.
\end{proof}

\section{Sensitivity Lower Bounds from Explicit LTCs}\label{sec:construction}
This section instantiates the abstract sensitivity framework developed in \Cref{sec:lower_bound}.
We plug explicit families of locally testable codes into \Cref{thm:main-intro} to derive sensitivity lower bounds for Max E3LIN2 and Max Cut.

\subsection{LRCC Codes and Max E3LIN2}\label{subsec:lrcc-max-e3lin2}

In this section, we present our construction of a systematic LTC whose local tests are parity constraints on constantly many variables.
Our construction is based on the \emph{left-right Cayley complex (LRCC) code} introduced by \cite{dinur2022locally},
with the concrete construction as in Sections 5 and 6 in \cite{dinur2022locally}.

\begin{lemma}[Theorem 1.1 in \cite{dinur2022locally}]\label{lem:thm-c3-code}
    For every $0<r<1$, there exist $\delta, \kappa > 0$, $q\in \mathbb Z_{\>0}$, 
    and an infinite family of LTCs $\set{C_n}$ with rate $r$,
    distance $\delta$, detection probability $\kappa$, and query complexity $q$.
\end{lemma}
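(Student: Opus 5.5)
Since this statement is imported verbatim from \cite{dinur2022locally}, the plan is to recall the structure of their proof rather than re-derive it. The argument is only sketched here: the deeper parts are cited, not reproved. The object is the left-right Cayley complex $X=\mathrm{Cay}^2(A,G,B)$ for a finite group $G$ with symmetric generating sets $A,B$ of constant size $d$. It is chosen so that both Cayley graphs $\mathrm{Cay}(G,A)$ and $\mathrm{Cay}(G,B)$ are Ramanujan-like spectral expanders, with second eigenvalue $\lambda \le O(\sqrt d)$, and it satisfies the total non-conjugacy (TNC) condition. Under TNC every square $\{g,ag,gb,agb\}$ is non-degenerate, and each vertex sees an $A\times B$ grid of squares.

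\textbf{Step 1: code and rate.} Fix linear base codes $C_A\subseteq\mathbb F_2^A$ and $C_B\subseteq\mathbb F_2^B$ with rates $r_A,r_B$ and relative distances $\delta_A,\delta_B$. Define $C$ as the set of words on the squares of $X$ whose local view at every vertex (an $A\times B$ matrix) lies in the tensor code $C_A\otimes C_B$. This is the same as requiring that rows lie in $C_A$ along $A$-edges and columns lie in $C_B$ along $B$-edges. Counting constraints gives
\[
\text{rate}\ \ge\ 2(r_A+r_B)-3.
\]
Taking $r_A,r_B$ close to $1$ then yields any target rate $r<1$, after optionally puncturing or padding to hit $r$ exactly.

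\textbf{Step 2: distance.} A nonzero codeword has support $S$ on squares. At each vertex $g$ touching $S$, the local view is a nonzero tensor codeword, so it has at least $\delta_A d$ nonzero rows and $\delta_B d$ nonzero columns. Apply the expander mixing lemma to the set of vertices touching $S$ in $\mathrm{Cay}(G,A)$ and $\mathrm{Cay}(G,B)$. Once $\delta_A,\delta_B$ are large compared to $\lambda/d$, this forces the vertex set to be a constant fraction of $G$, so $|S|\ge\delta|X(2)|$.

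\textbf{Step 3: local testability (main obstacle).} The tester picks a random vertex and checks membership of its local view in $C_A\otimes C_B$, using $q=d^2$ queries. To show that a word $w$ whose rejection probability is $\eta$ lies within distance $O(\eta)$ of $C$, I would argue in three parts.
\begin{itemize}
\item First, use robust testability of the tensor code $C_A\otimes C_B$, which holds whenever $C_A,C_B$ are chosen as random codes. This lets one decode each local view to a nearby tensor codeword.
\item Second, run an iterative correction: flip edges on which neighboring vertex decodings disagree. Expansion ensures that each round strictly decreases a potential function counting disagreements, which is proportional to the rejection probability.
\item Third, the process ends at a globally consistent assignment that is $O(\eta/\kappa_0)$-close to $w$.
\end{itemize}
The delicate point is setting parameters compatibly: $\lambda/d$ must be small relative to $\delta_A$, $\delta_B$ and the robustness constant, while $r_A,r_B$ stay near $1$. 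I would follow the choice in Sections 5 and 6 of \cite{dinur2022locally}, where random base codes simultaneously meet the rate, distance, robustness and dual-robustness requirements. This yields constants $\delta,\kappa>0$ depending only on $r$, and an infinite family as $|G|\to\infty$.
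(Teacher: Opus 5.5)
This matches the paper, which gives no proof of this lemma and simply imports it as Theorem~1.1 of \cite{dinur2022locally}. Your outline is a fair high-level summary of that cited proof: the left-right Cayley complex, the rate bound $2(r_A+r_B)-3$ from counting edge constraints, the expander-mixing distance bound, and the $d^2$-query vertex tester analyzed via robustness of $C_A\otimes C_B$ plus expansion. A few details of the sketch are loose but do not matter for a citation: the correction step modifies local views on squares rather than ``edges'' and needs the rejection probability to start below a threshold (large rejection probability is trivial), padding with zero coordinates rather than puncturing is the safe way to adjust the rate, and the base codes there are LDPC codes $C_A=C_B=C_0$ with no dual-robustness requirement.
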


\begin{remark}
    Section~5 of \cite{dinur2022locally} shows that the codes in \Cref{lem:thm-c3-code} are linear over the binary field $\mathbb F_2$.
\end{remark}

For completeness we recall the notation surrounding the left-right Cayley complex associated with the codes $C_n$.
Given such a code, let $X$ denote its LRCC, a 2-dimensional simplicial complex.
We write $X(\ell)$ for the set of $\ell$-faces of $X$, so $X(0)$ is the set of vertices, $X(1)$ the edges, and $X(2)$ the 2-faces.
For any vertex $g\in X(0)$, the local view $X_g \subseteq X(2)$ consists of the 2-faces that contain $g$.
The LRCC construction equips each local view $X_g$ with a small code $C_g$ obtained by restricting the global code to the coordinates indexed by $X_g$.
With this terminology in place, the local testability guarantee established in \cite{dinur2022locally} can be stated as follows:
\begin{lemma}[Theorem 4.5 in \cite{dinur2022locally}]
    Let $C_n$ be an arbitrary code in \Cref{lem:thm-c3-code}, and let $X$ denote its LRCC.
    Then 
    for any $f : X(2)\to \mathbb F_2$, we have
    \[
        \Pr[g\in X(0)]{f|_{X_g}\notin C_g} \ge \kappa \cdot \delta_{\mathrm H}(f, C_n).
    \]
\end{lemma}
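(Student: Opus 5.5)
This statement is cited from \cite{dinur2022locally} (their Theorem 4.5), so the intended proof is an appeal to that paper. What follows is how I would reconstruct it from the LRCC structure.

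\textbf{Setup.} The coordinates of $C_n$ are the squares $X(2)$ of a left-right Cayley complex on a group $G$ with generating sets $A,B$. The code consists of all $f$ such that, at every vertex $g$, the local view $f|_{X_g}$ (an $A\times B$ array) lies in the tensor code $C_A\otimes C_B$. Every square meets exactly four vertices, so the vertex test has bounded variable degree. The claim is that a uniformly random vertex test rejects $f$ with probability at least $\kappa\cdot\delta_{\mathrm H}(f,C_n)$.

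\textbf{Step 1: the heavy-rejection case.} Let $R=\{g: f|_{X_g}\notin C_g\}$ and $\rho=|R|/|X(0)|$. If $\rho$ is at least a fixed constant $\rho_0$, the bound holds with $\kappa\le\rho_0$, because $\delta_{\mathrm H}(f,C_n)\le1$. So from here on I assume $\rho$ is small.

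\textbf{Step 2: local decoding and propagation.} At each vertex $g\notin R$, the view $f|_{X_g}$ is already a local codeword. At each $g\in R$, replace the view by the closest tensor codeword. Two neighbouring vertices share a row or a column of squares, and these local codewords should agree on it. Where they disagree, the disagreement must be repaired.

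The repair uses an iterative flipping procedure. While some vertex's local view disagrees with a nearby tensor codeword on a set of squares, flip $f$ on the squares where it differs from that best local codeword. Each flip strictly decreases a potential, namely the total disagreement between $f$ and the locally decoded views. Termination in a global codeword is exactly the statement that no stuck configuration exists.

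\textbf{Step 3: ruling out stuck configurations (the main obstacle).} Suppose the procedure stalls while the set $S$ of vertices with errors is nonempty but small. Within $S$, the local disagreements form a structure in which each erroneous row or column inside a vertex view is shared with a neighbour in $S$.

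Two ingredients are used against this.
\begin{itemize}
\item \textbf{Robust testability of the tensor code.} Since $C_A,C_B$ have good distance and their duals are suitably chosen, the agreement test on rows and columns of $C_A\otimes C_B$ is robust: small row/column inconsistency implies the array is close to a tensor codeword.
\item \textbf{Spectral expansion of the Cayley graphs.} These graphs have $\lambda\ll |A|\delta_A$, so by the expander mixing lemma a small vertex set $S$ cannot contain enough internal edges to support every error row on both sides.
\end{itemize}
Combining the two gives a contradiction unless $|S|$ is a large fraction of $|X(0)|$.

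Finally, I would account for the decrease of the potential against the number of flips. Each flip is charged to a rejecting vertex, and a vertex is charged only $O(1)$ times. This gives $\delta_{\mathrm H}(f,C_n)\le O(\rho)$, i.e.\ the claimed bound with $\kappa$ depending on $\delta_A,\delta_B,\lambda$ and the robustness constant.

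I expect Step~3 to be the crux: balancing robustness against $\lambda/|A|$ requires the specific parameter choices of Sections~5--6 of \cite{dinur2022locally}. In the paper itself I would simply invoke their theorem.
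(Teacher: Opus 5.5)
Your proposal matches the paper, which gives no argument of its own and simply imports this bound as Theorem~4.5 of \cite{dinur2022locally}, exactly as you say you would. Do not treat your reconstruction sketch as self-contained, though: the potential you name need not decrease under your flip rule, because rewriting $f$ on $X_g$ also changes the views of the other three vertices of every flipped square. For the same reason, the process can modify vertices that were accepting initially, so the step count should come from a potential that starts at $O(\rho)$ and drops by a constant per step, not from charging flips to the originally rejecting vertices.
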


However, the local tests on $X_g$ are systems of linear equations rather than single linear constraints, so we translate them into tests that query a constant number of variables.
\begin{lemma}\label{lem:c3-code-new-tests}
    There exists a family of LTCs $\{\tilde C_n\}$ with constant rate, detection probability, distance, variable
    degree, and query complexity $3$ whose local tests are linear equations over $\mathbb F_2$.
\end{lemma}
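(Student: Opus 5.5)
Starting from the LRCC codes $C_n$ of \Cref{lem:thm-c3-code}, I will first replace each local test ``$f|_{X_g}\in C_g$'' by single parity checks, and then reduce those checks to 3-variable equations by adding auxiliary variables.

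\emph{Step 1 (single parities).} Each local code $C_g\subseteq\mathbb F_2^{X_g}$ is linear, and $|X_g|=D$ is a constant independent of $n$. So $C_g$ is the kernel of a parity-check matrix $H_g$ with at most $D$ rows, each of weight at most $D$. If $f|_{X_g}\notin C_g$, some row of $H_g$ is violated. Hence, if we choose a uniformly random pair $(g,\text{row})$, rejection occurs with probability at least $\frac1D\Pr_g[f|_{X_g}\notin C_g]\ge \frac{\kappa}{D}\,\delta_{\mathrm H}(f,C_n)$, after padding every $H_g$ to exactly $D$ rows (zero rows are allowed). The code itself is unchanged, so rate and distance are preserved. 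The variable degree is constant because the complex has bounded degree: every face lies in $O(1)$ local views.

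\emph{Step 2 (reduce arity to 3).} A parity $x_{i_1}+\dots+x_{i_w}=0$ with $w\le D$ is split via fresh auxiliary variables $y_1,\dots,y_{w-3}$ into the chain $x_{i_1}+x_{i_2}+y_1=0$, $y_1+x_{i_3}+y_2=0$, and so on, ending with a 3-term equation. Pad shorter equations by repeating a variable, or by using a fresh auxiliary constrained to $0$, so that every query set has size $3$. The new code $\tilde C_n$ is the set of words $(c,y(c))$ where $c\in C_n$ and $y(c)$ is the uniquely determined vector of partial sums. It is linear, and its block length grows by at most a constant factor $n'\le (1+D^2\cdot O(1))\,n$. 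Consequently the rate is still a constant, and the distance is at least $\delta n/n'$ because the original coordinates already differ in $\ge\delta n$ positions. The variable degree stays constant.

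\emph{Step 3 (soundness).} This is the main obstacle, because a word $(w,y)$ can have auxiliary values inconsistent with $w$. Fix $(w,y)$. For each original parity check $P$, let $\mathrm{ch}(P)$ be its chain. If every equation in $\mathrm{ch}(P)$ holds, then summing them shows $P(w)=0$ and that $y$ on that chain equals the true partial sums. Now let $c$ be a codeword nearest to $w$. We will bound $\delta_{\mathrm H}((w,y),(c,y(c)))$ by comparing to the intermediate word $(w,y(w))$:
\[
d_{\mathrm H}((w,y),(c,y(c)))\le d_{\mathrm H}(y,y(w)) + d_{\mathrm H}(w,c)+d_{\mathrm H}(y(w),y(c)).
\]
The last term is at most $O(1)\cdot d_{\mathrm H}(w,c)$, because each original coordinate affects only $O(1)$ partial sums. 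The first term is at most $D$ times the number of chains containing a violated equation: on a fully satisfied chain $y$ is exactly the true partial sums. Write $\rho$ for the fraction of chains with a violation. Then
\[
\delta_{\mathrm H}(\tilde w,\tilde C_n)\le O(\rho)+O\bigl(\delta_{\mathrm H}(w,C_n)\bigr)\le O(\rho)+O(\rho/\kappa).
\]
Here the second inequality holds because a chain with all equations satisfied forces its parity to hold, so $\rho$ is at least the Step 1 rejection probability of $w$, which is $\ge \frac{\kappa}{D}\delta_{\mathrm H}(w,C_n)$. Finally, each violated chain contains at least one violated equation out of at most $D$, so the 3-query tester rejects with probability $\ge\rho/O(D)$. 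Combining the bounds gives a constant detection probability $\tilde\kappa=\Omega(\kappa/\mathrm{poly}(D))$.

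The remaining care is bookkeeping of normalizations: the block lengths $n$ versus $n'$, and the fact that tests are sampled uniformly over all 3-term equations rather than over chains. Since all chains have length between $1$ and $D$, these changes cost only constant factors.
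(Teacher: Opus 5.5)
Your proposal is correct and follows the same two-step route as the paper: first replace each local-view test $f|_{X_g}\in C_g$ by a uniformly chosen single parity check, losing only a constant factor in detection probability, and then reduce the arity to $3$ with the chained auxiliary-variable gadget. The differences are minor but worth noting: you use an arbitrary parity-check basis of the constant-size local code $C_g$ rather than the row/column checks that come from $C_g=C_0\otimes C_0$, and your Step~3 (comparing $(w,y)$ with $(w,y(w))$ and $(c,y(c))$) actually proves that the gadget preserves soundness, which the paper only asserts.
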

\begin{proof}
    The codes in \Cref{lem:thm-c3-code} are equipped with local tests that evaluate systems of linear equations.
    To prove the lemma we construct a new family of local tests that
    \begin{enumerate}
        \itemsep=0pt
        \item consists of individual linear equations (rather than systems of linear equations); and
        \item has query complexity $3$.
    \end{enumerate}
    We address the two goals steps by step.

    \noindent\textbf{Transform local tests to linear equations.}
    By the construction of the local low-density parity-check (LDPC) codes (see the proof of Lemma 5.1 in \cite{dinur2022locally}), each local code $C_g$ is of the form $C_0\otimes C_0$ for some LDPC code $C_0$.
    Hence $f|_{X_g}\in C_g$ if and only if every row and every column of $f|_{X_g}$ lies in $C_0$.
    We view $X_g$ as an $n_0\times n_0$ grid whose entries correspond to the local coordinates, so each row/column is a subset of $X(2)$.
    Let $r_{g,1},\ldots,r_{g,n_0}\subseteq X(2)$ be these rows and $r_{g,n_0+1},\ldots,r_{g,2n_0}$ be the columns.
    Denote the parity-check matrix of $C_0$ by $H$, with rows $h_1,\ldots,h_{k_0}$.

    Sample $g\in X(0)$ uniformly, then sample $i\in[2n_0]$ and $j\in[k_0]$ uniformly and independently.
    Query the coordinates of $f$ indexed by the support of $h_j$ inside the row (or column) $r_{g,i}$ and accept iff $\inner{h_j}{f|_{r_{g,i}}}=0$.
    For a fixed $g$, if $f|_{X_g}\notin C_g$ then some pair $(i,j)$ must violate this parity check, so conditioned on such a $g$ the test rejects with probability at least $1/(2n_0k_0)$.
    Averaging over the random choice of $g$ gives
    \[
        \Pr[g,i,j]{\inner{h_j}{f|_{r_{g,i}}}\ne 0}
        \ge \frac{1}{2n_0k_0}\Pr[g\in X(0)]{f|_{X_g}\notin C_g}.
    \]
    Combining this with the local testability guarantee from Theorem~4.5 of \cite{dinur2022locally} yields
    \[
        \Pr[g,i,j]{\inner{h_j}{f|_{r_{g,i}}}\ne 0}
        \ge \frac{\kappa}{2n_0k_0}\,\delta_{\mathrm H}(f,C_n).
    \]
    Each such check touches at most $q$ variables, where $q$ is the query complexity of the base low-density parity-check code $C_0$ (Lemma~5.2 of \cite{dinur2022locally}).

    For $g\in X(0)$, $i\in[2n_0]$, and $j\in[k_0]$, let $S_{g,i,j}\subseteq X(2)$ denote the set of coordinates participating in the parity check $\inner{h_j}{f|_{r_{g,i}}}=0$.
    Let $\mathcal S_1$ collect all such $S_{g,i,j}$.
    The tests in $\mathcal S_1$ therefore define a new LTC $\tilde C_n$ with local constraints
    \[
        \sum_{s\in S_{g,i,j}} x_s = 0\qquad\text{for every }S_{g,i,j}\in\mathcal S_1,
    \]
    which coincide with the local checks of the LDPC code $C_0$.
    Because $n_0$ and $k_0$ are absolute constants, the detection probability $\kappa' := \kappa/(2n_0k_0)$ is again a constant.

    The query complexity and variable degree of this test collection are also constant.
    Indeed, the LDPC code $C_0$ has query complexity $q$ and variable degree $7$ (Lemma~5.1 of \cite{dinur2022locally}).
    A variable $x$ belongs to $S_{g,i,j}$ precisely when
    \begin{enumerate}
        \itemsep=0pt
        \item $x$ is contained in the local view $X_g$ of the LRCC (there are four such $g$’s),
        \item $x$ lies in the selected row or column $r_{g,i}$ of $X_g$ (two choices), and
        \item $x$ participates in the parity check defined by $h_j$ (at most seven choices).
    \end{enumerate}
    Therefore each variable participates in at most $4\cdot 2\cdot 7 = 56$ tests.

    \noindent\textbf{Reduce the query complexity.}
    To reduce the query complexity to $3$, we apply the standard gadget reduction from $\mathrm{E}q\mathrm{LIN}2$ to $\mathrm{E}3\mathrm{LIN}2$.
    For each $S_{g,i,j}\in\mathcal S_1$, denote its indices by $s_1,\ldots,s_q$.
    The corresponding local test is the parity constraint
    \begin{equation}\label{eq:S-1}
    x_{s_1} + \cdots + x_{s_q} = 0.
    \end{equation}
    Introduce auxiliary variables $y_{g,i,j,k}$ for $2\le k\le q-2$ and replace \cref{eq:S-1} with the chain
    \begin{equation}\label{eq:S-2}
    \begin{cases}
          x_{s_1} + x_{s_2} + y_{g,i,j,2} = 0,
        \\y_{g,i,j,2} + x_{s_3} + y_{g,i,j,3} = 0,
        \\y_{g,i,j,3} + x_{s_4} + y_{g,i,j,4} = 0,
        \\\cdots
        \\y_{g,i,j,q-3} + x_{s_{q-2}} + y_{g,i,j,q-2} = 0,
        \\y_{g,i,j,q-2} + x_{s_{q-1}} + x_{s_q} = 0.
    \end{cases}
    \end{equation}
    An assignment to $(x_s)_{s\in S_{g,i,j}}$ satisfies \cref{eq:S-1} if and only if we can extend it to the auxiliary variables so that all constraints in \cref{eq:S-2} are satisfied.
    This gadget only blows up the number of tests and the block length by a constant factor, so the rate, distance, detection probability, and variable degree remain constant, while the query complexity drops to $3$.
\end{proof}

Combining with \Cref{thm:main-intro}, we immediately get the following.
\begin{proof}[Proof of \Cref{thm:e3lin2-intro}]
    Let $\tilde C_n$ be the LTCs in \Cref{lem:c3-code-new-tests}, whose tests are linear equations.
    By \Cref{lem:lin-sys}, after a fixed permutation of coordinates we may regard each $\tilde C_n$ as systematic; let $f_n$ be the corresponding encoders.
    Then by \Cref{def:CSP-LTC}, the induced CSP instances of $(\tilde C_n, f_n)$ are exactly systems of linear equations on $\mathbb F_2$.
    Because $(\tilde C_n, f_n)$ has query complexity $3$, these instances reside in E3LIN2, so $\mathsf{MaxCSP}_{\tilde C_n}$ coincides with Max E3LIN2.

    The family $(\tilde C_n, f_n)$ inherits constant rate, distance, query complexity, detection probability, 
    and variable degree. Therefore \Cref{thm:main-intro} implies that there exists $\varepsilon > 0$
    such that every $(1-\varepsilon)$-approximation algorithm for $\mathsf{MaxCSP}_{\tilde C_n}$
    has sensitivity at least
    \[
        \frac{n}{2\theta}\left(\delta\left(1-\frac{p}{p-1}\frac{\varepsilon}{r\kappa}\right)-\frac{2\varepsilon}{\kappa}\right) - \frac{1}{2\theta} =\Omega(n),
    \]
    and the expression in parentheses is positive for sufficiently small $\varepsilon$ (depending only on the code parameters).  Hence the constant hidden in the $\Omega(\cdot)$ notation depends only on the LTC parameters. This establishes the theorem.
\end{proof}

\subsection{Repetition Codes and Max Cut}\label{subsec:repetition-code}
Let $t$ be even, $A\sqcup B=[t]$ a fixed balanced partition with $|A|=|B|=t/2$, and $k\ge 1$.
Over the alphabet $\{0,1\}$ consider the \emph{signed $t$-fold repetition code}
\[
\mathrm{Rep}^{\pm}(k,t)
=\Bigl\{ y\in\{0,1\}^{tk}:\ \exists x\in\{0,1\}^k\ \text{s.t.}\ 
y_{a,i}=x_i\ \forall a\in A,\quad y_{b,i}=1-x_i\ \forall b\in B,\ \forall i\in[k]\Bigr\}.
\]
Fix any $d$-regular bipartite graph $G=(A,B,E)$ on the vertex set $A\cup B$
($|A|=|B|=t/2$) and write $\sigma(G)\in[0,1)$ for the \emph{second singular value}
of its normalized adjacency matrix (equivalently: the largest nontrivial
absolute eigenvalue of the random-walk matrix on $G$). Consider the non-adaptive
$2$-query tester $T$ that, on input $y\in\{0,1\}^{tk}$,
\begin{quote}
chooses $i\in[k]$ uniformly at random, chooses a random edge $(a,b)\in E$,
queries the two symbols $y_{a,i},y_{b,i}$, and \emph{rejects} iff $y_{a,i}=y_{b,i}$.
\end{quote}
The blocklength is $n = tk$.

We use the following spectral Poincar\'e inequality for bipartite $d$-regular graphs.
\begin{lemma}[Edge Poincar\'e]\label{lem:poincare}
    For any $f: A\cup B\to\mathbb{R}$,
    \[
    \frac{1}{|E|}\sum_{(u,v)\in E}\bigl(f(u)-f(v)\bigr)^2
    \ge  2 (1-\sigma(G)) \mathrm{Var}(f).
    \]
\end{lemma}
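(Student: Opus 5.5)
The plan is to rewrite both sides in terms of the symmetric random-walk matrix $W:=\mathrm{Adj}(G)/d$ on the vertex set $V=A\cup B$, with $|V|=t$. Then the inequality reduces to a spectral statement: every eigenvalue of $W$ other than the trivial eigenvalue $1$ is at most $\sigma(G)$. Throughout, $\mathrm{Var}(f)$ is taken with respect to the uniform distribution on $V$, so $\mathrm{Var}(f)=\frac1t\|f-\bar f\mathbf 1\|^2$ with $\bar f:=\frac1t\sum_v f(v)$.

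\textbf{Step 1 (Dirichlet form).} Since $G$ is $d$-regular, $|E|=dt/2$. Expanding the squares gives
\[
\sum_{(u,v)\in E}(f(u)-f(v))^2 = d\sum_{v}f(v)^2 - 2\sum_{(u,v)\in E}f(u)f(v) = d\bigl(\|f\|^2-\langle f,Wf\rangle\bigr).
\]
Dividing by $|E|$, the left-hand side of \Cref{lem:poincare} equals $\frac{2}{t}\bigl(\|f\|^2-\langle f,Wf\rangle\bigr)$. The target inequality is therefore equivalent to
\[
\|f\|^2-\langle f,Wf\rangle\ \ge\ (1-\sigma(G))\,\|f-\bar f\mathbf 1\|^2 .
\]

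\textbf{Step 2 (removing the mean).} Because $W\mathbf 1=\mathbf 1$ and $W$ is symmetric, the quadratic form $f\mapsto \|f\|^2-\langle f,Wf\rangle=\langle f,(I-W)f\rangle$ is unchanged when a constant is added to $f$. So it suffices to prove the reduced inequality for $g:=f-\bar f\mathbf 1$, which satisfies $g\perp\mathbf 1$.

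\textbf{Step 3 (spectrum).} Write $W=\begin{pmatrix}0&M\\M^{\top}&0\end{pmatrix}$, where $M$ is the $|A|\times|B|$ biadjacency matrix divided by $d$. The eigenvalues of $W$ are $\pm s_i$, with $s_1=1\ge s_2\ge\cdots$ the singular values of $M$ and $s_2=\sigma(G)$. Equivalently, the nontrivial part of the spectrum of the bipartite walk is what $\sigma(G)$ controls. The eigenvalue $+1$ has eigenvector $\mathbf 1$. The eigenvalue $-1$ has eigenvector $\mathbf 1_A-\mathbf 1_B$, which is orthogonal to $\mathbf 1$. Every other eigenvalue has absolute value at most $\sigma(G)$.

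Expand $g$ in an orthonormal eigenbasis of $W$ that contains $\mathbf 1/\sqrt t$. Since $g\perp\mathbf 1$, only eigenvalues $\lambda\ne 1$ appear, and each of these satisfies $\lambda\le\sigma(G)$: either $\lambda=-1$, or $|\lambda|\le\sigma(G)$. Hence $\langle g,(I-W)g\rangle=\sum_\lambda(1-\lambda)\,c_\lambda^2\ge(1-\sigma(G))\|g\|^2$, which is the reduced inequality.

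The only delicate point is Step 3. One must interpret $\sigma(G)$ as the second singular value of $M$ and not as the largest absolute eigenvalue of $W$ other than the trivial one. The latter would equal $1$, because the eigenvalue $-1$ always appears for a bipartite graph, and the bound would become vacuous. This causes no difficulty, because the $-1$ eigenvector contributes the factor $1-(-1)=2\ge 1-\sigma(G)$. The inequality needs only a one-sided bound on the nontrivial eigenvalues.
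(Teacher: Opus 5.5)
Your proof is correct and follows the same route as the paper. Both arguments use the Dirichlet-form identity $\frac{1}{|E|}\sum_{(u,v)\in E}(f(u)-f(v))^2=2(\langle f,f\rangle-\langle f,Pf\rangle)$, split $f=\bar f\mathbf 1+z$ with $z\perp\mathbf 1$, and finish with the spectral bound $\langle z,Pz\rangle\le\sigma(G)\langle z,z\rangle$. Your explicit handling of the eigenvalue $-1$ (eigenvector $\mathbf 1_A-\mathbf 1_B$) is more careful than the paper's one-line claim that all other singular values are at most $\sigma(G)$, which is literally false for the full bipartite walk matrix; you rightly note that only the one-sided bound on nontrivial eigenvalues is needed.
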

\begin{proof}
    Let $A_G$ be the adjacency matrix of $G$ and $P=A_G/d$.
    A standard calculation shows
    \[
    \frac{1}{|E|}\sum_{(u,v)\in E}\bigl(f(u)-f(v)\bigr)^2
    = 2\bigl(\langle f,f\rangle-\langle f,Pf\rangle\bigr).
    \]
    Decompose $f=\bar f\cdot \mathbf{1}+z$ with $\langle z,\mathbf{1}\rangle=0$.
    Since $\mathbf{1}$ is an eigenvector of $P$ with eigenvalue $1$ and all
    other singular values are at most $\sigma(G)$ in absolute value,
    $\langle f,Pf\rangle \le \bar f^{ 2} + \sigma(G)\langle z,z\rangle$.
    Thus
    $2(\langle f,f\rangle-\langle f,Pf\rangle)
    \ge 2\bigl(\langle z,z\rangle - \sigma(G)\langle z,z\rangle\bigr)
    = 2(1-\sigma(G)) \mathrm{Var}(f)$, as claimed.
\end{proof}

\begin{theorem}\label{thm:const-degree}
    For the code $\mathrm{Rep}^{\pm}(k,t)$, the following hold:
    \begin{enumerate}
        \itemsep=0pt
        \item The relative distance is $\delta = \frac{1}{k}  =  \frac{t}{n}.$
        \item The rate is $r  =  \frac{k}{n}  =  \frac{1}{t}.$
        \item For every $y\in\{0,1\}^{n}$,
        \[
        \Pr{T\ \mathrm{rejects}\ y}
        \ge 
        \kappa\cdot \delta_{\mathrm H} \bigl(y,\mathrm{Rep}^{\pm}(k,t)\bigr),
        \qquad\text{with}\qquad
        \kappa  =  1-\sigma(G).
        \]
        In particular, $\kappa$ is an absolute constant that depends only on $d$ and the expansion of $G$ (and is independent of $t$ and $k$, hence of the blocklength $n$).
        \item Each coordinate $(v,i)$, $v\in A\cup B$, participates in exactly $d$ possible queries (one per incident edge of $G$). 
        Hence the (maximum) degree is $d$.
    \end{enumerate}
\end{theorem}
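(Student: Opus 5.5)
Items 1, 2 and 4 are immediate bookkeeping, so I will dispatch them first and reserve the effort for the soundness bound in item 3.

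\emph{Distance.} Two codewords arising from messages $x\neq x'$ disagree on all $t$ copies of every coordinate $i$ with $x_i\neq x'_i$. Hence the minimum distance is $t$, that is, relative distance $t/n=1/k$, and it is attained by messages at Hamming distance one.

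\emph{Rate.} The encoder $x\mapsto y$ is injective, giving $2^k$ codewords in block length $tk$, so $r=1/t$.

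\emph{Degree.} Coordinate $(v,i)$ is queried exactly by the tests $(i,e)$ with $e\ni v$. Since $G$ is $d$-regular there are exactly $d$ of these.

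\emph{Soundness.} I will work block by block. For a fixed $i$, set $y^{(i)}=(y_{v,i})_{v\in A\cup B}$ and define $f_i:A\cup B\to\{0,1\}$ by $f_i(a)=y_{a,i}$ for $a\in A$ and $f_i(b)=1-y_{b,i}$ for $b\in B$.

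The tester rejects on edge $(a,b)$ exactly when $y_{a,i}=y_{b,i}$, which is exactly when $f_i(a)\neq f_i(b)$, i.e.\ when $(f_i(a)-f_i(b))^2=1$. Writing $p_i:=\Pr[(a,b)\in E]{f_i(a)\ne f_i(b)}$, this gives
\[
p_i=\frac{1}{|E|}\sum_{(a,b)\in E}(f_i(a)-f_i(b))^2 .
\]
By \Cref{lem:poincare} this is at least $2(1-\sigma(G))\mathrm{Var}(f_i)$, where the variance is over a uniform vertex of $A\cup B$.

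Let $\mu_i$ be the fraction of vertices with $f_i=1$. Then $\mathrm{Var}(f_i)=\mu_i(1-\mu_i)\ge \tfrac12\min(\mu_i,1-\mu_i)$, using $\max(\mu_i,1-\mu_i)\ge\tfrac12$. This yields
\[
p_i\ge(1-\sigma(G))\min(\mu_i,1-\mu_i).
\]

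Now choose $x_i$ to be the majority value of $f_i$. Then the codeword $c$ with message $x$ disagrees with $y$ in block $i$ on exactly $t\min(\mu_i,1-\mu_i)$ positions. Summing over blocks,
\[
\delta_{\mathrm H}(y,\mathrm{Rep}^\pm)\le\frac{1}{tk}\sum_i t\min(\mu_i,1-\mu_i)=\frac1k\sum_i\min(\mu_i,1-\mu_i).
\]

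The rejection probability is $\frac1k\sum_i p_i$, because $i$ is chosen uniformly. Comparing the two displays gives rejection probability at least $(1-\sigma(G))\,\delta_{\mathrm H}(y,C)$, i.e.\ $\kappa=1-\sigma(G)$.

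The main point needing care is matching the Poincar\'e normalization to the factor-one bound on $\kappa$. The factor $2$ in \Cref{lem:poincare} is exactly what cancels the $\tfrac12$ in $\mu(1-\mu)\ge\tfrac12\min(\mu,1-\mu)$. I must also make sure the variance there is taken with respect to the uniform measure on $A\cup B$, which is the stationary measure of the regular graph, consistent with the inner products used in that lemma.

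Finally, independence from $t$ and $k$ follows by taking $G$ from a family of $d$-regular bipartite expanders. For such a family, $\sigma(G)$ is bounded away from $1$ by a constant depending only on $d$, so $\kappa$ is an absolute constant.
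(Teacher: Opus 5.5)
Your proposal is correct and follows essentially the same route as the paper. Both arguments use the flipped column function and apply the edge Poincar\'e inequality to get $\rho_i \ge 2(1-\sigma(G))\mu_i(1-\mu_i) \ge (1-\sigma(G))\min(\mu_i,1-\mu_i)$, then decode each column by majority and average over the uniformly chosen $i$. Your choice to bound $\delta_{\mathrm H}(y,\mathrm{Rep}^{\pm}(k,t))$ only from above, via the majority codeword, is exactly the direction needed; the paper states it as an equality, which also holds because the code is a product over columns.
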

\begin{proof}
    (1)--(2) As in the ordinary repetition code, the encoder $x\mapsto y$ is injective with message length $k$ and blocklength $n=tk$;
    two distinct messages differing in one coordinate produce codewords that differ on all $t$ copies of that coordinate, giving minimum (absolute) distance $t$ and relative distance $\delta=t/n=1/k$.

    \smallskip
    (3) \emph{Soundness.}  Fix a received word $y$ and a column $i\in[k]$. Define the \emph{flipped} column
    $g_i\in\{0,1\}^{A\cup B}$ by
    \[
    g_i(v) = \begin{cases}
    y_{v,i} & v\in A,\\
    1-y_{v,i} & v\in B,
    \end{cases}
    \]
    so that $g_i$ is constant iff the $i$-th column of $y$ belongs to the code constraint
    ($A$-entries equal and $B$-entries their complements). The nearest codeword in column $i$
    is obtained by setting all $t$ positions to the majority bit of $g_i$; hence the
    per-column contribution to the relative distance is
    \[
    \Delta_i  =  \min\{\text{fraction of $1$'s in }g_i,\ \text{fraction of $0$'s in }g_i\}.
    \]
    Note that $\delta_{\mathrm H}(y,\mathrm{Rep}^{\pm}(k,t))=\frac{1}{k}\sum_{i=1}^k \Delta_i$.

    The tester $T$ samples a random edge $(a,b)\in E$ and rejects iff $y_{a,i}=y_{b,i}$,
    which (by construction of $g_i$) is equivalent to rejecting iff $g_i(a)\neq g_i(b)$.
    Therefore the rejection probability conditioned on column $i$ equals the fraction of
    \emph{disagreeing} edges of $G$ under the labeling $g_i$:
    \[
    \rho_i  :=  \Pr{\text{$T$ rejects}\mid i}
    =  \frac{1}{|E|}\sum_{(u,v)\in E} \mathbf{1} \bigl[g_i(u)\neq g_i(v)\bigr].
    \]

    We now relate $\rho_i$ to $\Delta_i$ using the expansion of $G$.
    Let $\pi$ be the uniform distribution on $A\cup B$, and write
    $\langle f,h\rangle := \mathbb{E}_{v\sim \pi}[ f(v)h(v) ]$ and
    $\mathrm{Var}(f):= \langle f-\mathbb{E}[f], f-\mathbb{E}[f]\rangle$.
    Let $P$ be the random-walk operator on $G$ (normalized adjacency).

    Apply Lemma~\ref{lem:poincare} to $f=g_i\in\{0,1\}^{A\cup B}$. Then
    $\rho_i=\frac{1}{|E|}\sum_{(u,v)}\mathbf{1}[g_i(u)\neq g_i(v)]
    = \frac{1}{|E|}\sum_{(u,v)}(g_i(u)-g_i(v))^2$,
    so
    \[
    \rho_i  \ge  2(1-\sigma(G)) \mathrm{Var}(g_i).
    \]
    For a $\{0,1\}$-valued function, $\mathrm{Var}(g_i)=\theta_i(1-\theta_i)$ where $\theta_i$ is the fraction of $1$'s;
    moreover $\Delta_i=\min\{\theta_i,1-\theta_i\}\le \frac12$, hence
    $\theta_i(1-\theta_i)\ge \frac12 \Delta_i$. Therefore,
    \[
    \rho_i  \ge  (1-\sigma(G)) \Delta_i.
    \]
    Averaging over the random column $i$ yields
    \[
    \Pr{T\ \text{rejects }y}  =  \frac{1}{k}\sum_{i=1}^k \rho_i
    \ge  (1-\sigma(G)) \frac{1}{k}\sum_{i=1}^k \Delta_i
    =  (1-\sigma(G)) \Delta \bigl(y,\mathrm{Rep}^{\pm}(k,t)\bigr),
    \]
    which proves the desired soundness with $\kappa=1-\sigma(G)$.

    \smallskip
    (4) \emph{Degree.}  Each vertex of $G$ has exactly $d$ incident edges, so a coordinate $(v,i)$ appears in exactly $d$ possible queries. 
    This $d$ is fixed once we choose $G$ and does not depend on $t$ or $n$.
\end{proof}
Using any constant-degree bipartite expander family (e.g., Ramanujan graphs), $\sigma(G)\le 2\sqrt{d-1}/d$, hence $\kappa\ge 1-2\sqrt{d-1}/d$ is a positive constant.

\begin{proof}[Proof of \Cref{thm:max-cut-intro}]
    By permuting coordinates we may regard $\mathrm{Rep}^{\pm}(k,t)$ as systematic, so every instance $\mathcal I_C(m)$ produced from the code is satisfiable and encodes a bipartite Max Cut instance.
    The code parameters are $|\Sigma|=2$, relative distance $\delta = 1/k$, rate $r=1/t$, detection probability $\kappa = 1-\sigma(G)$, and variable degree $\theta=d$.

    Let $\mathcal A$ be a $(1-\varepsilon)$-approximation algorithm for Max Cut.
    Applying \Cref{thm:sensitivity-opt-message} yields neighboring messages $m\sim m'$ with
    \[
        \EMD_{\delta_{\mathrm H}}\left(\mathcal A(\mathcal I_C(m)), \mathcal A(\mathcal I_C(m'))\right)
        \ge
        \frac{1}{k}\left(1 - \frac{2t\varepsilon}{\kappa}\right) - \frac{2\varepsilon}{\kappa} - \frac1n.
    \]
    The two instances differ only on the tester constraints that query the unique message coordinate where $m$ and $m'$ disagree; at most $d$ constraints change.
    By the triangle inequality, some such constraint $c$ satisfies
    \[
        \EMD_{\delta_{\mathrm H}}\left(\mathcal A(\mathcal I_C(m)), \mathcal A(\mathcal I_C(m)-c)\right)
        \ge
        \frac{1}{dk}\left(1 - \frac{2t\varepsilon}{\kappa}\right) - \frac{2\varepsilon}{d\kappa} - \frac1{nd}.
    \]
    Multiplying by the blocklength $n=tk$ converts the normalized Hamming metric to the unnormalized one used in the sensitivity definition, so the resulting Max Cut instance witnesses worst-case sensitivity at least
    \begin{equation}\label{eq:maxcut-sens-general}
        \frac{t}{d}\left(1 - \frac{2(t+k)\varepsilon}{\kappa}\right) - \frac{1}{d}.
    \end{equation}
    The expression inside the parentheses is positive whenever $\varepsilon \le \kappa/(2(t+k))$.  Since any algorithm has sensitivity at most $n$, the range $\varepsilon < 1/n$ is already covered by the $\Omega(n)$ lower bound for bipartite $2$-coloring~\cite{varma2023average}; we therefore focus on $1/n \le \varepsilon \le \kappa/(8\sqrt{n})$.

    Fix an even integer
    \[
        t := 2\left\lfloor \frac{\kappa}{16\varepsilon} \right\rfloor,
    \]
    so that $\kappa/(8\varepsilon) - 2 \le t \le \kappa/(8\varepsilon)$, and let $k := \left\lceil n/t \right\rceil$.
    If $t$ does not divide $n$, pad with at most $t$ dummy constraints so that the blocklength becomes $tk\in [n,n+t]$; since $t \le (\kappa/8)n$ when $\varepsilon \ge 1/n$, this increases $n$ by at most a constant factor, and we continue to denote the new blocklength by $n$ (the upper bound $\varepsilon \le \kappa/(8\sqrt{n})$ may shrink by an absolute constant, which we absorb into the choice of $8$).
    For sufficiently large $n$ (say $n\ge 16$) the assumption $\varepsilon \le \kappa/(8\sqrt{n})$ gives $\varepsilon \le \kappa/32$ and $n\varepsilon^2 \le \kappa^2/64$, whence
    \[
        t\varepsilon \le \frac{\kappa}{8}
        \qquad\text{and}\qquad
        k\varepsilon
        \le \varepsilon + \frac{n\varepsilon^2}{\kappa/8 - 2\varepsilon}
        \le \frac{\kappa}{32} + \frac{\kappa^2/64}{\kappa/16}
        = \frac{9\kappa}{32}.
    \]
    Consequently $(t+k)\varepsilon \le 13\kappa/32$, so $2(t+k)\varepsilon/\kappa \le 13/16$ and the parenthetical term in~\eqref{eq:maxcut-sens-general} is at least $3/16$.
    Substituting into~\eqref{eq:maxcut-sens-general} yields
    \[
        \mathsf{Sens}(\mathcal A,\mathcal I_C(m))
        \ge \frac{3t}{16d} - \frac{1}{d}
        \ge \frac{1}{d}\left(\frac{3\kappa}{128}\cdot\frac{1}{\varepsilon} - \frac{11}{8}\right)
        = \Omega\!\left(\frac{1}{\varepsilon}\right),
    \]
    because $\varepsilon \le \kappa/(8\sqrt{n})$ tends to $0$ with growing blocklength, so the term proportional to $1/\varepsilon$ dominates the constant.
    Thus, for every sufficiently large $n$ and every $1/n \le \varepsilon \le \kappa/(8\sqrt{n})$, we obtain a satisfiable bipartite Max Cut instance on which any $(1-\varepsilon)$-approximation algorithm has sensitivity $\Omega(1/\varepsilon)$. 
    This completes the proof.
\end{proof}

\section{Maximum $k$-Coverage}\label{sec:coverage}

We show how the sensitivity lower bound for Max E3SAT implies a matching bound for the maximum $k$-coverage problem when algorithms achieve near-perfect coverage on perfectly coverable instances.

In the maximum $k$-coverage problem, an instance $I = (U,\mathcal{F})$ has a universe $U$ and a family $\mathcal{F} \subseteq 2^{U}$. 
Outputs are $k$-subfamilies $\mathcal{S} \subseteq \mathcal{F}$ encoded by their indicator vectors $\mathbbm{1}_{\mathcal{S}} \in \{0,1\}^{\mathcal{F}}$, and we measure distances with the Hamming metric $d_{\mathrm H}(\mathbbm{1}_{\mathcal{S}}, \mathbbm{1}_{\mathcal{S}'}) = \abs{\mathcal{S} \triangle \mathcal{S}'}$. 
Two instances are neighbors if they differ only in the incidence vector of a single element $e \in U$: equivalently, we delete $e$ and insert a fresh element whose membership across $\mathcal{F}$ can be chosen arbitrarily.

\subsection{Max E3SAT}\label{sec:sat}

In Max E3SAT, an instance $\Phi = (V,\{0,1\},\mathcal C)$ consists of variables $V$ and 3-CNF clauses $\mathcal C = \set{C_1, \dots, C_m}$. The output space is $\{0,1\}^{V}$ equipped with the Hamming metric $d_{\mathrm H}$. 
Combining the standard reduction from Max E3SAT to Max E3LIN2 with a swapping variant of \Cref{thm:main-intro,thm:e3lin2-intro} (obtained by adapting its proof to replace clause deletions with clause replacements) yields the following consequence.
\begin{corollary}\label{cor:e3sat-lb}
  There exist constants $\varepsilon_0 \in (0,1)$ and $c > 0$ such that for every randomized Max E3SAT algorithm $\mathcal{A}$ that, on every satisfiable instance $\Phi = (V,\{0,1\},\mathcal C = \{C_1,\ldots,C_m\})$, outputs in expectation a $(1-\varepsilon_0)$-approximate assignment, there is a satisfiable $\Phi$ with
  \begin{equation}\label{eq:e3sat-lb}
    \max_{i \in [m]}\;\max_{C' \in \mathcal{L}_i}
    \EMD_{d_{\mathrm H}}\bigl(\mathcal{A}(\Phi), \mathcal{A}(\Phi - C_i + C')\bigr)
    \ge c |V|.
  \end{equation}
  Here $\mathcal{L}_i$ denotes the set of clauses on the three variables that appear in $C_i$, and $\Phi - C_i + C'$ is the instance obtained from $\Phi$ by replacing $C_i$ with $C'$.
\end{corollary}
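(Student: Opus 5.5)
We plan to start from the systematic E3LIN2 code family $\tilde C_n$ of \Cref{lem:c3-code-new-tests} and apply the standard gadget from E3LIN2 to E3SAT. Each equation $x+y+z=b$ becomes the four 3-clauses on $\{x,y,z\}$ that exclude exactly the assignments of the wrong parity. We do not delete message-dependent constraints. Instead, we work with the full codeword variable set: we keep the message coordinates as variables of $\Phi$ and encode the message $m$ through clauses.

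Concretely, for each parity check $S$ of $\tilde C_n$ that touches message coordinates, we substitute the values of $m$, as in \Cref{def:CSP-LTC}. This yields a constraint on at most three parity variables whose right-hand side depends on $m$. A single-variable or two-variable residual constraint is padded to width three by reusing variables or by introducing a bounded number of fresh auxiliary variables, so that all clauses are E3. The resulting formula $\Phi(m)$ is satisfiable exactly by (extensions of) $f(m)$. It has $|V|=\Theta(n)$ variables and bounded occurrence, and it has $\Theta(n)$ clauses because $\theta$ and $q$ are constants.

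Flipping one message bit changes the right-hand side of at most $\theta$ equations. Each such change swaps the four clauses encoding $x+y+z=b$ for the four clauses encoding $x+y+z=1-b$, all on the same three variables. So $\Phi(m')$ is reached from $\Phi(m)$ by at most $4\theta$ replacements of a clause $C_i$ by some $C'\in\mathcal L_i$. Every intermediate formula is obtained by replacing clauses one at a time, in a fixed order, with the clause at the same position; these intermediates need not be satisfiable. The statement only asks that the witness $\Phi$ in \eqref{eq:e3sat-lb} be satisfiable. We therefore order the chain so that the large jump is charged to a satisfiable endpoint.

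Next we transfer approximation. A $(1-\varepsilon_0)$-approximate assignment for $\Phi(m)$ violates at most an $\varepsilon_0$ fraction of clauses. Each violated equation violates at least one of its four clauses, so the fraction of violated equations is at most $O(\varepsilon_0)$. Hence $\mathcal A$ composed with the gadget is a $(1-O(\varepsilon_0))$-approximation algorithm for $\MaxCSP(\tilde C_n)$, after projecting out auxiliaries. Projection only decreases Hamming distance, so sensitivity lower bounds for the projected output carry over. \Cref{thm:sensitivity-opt-message} then gives $m\sim m'$ with $\EMD_{d_{\mathrm H}}(\mathcal A(\Phi(m)),\mathcal A(\Phi(m')))\ge c'n$ for small enough $\varepsilon_0$.

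The main obstacle is turning this endpoint gap into a single-step gap at a satisfiable formula, since the intermediate formulas in the chain may be unsatisfiable and the algorithm carries no guarantee there. The fix we would try first is a star rather than a chain: compare both $\Phi(m)$ and $\Phi(m')$ against one-step neighbours of themselves. Alternatively, we would use the triangle inequality through $\Phi(m)$. Writing $\Phi^{(j)}$ for the chain, we have
\[
\EMD(\mathcal A(\Phi(m)),\mathcal A(\Phi(m')))\le\sum_j \EMD(\mathcal A(\Phi^{(j-1)}),\mathcal A(\Phi^{(j)})).
\]
Some step is then at least $c'n/(4\theta)$, but it may sit between two unsatisfiable formulas. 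To avoid this, we would restructure the gadget so that each equation change is realized by a single clause replacement. For each check we introduce an auxiliary ``switch'' variable $s$ and encode $x+y+z+s=0$ through the E3 chain of \Cref{lem:c3-code-new-tests}, adding one clause fixing $s$ to $b$; we pad that clause to width three, either by triplicating the literal or, if true E3 clauses are needed, through a forced pair of helper variables. Changing $b$ then swaps a single clause for another on the same variables. The modified formula is again satisfiable, because flipping $s$ together with the codeword of $m'$ restores satisfiability.

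With this construction, changing the message coordinate turns $\Phi(m)$ into $\Phi(m')$ through $\le\theta$ single-clause replacements. Each replacement, however, produces a formula whose satisfying assignments correspond to a mixed message-dependent right-hand side, and such a formula may not be satisfiable. The planned remedy for this is the following. The replacements belong to a single message bit, so we make the $\theta$ switches equal to one shared variable $s_i$, and we place the message-dependent clause on $s_i$ alone, as one clause padded to E3. Then the whole swap is one clause replacement between two satisfiable formulas, and \eqref{eq:e3sat-lb} follows with $c=c'/O(1)$.
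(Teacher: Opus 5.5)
Your final construction is sound, and it takes a genuinely different route from the paper. In that construction you keep all homogeneous checks of $\tilde C_n$ on all $n$ coordinates, with the message coordinates as variables $s_i$, and pin each $s_i$ to $m_i$ by a single clause.

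\textbf{The paper's route.} The paper only sketches this corollary as a swap version of the deletion chain in the proof of \Cref{thm:main-intro}. It gadget-encodes the substituted instances $\mathcal I_C(m)$ and observes that $\Phi(m)$ and $\Phi(m')$ differ in at most $4\theta$ clauses on fixed triples. The triangle inequality then puts a $1/(4\theta)$ share of the endpoint gap on a single replacement. The obstacle you raised for this chain is real for an arbitrary order, but it disappears with the right order. In each affected gadget on a triple $T$, first replace the clause forbidding $f(m')|_T$ by one forbidding an assignment of the old parity other than $f(m)|_T$. After these at most $\theta$ swaps, the formula is satisfied by both $f(m)$ and $f(m')$. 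The remaining three swaps per gadget never forbid $f(m')|_T$. Hence every formula on the chain is satisfied by $f(m)$ or by $f(m')$.

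\textbf{What each route buys.} Your route makes adjacent messages literally one clause replacement apart. You get a single-step gap with no $1/\theta$ loss, and you avoid the short residual equations that substitution creates and that would otherwise need padding. The price is that the message is now only softly enforced. So the approximation transfer you wrote for the substituted family does not apply verbatim and must be redone.

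\textbf{Redoing the transfer.} First, note that $(s\vee s\vee s)$ is not an E3 clause. Pad the pinning clause with helpers private to coordinate $i$, e.g.\ a triple $h_{i,1},h_{i,2},h_{i,3}$ forced to $000$ by seven clauses, with pinning clause $(\pm s_i\vee h_{i,1}\vee h_{i,2})$. Now suppose $(s,w)$ violates $V$ clauses of $\Phi(m)$.
\begin{itemize}
\item Each $i$ with $s_i\neq m_i$ violates one of its own eight clauses, so $d_{\mathrm H}(s,m)\le V$.
\item Each such $i$ spoils at most $\theta$ constraints of $\mathcal I_C(m)$.
\item Hence $w$ violates at most $(1+\theta)V$ constraints of $\mathcal I_C(m)$.
\end{itemize}
Since $|\Phi(m)|=O(|\mathcal S|)$, the composed algorithm is $(1-O(\theta\varepsilon_0))$-approximate for $\MaxCSP(\tilde C_n)$, and \Cref{thm:sensitivity-opt-message} applies. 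Keeping the helpers private is essential. With shared helpers, violating $O(1)$ forcing clauses would unpin every $s_i$. An algorithm that always outputs $f(0)$ would then be near-optimal with zero sensitivity.
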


\subsection{Block Gadget Reduction}
Fix integers $n = \abs{V}$ and $k \le n$.
Without loss of generality, we assume that $n$ is divisible by $k$.
Partition $V$ into $k$ disjoint blocks $V = V_1 \sqcup \cdots \sqcup V_k$ with $\abs{V_i} = n/k$. 
For each $i \in [k]$ and $\alpha \in \{0,1\}^{V_i}$ create a set $S_{i,\alpha}$. For every clause $C_j \in C$ create an element $e_j$, and include $e_j \in S_{i,\alpha}$ whenever the restriction $\alpha$ satisfies a literal of $C_j$ whose variable lies in $V_i$. 
Let $\mathcal{F} = \set{S_{i,\alpha} \mid i \in [k], \alpha \in \{0,1\}^{V_i}}$.

Let $m = \abs{C}$.
Without loss of generality, we assume that $m$ is divisible by $k$.
We choose $M = m/k$. 
For each block $i$, add a multiset $G_i$ of $M$ new guard elements. Every set $S_{i,\alpha}$ contains the guards $G_i$, while no set $S_{i',\alpha'}$ with $i' \neq i$ contains them. 
Write $U_{\mathrm{cl}} = \set{e_1, \dots, e_m}$, $U_{\mathrm{gd}} = \bigsqcup_{i=1}^k G_i$, and define $U^+ = U_{\mathrm{cl}} \cup U_{\mathrm{gd}}$. 
The resulting maximum $k$-coverage problem instance is denoted by $R_k^+(\Phi) = (U^+, \mathcal{F})$.

\begin{lemma}\label{lem:cov=sat}
For any choice of exactly one set per block, say $\set{S_{i,\alpha_i}}_{i=1}^k$, let $x \in \{0,1\}^{V}$ be the concatenation of the block assignments $\alpha_1, \dots, \alpha_k$. Then
\[
  \left| \left( \bigcup_{i=1}^k S_{i,\alpha_i} \right) \cap U_{\mathrm{cl}} \right|
  = \left|\set{C_j \in C \mid x \text{ satisfies } C_j}\right|.
\]
\end{lemma}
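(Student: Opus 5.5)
The plan is a direct double-counting argument that establishes, element by element, the equivalence ``$e_j$ is covered iff $x$ satisfies $C_j$''; summing over $j\in[m]$ then gives the identity. Since the sets $S_{i,\alpha_i}$ restricted to $U_{\mathrm{cl}}$ contain only clause elements, the left-hand side equals $\sum_{j=1}^m \1{e_j \in \bigcup_i S_{i,\alpha_i}}$. The right-hand side is $\sum_{j=1}^m \1{x \text{ satisfies } C_j}$, so it suffices to prove the pointwise equivalence.

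For the forward direction, suppose $e_j \in S_{i,\alpha_i}$ for some $i$. By construction this means the restriction $\alpha_i$ satisfies some literal $\ell$ of $C_j$ whose variable lies in $V_i$. Because $x$ is the concatenation of the block assignments, $x|_{V_i}=\alpha_i$, so $x$ satisfies $\ell$ and hence satisfies $C_j$.

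For the converse, suppose $x$ satisfies $C_j$. Pick a satisfied literal $\ell$, and let $i$ be the unique block containing its variable; uniqueness holds because $V_1,\dots,V_k$ partition $V$. Since $\alpha_i = x|_{V_i}$ satisfies $\ell$, the definition gives $e_j\in S_{i,\alpha_i}$.

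No step here is a genuine obstacle. The only point requiring care is that exactly one set is chosen per block, so that $x$ is well defined with $x|_{V_i}=\alpha_i$ for every $i$. Guard elements are irrelevant, since the statement intersects with $U_{\mathrm{cl}}$.
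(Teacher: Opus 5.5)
Your proof is correct. The pointwise equivalence ``$e_j \in \bigcup_i S_{i,\alpha_i}$ if and only if $x$ satisfies $C_j$'', using $x|_{V_i}=\alpha_i$ and that the blocks partition $V$, followed by summing over $j$, is exactly the direct verification the paper leaves implicit: it states this lemma without proof, as immediate from the construction.
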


\begin{lemma}[Neighbor preservation]\label{lem:neighbor}
If $\Phi'$ is obtained from $\Phi$ by modifying one clause $C_t$, then $R_k^+(\Phi')$ is obtained from $R_k^+(\Phi)$ by reprogramming the single element $e_t$ while leaving all guards untouched.
\end{lemma}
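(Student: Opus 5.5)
The proof is a direct inspection of the construction of $R_k^+(\Phi)$, checking how each piece of the instance depends on the clause set $\mathcal C$. I will separate the ingredients of $R_k^+(\Phi)=(U^+,\mathcal F)$ into the parts that do not depend on the clauses at all and the parts that depend on exactly one clause.

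First, the clause-independent parts. The family $\mathcal F$ is indexed by pairs $(i,\alpha)$ with $i\in[k]$ and $\alpha\in\{0,1\}^{V_i}$. This index set depends only on $V$ and on the fixed partition $V_1\sqcup\cdots\sqcup V_k$, both of which are unchanged when $C_t$ is modified. The guard multisets $G_i$ have size $M=m/k$, and $m=\abs{\mathcal C}$ is unchanged because $\Phi'$ replaces $C_t$ rather than deleting it. The rule that places $G_i$ inside every $S_{i,\alpha}$ and inside no $S_{i',\alpha'}$ with $i'\neq i$ makes no reference to the clauses. Hence $U_{\mathrm{gd}}$ and the incidences of every guard with every set in $\mathcal F$ are identical in $R_k^+(\Phi)$ and $R_k^+(\Phi')$.

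Second, the clause elements. For each $j$, the incidence vector of $e_j$ across $\mathcal F$ is determined by $C_j$ alone: $e_j\in S_{i,\alpha}$ iff $\alpha$ satisfies some literal of $C_j$ whose variable lies in $V_i$. For $j\neq t$ we have $C_j=C'_j$, so the incidence vector of $e_j$ is unchanged. Only $e_t$ can have a different incidence vector, namely the one computed from the new clause $C'_t$.

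Putting the two parts together, $R_k^+(\Phi')$ agrees with $R_k^+(\Phi)$ except possibly in the incidence vector of the single element $e_t$. By the neighbor relation for $k$-coverage defined at the start of this section, this is exactly reprogramming one element. The only point needing care is that the guard count $M$ depends on $m$, which is why modification (replacement, as in \Cref{cor:e3sat-lb}) is required here rather than deletion. I do not expect any real obstacle.
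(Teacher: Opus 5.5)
Your proof is correct and is the argument the paper relies on: the paper gives no written proof of this lemma and treats it as immediate from the construction of $R_k^+$. You make that inspection explicit. The index set of $\mathcal{F}$ and the guards do not depend on the clauses, with $M=m/k$ unchanged because the modification is a replacement rather than a deletion, and each $e_j$'s incidences are determined by $C_j$ alone.
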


\begin{lemma}[Near-perfect coverage forces one choice per block]\label{lem:one-per-block}
Let $I^+ = R_k^+(\Phi)$ and $\varepsilon \in (0,1)$. Suppose a $k$-subfamily $\mathcal{S}$ covers at least $(1-\varepsilon) \abs{U^+}$ elements. If $M > \varepsilon \abs{U^+}$, then $\mathcal{S}$ contains at least one set from each block. As $\abs{\mathcal{S}} = k$, it in fact contains exactly one from every block and therefore covers all guards $U_{\mathrm{gd}}$.
\end{lemma}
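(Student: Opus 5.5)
We prove \Cref{lem:one-per-block} by counting the guard elements that a family missing some block leaves uncovered. Fix a $k$-subfamily $\mathcal S$ covering at least $(1-\varepsilon)\abs{U^+}$ elements, so at most $\varepsilon\abs{U^+}$ elements of $U^+$ are uncovered.

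Suppose, for contradiction, that some block $i$ has no set $S_{i,\alpha}$ in $\mathcal S$. By construction the guards $G_i$ lie only in sets of block $i$: every $S_{i,\alpha}$ contains $G_i$, and no $S_{i',\alpha'}$ with $i'\neq i$ contains any of them. The clause elements $e_j$ are distinct from the guards, so no other set covers them either. Hence all $M$ guards in $G_i$ are uncovered, giving
\[
  M \le \text{(number of uncovered elements)} \le \varepsilon\abs{U^+},
\]
which contradicts the hypothesis $M>\varepsilon\abs{U^+}$. Here $G_i$ is a multiset of $M$ new elements counted with multiplicity in $\abs{U^+}$; I read ``multiset'' as $M$ distinct fresh elements, so the count of $M$ is exact. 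Therefore $\mathcal S$ meets every one of the $k$ blocks.

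The blocks are disjoint and $\abs{\mathcal S}=k$. By pigeonhole, meeting all $k$ blocks with $k$ sets forces exactly one set per block. Since each chosen $S_{i,\alpha_i}$ contains $G_i$, the union covers every $G_i$, and hence all of $U_{\mathrm{gd}}$.

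There is no real obstacle here. The only point needing care is that coverage is counted over $U^+$ including the guards, and that guards of block $i$ are exclusive to block-$i$ sets. Both facts are immediate from the definition of $R_k^+(\Phi)$.
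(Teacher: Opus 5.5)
Your proof is correct and follows the paper's argument exactly. If some block $i$ contributes no set, all $M$ guards in $G_i$ stay uncovered, which contradicts $M > \varepsilon\abs{U^+}$; then $\abs{\mathcal S}=k$ together with the $k$ disjoint blocks forces exactly one set per block, and these sets cover all of $U_{\mathrm{gd}}$.
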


\begin{proof}
If block $i$ contributes no set, all $M$ guards in $G_i$ remain uncovered, contradicting $\abs{\bigcup_{S \in \mathcal{S}} S} \ge (1-\varepsilon) \abs{U^+}$ when $M > \varepsilon \abs{U^+}$. With $k$ blocks and budget $k$, covering each block at least once is equivalent to selecting exactly one per block.
\end{proof}

\paragraph{A canonical decoder.}
Whenever $\mathcal{S}$ contains exactly one set from each block we write it as $\mathcal{S} = \set{S_{i,\alpha_i}}_{i=1}^k$ and define $D(\mathcal{S}) \in \{0,1\}^{V}$ to be the concatenation of the block assignments.

\begin{lemma}[Decoder Lipschitzness]\label{lem:decoder-lip}
Let $\mathcal{S}, \mathcal{S}'$ be $k$-subfamilies that each contain exactly one set per block. Then
\[
  d_{\mathrm H}\bigl(D(\mathcal{S}), D(\mathcal{S}')\bigr)
  \le \frac{n}{2k} \cdot d_{\mathrm H}(\mathbbm{1}_{\mathcal{S}}, \mathbbm{1}_{\mathcal{S}'}).
\]
\end{lemma}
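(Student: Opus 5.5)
Both $\mathcal S$ and $\mathcal S'$ contain exactly one set per block, so I will argue block by block and then sum. Write $\mathcal S=\set{S_{i,\alpha_i}}_{i=1}^k$ and $\mathcal S'=\set{S_{i,\alpha'_i}}_{i=1}^k$. The indicator vectors are indexed by $\mathcal F$, and the sets $S_{i,\alpha}$ for different $(i,\alpha)$ are distinct members of $\mathcal F$. (If two of them coincide as subsets of $U^+$, I treat $\mathcal F$ as indexed by the pairs $(i,\alpha)$, which is how the reduction is defined.) Hence the symmetric difference decomposes over blocks:
\[
d_{\mathrm H}(\mathbbm{1}_{\mathcal S},\mathbbm{1}_{\mathcal S'})=\sum_{i=1}^k \abs{\set{S_{i,\alpha_i}}\triangle\set{S_{i,\alpha'_i}}}=\sum_{i=1}^k 2\cdot\1{\alpha_i\neq\alpha'_i}.
\]
This holds because in block $i$ either the chosen sets agree, contributing $0$, or they differ, contributing both sets, i.e.\ $2$.

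On the decoder side, $D(\mathcal S)$ and $D(\mathcal S')$ are the concatenations of the $\alpha_i$ and the $\alpha'_i$, and the blocks $V_i$ partition $V$. So
\[
d_{\mathrm H}(D(\mathcal S),D(\mathcal S'))=\sum_{i=1}^k d_{\mathrm H}(\alpha_i,\alpha'_i)\le \sum_{i=1}^k \abs{V_i}\cdot\1{\alpha_i\neq\alpha'_i}=\frac nk\sum_{i=1}^k\1{\alpha_i\neq\alpha'_i}.
\]
Comparing this with the first display gives
\[
d_{\mathrm H}(D(\mathcal S),D(\mathcal S'))\le \frac nk\cdot\frac12\, d_{\mathrm H}(\mathbbm{1}_{\mathcal S},\mathbbm{1}_{\mathcal S'}),
\]
which is the claimed bound with constant $\frac{n}{2k}$.

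There is no real obstacle. The only point needing care is the factor $2$ in the first identity: a change of choice within one block costs exactly two coordinates of the indicator vector. This relies on the sets of distinct blocks never being identified, which the guards $G_i$ guarantee for $M\ge1$.
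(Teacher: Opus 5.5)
Your proof is correct and follows the same route as the paper: each block whose choice changes contributes exactly $2$ to $d_{\mathrm H}(\mathbbm{1}_{\mathcal S},\mathbbm{1}_{\mathcal S'})$ and at most $\abs{V_i}=n/k$ to the decoded Hamming distance, and summing over blocks gives the factor $\frac{n}{2k}$. Your remark about indexing $\mathcal F$ by the pairs $(i,\alpha)$ makes explicit a convention the paper leaves implicit, and that convention is also needed for $D$ to be well defined at all.
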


\begin{proof}
Changing the choice within one block removes one set and adds another, increasing the symmetric difference by two. Each affected block modifies at most $n/k$ assignment bits, so the claimed Lipschitz bound follows.
\end{proof}

\begin{lemma}\label{lem:approx-transfer}
Assume $\mathcal{S}$ covers at least $(1-\varepsilon) \abs{U^+}$ elements and, by \Cref{lem:one-per-block}, contains exactly one set per block. Then the decoded assignment $x = D(\mathcal{S})$ satisfies at least $(1-2\varepsilon) m$ clauses. In particular, this conclusion holds whenever $M \le m/k$, because then $\abs{U^+} = m + Mk \le 2m$ and all guards are covered.
\end{lemma}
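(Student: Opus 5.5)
The plan is a direct counting argument that splits the uncovered elements into guards and clause elements. By \Cref{lem:one-per-block}, $\mathcal S$ consists of exactly one set per block, say $\mathcal S = \set{S_{i,\alpha_i}}_{i=1}^k$. Each guard multiset $G_i$ is contained in $S_{i,\alpha_i}$, so every guard in $U_{\mathrm{gd}}$ is covered. The total number of uncovered elements is at most $\varepsilon\abs{U^+}$, and none of them are guards. Hence all uncovered elements lie in $U_{\mathrm{cl}}$, giving
\[
  \left|\left(\bigcup_{i} S_{i,\alpha_i}\right)\cap U_{\mathrm{cl}}\right| \ge m - \varepsilon\abs{U^+}.
\]

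Next, invoke \Cref{lem:cov=sat}. With $x = D(\mathcal S)$ the concatenation of the $\alpha_i$, the left-hand side above equals the number of clauses satisfied by $x$. So $x$ satisfies at least $m-\varepsilon\abs{U^+}$ clauses.

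It remains to bound $\abs{U^+}$. Since $\abs{U^+} = \abs{U_{\mathrm{cl}}} + \abs{U_{\mathrm{gd}}} = m + Mk$ (counting guard multisets with multiplicity), the assumption $M\le m/k$ gives $\abs{U^+}\le 2m$. Therefore $x$ satisfies at least $m - 2\varepsilon m = (1-2\varepsilon)m$ clauses.

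The only point needing care is consistency of hypotheses. \Cref{lem:one-per-block} needs $M>\varepsilon\abs{U^+}$, while the final bound uses $M\le m/k$. With the paper's choice $M=m/k$, we get $\abs{U^+}=2m$, so the first condition reads $m/k>2\varepsilon m$, i.e.\ $k<1/(2\varepsilon)$. This is exactly the regime of \Cref{thm:coverage-intro}, so both hypotheses hold simultaneously and nothing further is required.
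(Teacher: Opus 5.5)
Your proof is correct and is essentially the paper's argument. Both note that every guard is covered, so all uncovered elements are clause elements; both then apply \Cref{lem:cov=sat} and use $\abs{U^+} = m + Mk \le 2m$. Your intermediate bound $m - \varepsilon\abs{U^+}$ is just a rearrangement of the paper's $(1-\varepsilon)m - \varepsilon Mk$.
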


\begin{proof}
All $Mk$ guards are covered. Hence the number of covered clause elements is at least
$(1-\varepsilon) \abs{U^+} - Mk = (1-\varepsilon)(m + Mk) - Mk = (1-\varepsilon) m - \varepsilon Mk \ge (1-2\varepsilon) m$
whenever $Mk \le m$. By \Cref{lem:cov=sat}, this equals the number of satisfied clauses.
\end{proof}

\subsection{Sensitivity Transfer}
We now relate algorithms for the maximum $k$-coverage problem to algorithms for Max E3SAT.

\begin{theorem}[Restatement of Theorem~\ref*{thm:coverage-intro}]\label{thm:coverage-transfer}
There exists $\varepsilon_0 > 0$ such that, for every $\varepsilon \in (0, \varepsilon_0)$ and every $k < 1/(2\varepsilon)$, any randomized $(1-\varepsilon)$-approximation algorithm $\mathcal{B}$ for the maximum $k$-coverage problem (in expectation) has an instance $I^+$ with $\OPT(I^+) = \abs{U^+}$ on which
\[
  \mathrm{Sens}(\mathcal{B},I^+) = \Omega(k),
\]
with an implicit constant that is universal.
\end{theorem}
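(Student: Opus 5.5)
The plan is to reduce to \Cref{cor:e3sat-lb} via the block gadget $R_k^+$. Given a $(1-\varepsilon)$-approximation algorithm $\mathcal B$ for maximum $k$-coverage, we build a randomized Max E3SAT algorithm $\mathcal A$. On input $\Phi$, it runs $\mathcal B$ on $R_k^+(\Phi)$ with $M=m/k$. If the output $\mathcal S$ has exactly one set per block, $\mathcal A$ outputs $D(\mathcal S)$; otherwise it outputs a fixed default assignment, say all zeros.

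The first step is to verify that $\mathcal A$ is a $(1-\varepsilon_0)$-approximation on satisfiable $\Phi$, in expectation.
\begin{itemize}
\item If $\Phi$ is satisfiable, then $\OPT(R_k^+(\Phi))=|U^+|=2m$: pick the blocks of a satisfying assignment and apply \Cref{lem:cov=sat}.
\item The guarantee for $\mathcal B$ is only in expectation, so I will use Markov's inequality on the number of uncovered elements. With probability at least $1-\sqrt{\varepsilon}$, the output covers at least $(1-\sqrt\varepsilon)|U^+|$ elements.
\item Because $k<1/(2\varepsilon)$ (adjusting constants to $\sqrt\varepsilon$, or instead requiring $M>\sqrt\varepsilon\,|U^+|=2\sqrt\varepsilon\, m$), \Cref{lem:one-per-block} applies. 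By \Cref{lem:approx-transfer}, $D(\mathcal S)$ then satisfies at least $(1-2\sqrt\varepsilon)m$ clauses.
\item Hence $\mathcal A$ satisfies an expected $(1-3\sqrt\varepsilon)$ fraction of clauses, and choosing $\varepsilon_0$ small makes $3\sqrt{\varepsilon_0}$ below the constant of \Cref{cor:e3sat-lb}.
\end{itemize}
A cleaner route may exist: the uncovered count is at least $M$ whenever some block is missing. Then $\Pr[\text{bad}]\le \varepsilon|U^+|/M=2\varepsilon k<1$, and this uses the hypothesis $k<1/(2\varepsilon)$ exactly. I would try to make this tight enough, possibly by reducing $\varepsilon_0$ and conditioning.

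Next comes the sensitivity transfer. \Cref{cor:e3sat-lb} gives a satisfiable $\Phi$ and a clause replacement $\Phi'=\Phi-C_i+C'$ with $\EMD_{d_{\mathrm H}}(\mathcal A(\Phi),\mathcal A(\Phi'))\ge c n$. By \Cref{lem:neighbor}, $R_k^+(\Phi)$ and $R_k^+(\Phi')$ are neighbors that differ by reprogramming a single element. Take an optimal coupling of $\mathcal B(R_k^+(\Phi))$ and $\mathcal B(R_k^+(\Phi'))$ and push it through the decoding map. On pairs where both outputs are good, \Cref{lem:decoder-lip} bounds $d_{\mathrm H}(D(\mathcal S),D(\mathcal S'))$ by $\frac{n}{2k}|\mathcal S\triangle\mathcal S'|$. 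On pairs where either output is bad, we pay at most $n$, and this happens with probability at most $p_{\mathrm{bad}}+p'_{\mathrm{bad}}$. This yields
\[
cn\le \frac{n}{2k}\,\EMD\bigl(\mathcal B(R_k^+(\Phi)),\mathcal B(R_k^+(\Phi'))\bigr)+ n(p_{\mathrm{bad}}+p'_{\mathrm{bad}}),
\]
which gives sensitivity $\Omega(k)$ provided $p_{\mathrm{bad}}$ is at most a small constant fraction of $c$.

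The obstacle is exactly controlling $p_{\mathrm{bad}}$ below $c/4$ uniformly, using only an in-expectation approximation guarantee. To handle it, I will enlarge the guard multiplicity relative to $\varepsilon$: keeping $M=m/k$, the Markov bound gives $p_{\mathrm{bad}}\le 2\varepsilon k$. If that is not below $c/4$ for all $k<1/(2\varepsilon)$, I would instead lower bound $\EMD$ directly, since every bad output is at Hamming distance at least $1$ from any good one. Alternatively, I would replace the default output by a partial decoding that fills missing blocks arbitrarily, so that bad events cost only about $n/k$ per missing block. Under this scheme the expected cost is bounded by the expected number of missing blocks, which is at most $\varepsilon|U^+|/M\cdot(n/k)=O(\varepsilon n)$. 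That is at most $cn/4$ for small $\varepsilon_0$, and it removes the dependence on $k$ entirely. I would adopt this variant. The decoder is then $\frac{n}{k}$-Lipschitz in symmetric difference, up to the missing-block terms, so the final bound is $\mathrm{Sens}(\mathcal B,I^+)\ge (c/2)k$ with universal constants.

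One caveat is that the neighbor relation for coverage allows arbitrary reprogramming of a single element. This is why the clause-replacement version in \Cref{cor:e3sat-lb}, rather than deletion, is the right tool.
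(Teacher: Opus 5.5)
\paragraph{Verdict: genuine gap in the approximation step.} Your sensitivity-transfer step is fine, and it is cleaner than you state. Let $D'$ keep a canonical set in each block that $\mathcal S$ meets and use a default on blocks it misses. Every block on which $D'(\mathcal S)$ and $D'(\mathcal S')$ differ contains a set of $\mathcal S\triangle\mathcal S'$, so $D'$ is globally $\frac{n}{k}$-Lipschitz, with no missing-block error term and no conditioning.

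The gap is that you never show the composed Max E3SAT algorithm meets the $(1-\varepsilon_0)$ hypothesis of \Cref{cor:e3sat-lb} for all $k<1/(2\varepsilon)$.
\begin{itemize}
\item \emph{Default-output decoder.} You get $\mathbb E[\mathrm{unsat}]\le 2\varepsilon m+m\,p_{\mathrm{bad}}$. The Markov bound is $p_{\mathrm{bad}}\le\varepsilon\abs{U^+}/M=\varepsilon k+\varepsilon m/M$. This exceeds $\varepsilon k$ for every choice of $M$, because guards inflate $\abs{U^+}$ too. So enlarging the guard multiplicity cannot help, and the bound is a constant when $k=\Theta(1/\varepsilon)$. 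This is not slack in the bound: an in-expectation $(1-\varepsilon)$-approximation may legitimately skip a block with probability of order $\varepsilon k$. Hence the default-output algorithm need not satisfy the corollary's hypothesis.
\item \emph{The $\sqrt{\varepsilon}$ route.} It needs $k<1/(2\sqrt{\varepsilon})$, which does not cover the stated range.
\item \emph{Partial decoder.} You use the $O(\varepsilon n)$ estimate only to bound an EMD error term, and you never redo the clause count. That count does not follow from coverage. A missing block forces some other block to hold two or more sets, the decoder keeps only one, and every clause covered only by a discarded set is lost. For an arbitrary satisfiable formula a single block can meet $\Theta(m)$ clauses (a variable occurring everywhere), so this loss can be a constant fraction of $m$. \Cref{cor:e3sat-lb} as stated demands the guarantee on every satisfiable formula.
\end{itemize}

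\paragraph{The missing ingredient: bounded occurrence.} The hard formulas behind \Cref{cor:e3sat-lb} are gadget images of the constant-degree LTC instances. Hence each block meets $O(n/k)$ clauses and $m=\Theta(n)$. The number of discarded sets is at most the number $Z$ of missing blocks, and $\mathbb E[Z]\le\varepsilon\abs{U^+}/M=2\varepsilon k$. This gives
\[
\mathbb E[\mathrm{unsat}]\le 2\varepsilon m+2\varepsilon k\cdot O(n/k)=O(\varepsilon m),
\]
uniformly in $k$. To use this, you must observe that the proof of \Cref{cor:e3sat-lb} (through \Cref{lem:m-approximation}) invokes the approximation guarantee only on that bounded-occurrence family, rather than citing the corollary as a black box.

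With this added, your argument is complete and more careful than the paper's. The paper applies \Cref{lem:one-per-block} directly to the output of an in-expectation algorithm. In Step~4 it conditions on the one-per-block event as if it held with probability one, which is exactly the issue you flagged. Your version would also make the hypothesis $k<1/(2\varepsilon)$ unnecessary.
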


\begin{proof}
Let $\varepsilon_0 \in (0,1)$ and $c > 0$ be the constants supplied by \Cref{cor:e3sat-lb}.
Fix $\varepsilon \in (0, \varepsilon_0)$, $k < 1/(2\varepsilon)$, and a randomized $(1-\varepsilon)$-approximation algorithm $\mathcal{B}$.

\textit{Step 1: Build an E3SAT algorithm from $\mathcal{B}$.}
Define $\mathcal{A}_{\mathrm{sat}} := D \circ \mathcal{B} \circ R_k^+$ and set $M := m/k$ in the reduction $R_k^+$.

\smallskip
\textit{Step 2: $\mathcal{A}_{\mathrm{sat}}$ is $(1-\varepsilon_0)$-accurate on satisfiable inputs.}
When $\Phi$ is satisfiable, choosing the set whose block restriction matches a satisfying assignment yields perfect coverage of $R_k^+(\Phi)$. Because $M = m/k$ and $k < 1/(2\varepsilon)$, we have
$M > \varepsilon \abs{U^+}$ (indeed $\abs{U^+} = m + Mk \le 2m$ and $M \ge m/k > 2\varepsilon m$). Therefore \Cref{lem:one-per-block} ensures that $\mathcal{B}(R_k^+(\Phi))$ selects exactly one set per block, and \Cref{lem:approx-transfer} gives that
$D \circ \mathcal{B}(R_k^+(\Phi))$ satisfies at least $(1-2\varepsilon) m \ge (1-\varepsilon_0) m$ clauses in expectation.

\smallskip
\textit{Step 3: Invoke the Max E3SAT sensitivity bound.}
We choose $\varepsilon_0$ to be small enough.
Then by \eqref{eq:e3sat-lb}, there exist a satisfiable $\Phi$ and a clause $C_t'$ on the same variable triple as some $C_t \in \mathcal C$ such that the instance $\Phi' := \Phi - C_t + C_t'$ satisfies
\[
  \EMD_{d_{\mathrm H}}(\mathcal{A}_{\mathrm{sat}}(\Phi), \mathcal{A}_{\mathrm{sat}}(\Phi')) \ge c n,
\]
where $n = \abs{V}$. Let $I^+ = R_k^+(\Phi)$ and $I^{+\prime} = R_k^+(\Phi')$. By \Cref{lem:neighbor}, $I^{+\prime}$ is obtained from $I^+$ by reprogramming the single clause element $e_t$, so it is a valid neighbor in the maximum $k$-coverage problem.

\smallskip
\textit{Step 4: EMD contraction through the decoder.}
Let $\mathcal{F}^{(k)}$ denote the set of $k$-subfamilies of $\mathcal{F}$. Conditioning on the event that $\mathcal{B}$ selects exactly one set per block (which holds for both $I^+$ and $I^{+\prime}$ by the guard argument above), \Cref{lem:decoder-lip} shows that the decoder $D$ is $n/(2k)$-Lipschitz from $(\mathcal{F}^{(k)}, d_{\mathrm H})$ to $(\{0,1\}^{V}, d_{\mathrm H})$. Let $D_{\ast}\mu$ denote the pushforward of a distribution $\mu$ under $D$. The Lipschitz contraction property for earthmover distance therefore implies
\[
  \EMD_{d_{\mathrm H}}(D_{\ast} \mathcal{B}(I^+), D_{\ast} \mathcal{B}(I^{+\prime}))
  \le \frac{n}{2k} \cdot \EMD_{d_{\mathrm H}}(\mathcal{B}(I^+), \mathcal{B}(I^{+\prime})).
\]
Combining with Step~3 and rearranging gives $\EMD_{d_{\mathrm H}}(\mathcal{B}(I^+), \mathcal{B}(I^{+\prime})) \ge \Omega(k)$, which is precisely $\mathrm{Sens}(\mathcal{B},I^+) = \Omega(k)$. Because $R_k^+(\Phi)$ is perfectly coverable, this concludes the promised instance.
\end{proof}

\section{A Boolean-Function View of Randomized Algorithms for Satisfiable Max E3SAT}
\label{sec:sat-cumulative-influence}

In this section we recast the instability results from \Cref{sec:tv-distance,sec:sat} in the
language of Boolean functions. We study randomized algorithms that, given a satisfiable Max E3SAT
formula, output an assignment satisfying almost all clauses in expectation. Our hard instances are
not arbitrary unrelated formulas: they form an explicit local subcube inside the ordinary SAT input
space, indexed by messages $m \in \{0,1\}^k$. All formulas in the family share the same variables
and the same ordered clause skeleton, and flipping one bit of $m$ changes only $O(1)$ clause signs.
Once the algorithm's random seed is fixed, each output bit becomes an ordinary Boolean function of $m$. Its influence is the sum, over
all message coordinates, of the probability that flipping that coordinate changes the output bit.
The main conclusion is that, on this explicit satisfiable family, these output bits cannot all have
small influence: averaged over the random seed, the sum of their influences is necessarily large.
This viewpoint is what later yields the junta, pseudo-junta, decision-tree, and bounded-depth
circuit corollaries.

We say that a randomized algorithm $A$ is \emph{one-sided} $(1-\eta)$-approximate for Max E3SAT
if $A$ is defined on every $3$-CNF formula and, for every satisfiable input formula $\Phi$,
\[
\mathbb{E}_{a \sim A(\Phi)}
\bigl[
\#\text{ clauses of }\Phi\text{ satisfied by }a
\bigr]
\ge (1-\eta)|\Phi|.
\]
Equivalently, if $R$ denotes the internal random seed of $A$, then
\[
\mathbb{E}_{R}
\bigl[
\#\text{ clauses of }\Phi\text{ satisfied by }A(\Phi;R)
\bigr]
\ge (1-\eta)|\Phi|
\]
for every satisfiable $\Phi$. No guarantee is required on unsatisfiable inputs.
When $A$ happens to be deterministic, the seed $R$ is trivial and all of the statements below
reduce to the deterministic version.

\paragraph{The standard formula encoding.}
Fix integers $n$ and $M$, and regard a $3$-CNF formula on variable set $[n]$ with exactly
$M$ clauses as an ordered list
\[
\Phi = (C_1,\dots,C_M).
\]
Write each clause in the canonical order of its three variable indices, and encode each literal
by the binary representation of its variable index in $[n]$ together with one sign bit indicating
whether it is negated. Let
\[
\operatorname{Enc}_{n,M}(\Phi) \in \{0,1\}^{N},
\qquad
N := M\bigl(3\lceil \log_2 n\rceil + 3\bigr),
\]
denote the concatenation of the $M$ clause encodings. Throughout this section, when we write
$A(\Phi)$ for such a randomized algorithm $A$, we mean the output distribution
$A(\operatorname{Enc}_{n,M}(\Phi))$ under this standard clause-by-clause encoding. Thus the ambient
input space is the completely ordinary bit representation of $3$-CNF formulas; the parameter
$m \in \{0,1\}^k$ used below merely indexes an explicit hard subfamily inside that ambient space.

\subsection{From Instability to Influence}

We begin with the randomized E3LIN2 instability statement supplied by the LTC framework.

\begin{lemma}
\label{lem:e3lin2-endpoint-instability}
There exist constants $\eta_{\mathrm{lin}}, c_{\mathrm{lin}} > 0$ and an explicit family
$\{I(m)\}_{m \in \{0,1\}^k}$ of satisfiable bounded-occurrence Max E3LIN2 instances on
$n = \Theta(k)$ variables such that every randomized algorithm $B$ satisfying
\[
\mathbb{E}_{R}
\bigl[
\#\text{ equations of }I(m)\text{ satisfied by }B(I(m);R)
\bigr]
\ge (1-\eta_{\mathrm{lin}})|I(m)|
\qquad
\forall m \in \{0,1\}^k
\]
obeys
\[
\mathbb{E}_{m \in \{0,1\}^k}
\mathbb{E}_{i \in [k]}
\Bigl[
 \operatorname{EMD}_{d_{\mathrm H}}\bigl(B(I(m)), B(I(m^{\oplus i}))\bigr)
\Bigr]
\ge c_{\mathrm{lin}}\, n,
\]
where $m^{\oplus i}$ denotes $m$ with its $i$th bit flipped.
\end{lemma}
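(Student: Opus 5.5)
We instantiate \Cref{lem:emd-expectation} with the binary systematic LTCs of \Cref{lem:c3-code-new-tests}. Let $\tilde C_n$ be one of these codes, made systematic by \Cref{lem:lin-sys}, with encoder $f$, message length $k$, and blocklength $n=k/r$. Its parameters $r,\delta,\kappa,\theta$ are absolute constants and it has query complexity $3$. We set $I(m):=\mathcal I_{\tilde C_n}(m)$ for $m\in\{0,1\}^k$. By \Cref{def:CSP-LTC}, each $I(m)$ is a system of parity equations. The message bits are fixed to constants, so an equation $x_a+x_b+x_c=0$ containing message coordinates turns into an equation on the remaining parity variables with right-hand side determined by $m$. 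Each $I(m)$ is satisfiable, witnessed by the parity part of $f(m)$. Each variable occurs in at most $\theta$ equations, so the family is bounded-occurrence, and it is explicit because the LRCC construction and the gadget are explicit. The number of variables is $n-k=\Theta(k)$.

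One technicality is that constraints touching message coordinates may have fewer than three remaining variables. To land exactly in E3LIN2 we either pad such constraints with fresh auxiliary variables tied by extra equations, or simply observe that the statement only needs bounded-occurrence linear instances. I would handle this with the same chaining/padding gadget used in \Cref{lem:c3-code-new-tests}, noting that it changes all parameters by constant factors only. A constraint whose variables are all message coordinates becomes a constant equation that is true for every $m$, since each $m$ has a codeword extension. Such constraints can be dropped; removing them rescales $\varepsilon$ only by a constant, because each one corresponds to a test and there are at most $\theta k$ of them.

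Next, set $\eta_{\mathrm{lin}}:=\varepsilon$ for a small constant $\varepsilon$ chosen so that
\[
\delta\Bigl(1-\tfrac{2\varepsilon}{r\kappa}\Bigr)-\tfrac{2\varepsilon}{\kappa}\ge \tfrac{\delta}{2}.
\]
The hypothesis on $B$ is exactly the expected $(1-\varepsilon)$-approximation property used in \Cref{lem:m-approximation}, which only uses expectations. Hence \Cref{lem:emd-expectation} applies to $B$ and gives
\[
\mathbb{E}_{m}\mathbb{E}_{m'\sim m}\operatorname{EMD}_{\delta_{\mathrm H}}\ge \delta/2-1/n.
\]
Over $\mathbb F_2$, the neighbour $m'\sim m$ is $m^{\oplus i}$ with $i$ uniform, so the two averages coincide. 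Multiplying by $n$ converts to $d_{\mathrm H}$ and yields $\ge (\delta/2)n-1\ge c_{\mathrm{lin}}n$ for large $n$. The EMD in \Cref{lem:emd-expectation} is between outputs of $\mathcal A$ on the parity variables. If one measures relative to the number of CSP variables instead of $n$, this only changes constants.

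The main obstacle is purely bookkeeping. First, the instances must be genuinely E3LIN2 after fixing message bits, and the padding must not destroy the testability constant. Second, the approximation guarantee must remain valid after dropping or padding constraints; an algorithm achieving $1-\eta$ on the padded instance achieves $1-O(\eta)$ on the tester's constraint set. Everything else is a direct reading of \Cref{lem:emd-expectation}.
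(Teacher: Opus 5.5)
Your proposal is correct and is essentially the paper's proof: instantiate \Cref{lem:emd-expectation} with the systematic binary LRCC-derived code of \Cref{lem:c3-code-new-tests}, choose $\eta_{\mathrm{lin}}$ small enough that the right-hand side is a positive constant, and multiply by the block length, using that binary neighbours are exactly the $m^{\oplus i}$ and that $n-k=\Theta(n)$. Your extra bookkeeping addresses a point the paper glosses over, namely that tests touching message coordinates lose arity once $m$ is fixed. Two small caveats on that bookkeeping. First, the padding works only if each short equation gets its own fresh, individually pinned auxiliary variables, so that every original violation forces a violation inside its own $O(1)$-size gadget; the chaining gadget of \Cref{lem:c3-code-new-tests} reduces arity rather than padding it. Second, the remark about dropping all-message tests is moot, since no nonzero parity check can be supported on an information set.
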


\begin{proof}
This is exactly the averaged earth-mover lower bound obtained from
\Cref{lem:emd-expectation} using the binary LRCC family from
\Cref{subsec:lrcc-max-e3lin2}. Since the rate, distance, detection probability, and
variable degree are all absolute constants, the right-hand side of
\Cref{lem:emd-expectation} is bounded below by a positive constant for sufficiently small
$\eta_{\mathrm{lin}}$. Multiplying by the block length converts the normalized Hamming metric
to $d_{\mathrm H}$, and the resulting Max E3LIN2 instances have $\Theta(n)$ output coordinates.
Hence the averaged instability is $\Omega(n)$.
\end{proof}

We now pass from E3LIN2 to E3SAT using the standard four-clause gadget. Because the E3LIN2 family
already comes with a fixed ordered list of equation scopes, this step preserves a common template:
varying $m$ changes only the right-hand-side bits of $O(1)$ equations, hence only the sign patterns
in $O(1)$ gadgets.
For each equation
\[
x \oplus y \oplus z = b,
\]
let $G_b(x,y,z)$ denote the standard $4$-clause gadget on the same variable triple
with the property that any assignment satisfies all four clauses iff it satisfies
the equation, and otherwise it satisfies exactly three clauses.

\begin{theorem}[One-sided endpoint instability for ordinarily encoded satisfiable Max E3SAT]
\label{thm:sat-one-sided-cumulative-influence}
There exist constants $\eta, c, \Delta > 0$, an integer $L = O(1)$, and an explicit family
$\{\Phi(m)\}_{m \in \{0,1\}^k}$ of satisfiable bounded-occurrence $3$-CNF formulas on the
same variable set $[n]$ with the same number $M = O(n)$ of clauses, where $n = \Theta(k)$,
such that the following holds.

Let $A$ be any randomized algorithm that is one-sided $(1-\eta)$-approximate for Max E3SAT.
Then:

\begin{enumerate}
\item \label{item:sat-endpoint-version}\textbf{Endpoint instability.}
\[
\mathbb{E}_{m \in \{0,1\}^k}
\mathbb{E}_{i \in [k]}
\Bigl[
 \operatorname{EMD}_{d_{\mathrm H}}\bigl(A(\Phi(m)), A(\Phi(m^{\oplus i}))\bigr)
\Bigr]
\ge c\, n.
\]

\item \label{item:sat-single-clause-path-version}\textbf{Single-clause path version.}
For every $m \in \{0,1\}^k$ and $i \in [k]$, there exists a sequence
\[
\Phi(m) = \Psi^{m,i}_0,\;
\Psi^{m,i}_1,\;
\dots,\;
\Psi^{m,i}_L = \Phi(m^{\oplus i})
\]
such that each consecutive pair $\Psi^{m,i}_{t-1}, \Psi^{m,i}_t$ differs by replacing
a single clause by another clause on the same variable triple, and
\[
\mathbb{E}_{m \in \{0,1\}^k}
\mathbb{E}_{i \in [k]}
\Biggl[
\sum_{t=1}^L
 \operatorname{EMD}_{d_{\mathrm H}}\bigl(A(\Psi^{m,i}_{t-1}), A(\Psi^{m,i}_t)\bigr)
\Biggr]
\ge c\, n.
\]
\end{enumerate}

Moreover, the hard family is local inside the ordinary SAT input space: for every
$m \in \{0,1\}^k$ and $i \in [k]$,
\[
d_{\mathrm H}\bigl(\operatorname{Enc}_{n,M}(\Phi(m)),\operatorname{Enc}_{n,M}(\Phi(m^{\oplus i}))\bigr)
\le \Delta,
\]
and each adjacent pair $\Psi^{m,i}_{t-1},\Psi^{m,i}_{t}$ in
item~\ref{item:sat-single-clause-path-version}
differs in at most three input bits under $\operatorname{Enc}_{n,M}$.
In particular, the second statement is \emph{one-sided}: only the endpoint formulas
$\Phi(m)$ and $\Phi(m^{\oplus i})$ are required to be satisfiable; the intermediate
formulas $\Psi^{m,i}_t$ may be arbitrary.
\end{theorem}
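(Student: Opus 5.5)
We define $\Phi(m)$ by applying the gadget $G_b$ to each equation of the E3LIN2 instance $I(m)$ from \Cref{lem:e3lin2-endpoint-instability}, and then reduce both items to that lemma.

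\textbf{Construction.} Fix the explicit family $\{I(m)\}$ of \Cref{lem:e3lin2-endpoint-instability}. Every $I(m)$ has the same ordered list of scopes, and only the right-hand sides $b$ depend on $m$. By \Cref{def:CSP-LTC} and bounded variable degree $\theta$, flipping $m_i$ changes at most $\theta$ right-hand sides. For each equation $x\oplus y\oplus z=b$ we place the four clauses $G_b(x,y,z)$ at a fixed position in the clause list, so $\Phi(m)$ has $M=4|I(m)|=O(n)$ clauses on the common variable set.

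Each $\Phi(m)$ is satisfiable, since the codeword $f(m)$ satisfies $I(m)$ and hence all gadgets. Bounded occurrence is inherited from $I(m)$. Repeated variables inside an equation do not arise after the chain gadget in \Cref{lem:c3-code-new-tests}, or can be removed by padding.

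\textbf{Locality.} Changing $b$ to $1-b$ in $G_b$ negates the literals within the same triple. In the encoding $\operatorname{Enc}_{n,M}$ this touches only sign bits. Hence $\Phi(m)$ and $\Phi(m^{\oplus i})$ differ in at most $\theta\cdot 4\cdot 3=:\Delta$ input bits.

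\textbf{Endpoint instability.} Given $A$, define $B(I;R)$ as $A(\Phi_I;R)$ restricted to the original variables, where $\Phi_I$ is the gadget image of $I$. No auxiliary variables are introduced, so restriction is the identity. Suppose an assignment violates $u$ equations of $I(m)$. Then it satisfies exactly $M-u$ clauses of $\Phi(m)$. One-sided $(1-\eta)$-approximation therefore gives
\[
\mathbb E\bigl[\#\text{unsatisfied equations}\bigr]\le \eta M=4\eta|I(m)|.
\]
Choosing $\eta=\eta_{\mathrm{lin}}/4$, the algorithm $B$ meets the hypothesis of \Cref{lem:e3lin2-endpoint-instability}. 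Since $B(I(m))=A(\Phi(m))$ as distributions, item~1 follows with $c=c_{\mathrm{lin}}$.

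\textbf{Single-clause path.} We go from $\Phi(m)$ to $\Phi(m^{\oplus i})$ by replacing the clauses of $G_b$ by those of $G_{1-b}$, one clause at a time, in each of the at most $\theta$ affected gadgets. This uses $L\le 4\theta=O(1)$ steps; if fewer steps are needed, we pad with repeated identical formulas, which contribute zero. Each step replaces one clause by another on the same triple and changes at most three sign bits.

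The intermediate formulas may be unsatisfiable, so nothing is claimed about $A$ on them. The bound needs no such claim: the triangle inequality for $\operatorname{EMD}$ gives
\[
\sum_{t=1}^{L}\operatorname{EMD}\bigl(A(\Psi_{t-1}),A(\Psi_t)\bigr)\ge \operatorname{EMD}\bigl(A(\Phi(m)),A(\Phi(m^{\oplus i}))\bigr).
\]
Averaging over $m$ and $i$ and applying item~1 yields item~2.

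\textbf{Main obstacle.} The delicate step is transferring the approximation guarantee exactly. This requires that every gadget has exactly four clauses independent of $b$, so that the gadget loss is exactly the number of violated equations. It also requires that one-sidedness suffices, which holds because \Cref{lem:e3lin2-endpoint-instability} only needs the guarantee on the satisfiable instances $I(m)$.
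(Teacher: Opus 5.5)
Your proposal is correct and follows the paper's proof essentially step for step. You gadget-encode each $I(m)$ with a fixed clause order and transfer the approximation guarantee via exact clause accounting with $\eta=\eta_{\mathrm{lin}}/4$, so that item~1 follows from \Cref{lem:e3lin2-endpoint-instability} with $c=c_{\mathrm{lin}}$; you bound the encoding distance by $\Delta=12\theta$ because only sign bits of at most $\theta$ gadgets change; and you obtain item~2 from a padded path of length $L=4\theta$ plus the triangle inequality for $\operatorname{EMD}$.
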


\begin{proof}
Let $\{I(m)\}_{m \in \{0,1\}^k}$ be the satisfiable Max E3LIN2 family from
\Cref{lem:e3lin2-endpoint-instability}. For each $m$, define $\Phi(m)$ by
replacing every equation of $I(m)$ with the corresponding four-clause gadget.
Order the clauses of $\Phi(m)$ by first fixing the order of the E3LIN2 constraints and then,
inside each gadget, fixing the order of its four clauses.
The key point is that this ordered list of equation scopes is independent of $m$; only the
right-hand-side bits vary with the message.
Because the reduction is local and each variable appears in only constantly many
equations in the E3LIN2 family, the resulting formulas are explicit, satisfiable,
and bounded-occurrence, with $M = O(n)$ clauses.

Set $\eta := \eta_{\mathrm{lin}}/4$, and let $A$ be a randomized one-sided $(1-\eta)$-approximate
algorithm for Max E3SAT. Define
\[
B(I(m)) := A(\Phi(m)).
\]
Let $Y$ be a random assignment drawn from $A(\Phi(m))$, and let $s(Y)$ denote the number of
E3LIN2 equations of $I(m)$ satisfied by $Y$. Since each satisfied equation contributes $4$
satisfied clauses and each unsatisfied equation contributes exactly $3$, we have
\[
\#\text{ clauses of }\Phi(m)\text{ satisfied by }Y
=
3|I(m)| + s(Y).
\]
Therefore
\[
\mathbb{E}[s(Y)]
\ge
(1-\eta)\,4|I(m)| - 3|I(m)|
=
(1-4\eta)|I(m)|
=
(1-\eta_{\mathrm{lin}})|I(m)|.
\]
Thus $B$ satisfies the hypothesis of \Cref{lem:e3lin2-endpoint-instability},
and item~\ref{item:sat-endpoint-version} follows immediately with $c := c_{\mathrm{lin}}$.

It remains to justify the locality claims in the standard SAT input encoding.
By construction of the LTC-induced E3LIN2 family,
the two instances $I(m)$ and $I(m^{\oplus i})$ differ only on those local tests
that read the $i$th message coordinate, hence on at most $\theta$ equations,
where $\theta = O(1)$ is the variable degree of the tester.
For every fixed equation slot, the three variables appearing in that equation are
independent of $m$; only the right-hand-side bit may change with $m$.
Consequently, when such an equation is replaced by the corresponding E3SAT gadget,
all variable-index bits in the standard clause-listing encoding remain fixed,
and only the sign bits of the literals may change.
Each changed equation slot affects four clauses, each clause contributes three sign bits,
and therefore
\[
d_{\mathrm H}\bigl(\operatorname{Enc}_{n,M}(\Phi(m)),\operatorname{Enc}_{n,M}(\Phi(m^{\oplus i}))\bigr)
\le \Delta := 12\theta.
\]
This proves the bounded-difference claim for the endpoints.

For item~\ref{item:sat-single-clause-path-version}, fix $m$ and $i$.
Changing one E3LIN2 equation changes one four-clause gadget, and that gadget change can be
implemented by four single-clause replacements. Since the source and target clauses in each such
replacement use the same variable triple, their standard encodings differ only in the three sign bits
of that clause. Therefore there is a sequence
\[
\Phi(m)=\Psi^{m,i}_0,\Psi^{m,i}_1,\dots,\Psi^{m,i}_{L}=\Phi(m^{\oplus i})
\]
of length at most $L:=4\theta$ in which each step replaces a single clause
by another clause on the same variable triple. (If a shorter path exists, pad it
with repeated endpoint formulas.) By the triangle inequality for $\operatorname{EMD}_{d_{\mathrm H}}$,
\[
\operatorname{EMD}_{d_{\mathrm H}}\bigl(A(\Phi(m)), A(\Phi(m^{\oplus i}))\bigr)
\le
\sum_{t=1}^{L}
 \operatorname{EMD}_{d_{\mathrm H}}\bigl(A(\Psi^{m,i}_{t-1}), A(\Psi^{m,i}_t)\bigr).
\]
Averaging over $m$ and $i$ and using item~\ref{item:sat-endpoint-version} completes the proof.
\end{proof}

\paragraph{The hard embedding is itself simple.}
Define the explicit input embedding
\[
x(m) := \operatorname{Enc}_{n,M}(\Phi(m)) \in \{0,1\}^{N}.
\]
Each output bit of $x$ is either a fixed variable-index bit or a sign bit of a clause in one gadget,
and the sign pattern of that gadget is determined by a single local E3LIN2 test. Since the tester
query complexity is constant, each output bit of $x$ depends on only $O(1)$ message coordinates.
Thus $x$ is computable by a linear-size $NC^0$ circuit. This is important for the randomized
circuit corollary below: the hardness is not coming from a complicated front end, but from the
randomized algorithm itself.

Because the relevant input family and output space are finite, any such randomized algorithm can be
realized on a common finite probability space. We therefore fix a random seed $R$ and a realization
$A(\cdot;R)$ whose law is $A(\cdot)$ on every input. For the rest of the section write
\[
F_R(m) := A(x(m);R) = \bigl(f_{1,R}(m),\dots,f_{n,R}(m)\bigr) \in \{0,1\}^n.
\]
For each fixed seed $r$ and output coordinate $v$, the slice
$f_{v,r} : \{0,1\}^k \to \{0,1\}$ is an ordinary Boolean function.
For a Boolean function $f : \{0,1\}^k \to \{0,1\}$, write
\[
\operatorname{Inf}(f)
:=
\sum_{i=1}^k
\Pr[m \in \{0,1\}^k]{f(m)\neq f(m^{\oplus i})}.
\]
We call
\[
\operatorname{SCInf}(A)
:=
\mathbb{E}_{R}
\Bigl[
\sum_{v=1}^n \operatorname{Inf}(f_{v,R})
\Bigr]
\]
the \emph{seed-averaged cumulative influence} of the pullback of $A$ to the hard family.
When $A$ is deterministic, this is exactly the ordinary cumulative influence from the deterministic
version of the section.

\paragraph{Why the pullback is the right object.}
The theorem above is already stated in the standard formula-input model.
We pass to the cube coordinates $m \in \{0,1\}^k$ because the classical structural results we want
to invoke, i.e., junta theorems, pseudo-junta theorems, and average-sensitivity upper bounds for
bounded-depth circuits and formulas, are formulated for Boolean functions on product spaces.
Moreover, by \Cref{thm:sat-one-sided-cumulative-influence}, adjacent points of the hard cube map
to formulas whose standard encodings differ in only $O(1)$ bits, so the pullback does not create
artificial hardness: it is simply a convenient coordinate system for an explicit, genuinely local
subfamily of ordinary SAT inputs. The role of the random seed is only to turn the randomized
algorithm into deterministic slices to which these classical Boolean-function bounds apply.

\begin{corollary}[Seed-averaged cumulative influence]
\label{cor:sat-deterministic-cumulative-influence}
Under the hypotheses of \Cref{thm:sat-one-sided-cumulative-influence}, every randomized
algorithm $A$ that is one-sided $(1-\eta)$-approximate for Max E3SAT satisfies
\[
\operatorname{SCInf}(A)
=
\mathbb{E}_{R}
\Bigl[
\sum_{v=1}^n \operatorname{Inf}(f_{v,R})
\Bigr]
= \Omega(nk).
\]
Moreover, for the single-clause paths from
\Cref{thm:sat-one-sided-cumulative-influence},
\[
\sum_{v=1}^n \sum_{i=1}^k \sum_{t=1}^{L}
\Pr[m,R]{
A(\Psi^{m,i}_{t-1};R)_v \neq A(\Psi^{m,i}_t;R)_v
}
= \Omega(nk).
\]
\end{corollary}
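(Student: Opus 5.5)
The plan is to turn the endpoint and path EMD bounds of \Cref{thm:sat-one-sided-cumulative-influence} into statements about the fixed-seed realization $A(\cdot;R)$, using the coupling induced by a common seed. For the first claim, fix $m$ and $i$. Sampling one seed $R$ and outputting the pair $\bigl(A(x(m);R),A(x(m^{\oplus i});R)\bigr)$ is a coupling of $A(\Phi(m))$ and $A(\Phi(m^{\oplus i}))$. The EMD is an infimum over couplings, so
\[
\operatorname{EMD}_{d_{\mathrm H}}\bigl(A(\Phi(m)),A(\Phi(m^{\oplus i}))\bigr)\le \mathbb{E}_R\bigl[d_{\mathrm H}(F_R(m),F_R(m^{\oplus i}))\bigr]=\sum_{v=1}^n \Pr[R]{f_{v,R}(m)\neq f_{v,R}(m^{\oplus i})}.
\]
Averaging over $m$ uniform and $i\in[k]$ uniform, the right-hand side becomes $\frac1k\,\mathbb{E}_R\bigl[\sum_v\sum_i \Pr[m]{f_{v,R}(m)\neq f_{v,R}(m^{\oplus i})}\bigr]=\frac1k\operatorname{SCInf}(A)$. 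Here I exchange expectations over $R$ and $m$, which is legitimate by Fubini on a finite probability space.

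Item~1 of \Cref{thm:sat-one-sided-cumulative-influence} says the left-hand side, averaged over $m$ and $i$, is at least $cn$. Combining the two gives $\operatorname{SCInf}(A)\ge ckn=\Omega(nk)$.

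The path statement is handled the same way. For each step $t$, the common-seed coupling gives
\[
\operatorname{EMD}_{d_{\mathrm H}}\bigl(A(\Psi^{m,i}_{t-1}),A(\Psi^{m,i}_t)\bigr)\le \sum_v \Pr[R]{A(\Psi^{m,i}_{t-1};R)_v\neq A(\Psi^{m,i}_t;R)_v}.
\]
I sum over $t$, average over $m$ and $i$, and invoke item~2, which gives at least $cn$. Multiplying by $k$, since $\mathbb{E}_i=\frac1k\sum_i$, and writing the expectation over $m$ jointly with $R$ as $\Pr[m,R]{\cdot}$ yields the displayed $\Omega(nk)$ bound.

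The only point needing care is the realization. We need a single seed space $R$ such that $A(\cdot;R)$ has the correct law on every input simultaneously, including the intermediate formulas $\Psi^{m,i}_t$; the paper has already fixed such a realization on the finite input family. Beyond that, nothing substantive is hard: the argument just observes that common randomness is one particular coupling, and therefore upper bounds EMD. I do not expect any real obstacle.
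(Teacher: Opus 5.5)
Your proposal is correct and follows the paper's proof: the common-seed coupling upper-bounds each EMD, and expanding the Hamming distance coordinatewise and averaging over $m$ and $i$ gives $\frac{1}{k}\operatorname{SCInf}(A)$, so item~1 yields the $\Omega(nk)$ bound. The path version is handled the same way, summing over $t$ and applying item~2, exactly as you describe.
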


\begin{proof}
For the endpoint statement, couple the two output distributions
$A(x(m))$ and $A(x(m^{\oplus i}))$ by using the same random seed $R$ on both inputs.
This is a valid coupling, so
\[
\operatorname{EMD}_{d_{\mathrm H}}\bigl(A(x(m)), A(x(m^{\oplus i}))\bigr)
\le
\mathbb{E}_{R}
\Bigl[
 d_{\mathrm H}\bigl(F_R(m), F_R(m^{\oplus i})\bigr)
\Bigr].
\]
Averaging over $m$ and $i$ gives
\[
\mathbb{E}_{m,i}
\Bigl[
 \operatorname{EMD}_{d_{\mathrm H}}\bigl(A(x(m)), A(x(m^{\oplus i}))\bigr)
\Bigr]
\le
\mathbb{E}_{m,i,R}
\Bigl[
 d_{\mathrm H}\bigl(F_R(m), F_R(m^{\oplus i})\bigr)
\Bigr].
\]
Expanding Hamming distance coordinatewise and then unfolding the definition of total influence,
\[
\mathbb{E}_{m,i,R}
\Bigl[
 d_{\mathrm H}\bigl(F_R(m), F_R(m^{\oplus i})\bigr)
\Bigr]
=
\frac{1}{k}
\mathbb{E}_{R}
\Bigl[
\sum_{v=1}^n \operatorname{Inf}(f_{v,R})
\Bigr].
\]
Applying item~\ref{item:sat-endpoint-version} of
\Cref{thm:sat-one-sided-cumulative-influence} yields
\[
\frac{1}{k}
\mathbb{E}_{R}
\Bigl[
\sum_{v=1}^n \operatorname{Inf}(f_{v,R})
\Bigr]
= \Omega(n),
\]
which is equivalent to the displayed lower bound.

The path version is identical: for each $m,i,t$,
\[
\operatorname{EMD}_{d_{\mathrm H}}\bigl(A(\Psi^{m,i}_{t-1}), A(\Psi^{m,i}_{t})\bigr)
\le
\mathbb{E}_{R}
\Bigl[
 d_{\mathrm H}\bigl(A(\Psi^{m,i}_{t-1};R), A(\Psi^{m,i}_{t};R)\bigr)
\Bigr],
\]
and summing over $t$, averaging over $m$ and $i$, and expanding coordinatewise proves the second
claim.
\end{proof}

\begin{corollary}[Many highly influential output coordinates on average over the seed]
\label{cor:sat-many-influential-coordinates}
There exist constants $\alpha,\gamma>0$ such that
\[
\mathbb{E}_{R}
\Bigl[
\bigl|\{v \in [n] : \operatorname{Inf}(f_{v,R}) \ge \alpha k\}\bigr|
\Bigr]
\ge \gamma n.
\]
In particular, for at least one choice of the internal randomness $r$, at least $\gamma n$
output coordinates satisfy
\[
\operatorname{Inf}(f_{v,r}) \ge \alpha k.
\]
Hence on at least one seed slice, linearly many output coordinates depend on $\Omega(k)$
message coordinates and have deterministic decision-tree depth $\Omega(k)$.
\end{corollary}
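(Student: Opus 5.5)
The proof is a reverse-Markov (averaging) argument applied to the total-influence bound of \Cref{cor:sat-deterministic-cumulative-influence}, combined with the trivial upper bound $\operatorname{Inf}(f)\le k$ for every Boolean function on $\{0,1\}^k$. \Cref{cor:sat-deterministic-cumulative-influence} gives a constant $\beta>0$ with
\[
\mathbb{E}_R\Bigl[\sum_{v=1}^n \operatorname{Inf}(f_{v,R})\Bigr]\ge \beta nk .
\]
Set $\alpha:=\beta/2$. For a fixed seed $r$, let
\[
H_r:=\{v\in[n]:\operatorname{Inf}(f_{v,r})\ge \alpha k\}.
\]
Coordinates outside $H_r$ contribute less than $\alpha k$ each, and coordinates inside contribute at most $k$ each. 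This gives the pointwise bound
\[
\sum_{v=1}^n \operatorname{Inf}(f_{v,r})\le |H_r|\,k+(n-|H_r|)\,\alpha k\le |H_r|\,k+\alpha nk .
\]
Taking expectation over $R$ and comparing with the lower bound yields $k\,\mathbb{E}_R[|H_R|]\ge (\beta-\alpha)nk$. Hence $\mathbb{E}_R[|H_R|]\ge (\beta/2)n$, so we may take $\gamma:=\beta/2$. Since a random variable cannot always lie below its mean, some seed $r$ satisfies $|H_r|\ge\gamma n$. This proves the first two claims.

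For the final sentence, fix such a seed $r$ and a coordinate $v\in H_r$. Let $f=f_{v,r}$ and let $J$ be the set of message coordinates that $f$ depends on. Each summand $\Pr[m]{f(m)\neq f(m^{\oplus i})}$ is at most $1$, and it vanishes for $i\notin J$. So $|J|\ge \operatorname{Inf}(f)\ge\alpha k$, and $f$ depends on $\Omega(k)$ coordinates.

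For decision-tree depth, use the standard fact that a depth-$D$ decision tree computes a function with $\operatorname{Inf}(f)\le D$. To see this, note that on each input $m$, flipping a coordinate $i$ not queried along the root-to-leaf path of $m$ leaves the output unchanged. Hence the sensitivity of $f$ at $m$ is at most $D$, and $\operatorname{Inf}(f)$ equals the average sensitivity $\mathbb{E}_m[\#\{i: f(m)\neq f(m^{\oplus i})\}]$. Therefore the depth is at least $\alpha k=\Omega(k)$.

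There is no real obstacle. The only point needing care is to reconcile the normalization: the definition of $\operatorname{Inf}$ in the paper sums probabilities over $i$, which matches the bound $\operatorname{Inf}\le k$ and the average-sensitivity bound used for decision trees.
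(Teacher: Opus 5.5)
Your proposal is correct and follows the paper's proof essentially step for step: the same choice $\alpha=\beta/2$, the same sets $H_r$, the same pointwise bound $\sum_v \operatorname{Inf}(f_{v,r})\le |H_r|\,k+(n-|H_r|)\,\alpha k$ followed by taking expectations, and the same two facts (influence at most the number of relevant coordinates, and influence at most decision-tree depth) for the final sentence. Beyond the paper, you make $\gamma=\beta/2$ explicit and give a short proof of $\operatorname{Inf}(f)\le D(f)$, which the paper only cites as classical.
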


\begin{proof}
By \Cref{cor:sat-deterministic-cumulative-influence}, there exists a constant $c_0>0$ such that
\[
\mathbb{E}_{R}
\Bigl[
\sum_{v=1}^n \operatorname{Inf}(f_{v,R})
\Bigr]
\ge c_0 nk.
\]
Set $\alpha := c_0/2$, and for each seed $r$ define
\[
H_r := \{v \in [n] : \operatorname{Inf}(f_{v,r}) \ge \alpha k\}.
\]
Since $\operatorname{Inf}(f_{v,r}) \le k$ for every $v$ and $r$,
\[
\sum_{v=1}^n \operatorname{Inf}(f_{v,r})
\le
|H_r|\,k + (n-|H_r|)\alpha k.
\]
Taking expectations over $R$ and rearranging shows that
\[
\mathbb{E}_{R}[|H_R|] \ge \gamma n
\]
for some constant $\gamma>0$.

If a Boolean function depends on at most $T$ input coordinates, then its total influence
is at most $T$. Likewise, the deterministic decision-tree depth $D(f)$ satisfies
$\operatorname{Inf}(f)\le D(f)$. Therefore every coordinate counted by $H_r$ depends on
$\Omega(k)$ message bits and has decision-tree depth $\Omega(k)$.
\end{proof}

\subsection{A Meta-Corollary for Low-Influence Representation Classes}

For a class $\mathcal{G}_k$ of Boolean functions on $\{0,1\}^k$, define
\[
\operatorname{dist}(f,\mathcal{G}_k)
:=
\inf_{g \in \mathcal{G}_k}
\Pr[m \in \{0,1\}^k]{f(m)\neq g(m)}.
\]

\begin{corollary}[A meta-corollary for randomized low-influence classes]
\label{cor:sat-low-influence-meta}
Let $\alpha,\gamma>0$ be the constants from \Cref{cor:sat-many-influential-coordinates}.
Let $\mathcal{G}_k$ be any class of Boolean functions on $\{0,1\}^k$ satisfying
\[
\sup_{g \in \mathcal{G}_k} \operatorname{Inf}(g) \le \beta(k).
\]
If $\beta(k) \le \alpha k/2$, then
\[
\mathbb{E}_{R}
\Bigl[
\bigl|\{v \in [n] : \operatorname{dist}(f_{v,R},\mathcal{G}_k) \ge \alpha/4\}\bigr|
\Bigr]
\ge \gamma n.
\]
\end{corollary}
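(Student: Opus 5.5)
The plan is a direct comparison, for each seed, between the high influence of the functions counted in \Cref{cor:sat-many-influential-coordinates} and the low influence of members of $\mathcal{G}_k$. The bridge is a Lipschitz bound for total influence under Hamming distance on the cube. Specifically, I will show that for any $f,g:\{0,1\}^k\to\{0,1\}$,
\[
\abs{\operatorname{Inf}(f)-\operatorname{Inf}(g)} \le 2k\cdot \Pr[m]{f(m)\neq g(m)}.
\]
To prove it, fix $i$ and let $D=\{m: f(m)\neq g(m)\}$. If $f(m)\neq f(m^{\oplus i})$ but $g(m)=g(m^{\oplus i})$, then $m\in D$ or $m^{\oplus i}\in D$. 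Since $m\mapsto m^{\oplus i}$ preserves the uniform measure, the difference of the $i$th influence terms is at most $2\Pr{D}$. Summing over $i\in[k]$ gives the bound.

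With this in hand, fix a seed $r$ and a coordinate $v\in H_r$, where $H_r$ is the set from the proof of \Cref{cor:sat-many-influential-coordinates}, so that $\operatorname{Inf}(f_{v,r})\ge \alpha k$. For every $g\in\mathcal{G}_k$ we have $\operatorname{Inf}(g)\le\beta(k)\le \alpha k/2$. The Lipschitz bound then gives
\[
2k\Pr{f_{v,r}\neq g}\ge \alpha k-\alpha k/2 = \alpha k/2,
\]
so $\Pr{f_{v,r}\neq g}\ge \alpha/4$. Taking the infimum over $g$ yields $\operatorname{dist}(f_{v,r},\mathcal{G}_k)\ge\alpha/4$.

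It follows that for every seed $r$ the set $\{v:\operatorname{dist}(f_{v,r},\mathcal{G}_k)\ge\alpha/4\}$ contains $H_r$. Taking expectation over $R$ and applying $\mathbb{E}_R[\abs{H_R}]\ge\gamma n$ from \Cref{cor:sat-many-influential-coordinates} completes the argument.

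The only step needing care is the factor $2$ in the Lipschitz bound: each disagreement point can affect the $i$th edge through either endpoint. This factor is exactly what makes the threshold $\alpha/4$ match the hypothesis $\beta(k)\le\alpha k/2$. I will also note that the argument is pointwise in $r$, so it applies to the randomized algorithm without any further coupling.
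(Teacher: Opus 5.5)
Your proposal is correct and follows the paper's proof essentially step for step. You use the same Lipschitz bound $\abs{\operatorname{Inf}(f)-\operatorname{Inf}(g)}\le 2k\Pr[m]{f(m)\neq g(m)}$, the same seedwise containment of $\{v:\operatorname{Inf}(f_{v,r})\ge\alpha k\}$ in the far-from-$\mathcal{G}_k$ set, and then take the expectation over $R$. Your endpoint argument for the factor $2$ is in fact more explicit than the paper's.
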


\begin{proof}
For any two Boolean functions $f,g : \{0,1\}^k \to \{0,1\}$,
\[
\bigl|\operatorname{Inf}(f)-\operatorname{Inf}(g)\bigr|
\le 2k\Pr[m \in \{0,1\}^k]{f(m)\neq g(m)}.
\]
Indeed, for each $i \in [k]$,
\[
\Bigl|\Pr[m]{f(m)\neq f(m^{\oplus i})}-\Pr[m]{g(m)\neq g(m^{\oplus i})}\Bigr|
\le 2\Pr[m]{f(m)\neq g(m)},
\]
and summing over $i$ proves the claim.

Now fix a seed $r$ and let $v$ be any coordinate with $\operatorname{Inf}(f_{v,r}) \ge \alpha k$.
For every $g \in \mathcal{G}_k$ we then have
\[
\Pr[m]{f_{v,r}(m)\neq g(m)}
\ge
\frac{\operatorname{Inf}(f_{v,r})-\operatorname{Inf}(g)}{2k}
\ge
\frac{\alpha k-\beta(k)}{2k}
\ge \alpha/4.
\]
Thus
\[
\bigl|\{v \in [n] : \operatorname{dist}(f_{v,r},\mathcal{G}_k) \ge \alpha/4\}\bigr|
\ge
\bigl|\{v \in [n] : \operatorname{Inf}(f_{v,r}) \ge \alpha k\}\bigr|.
\]
Taking expectations over $R$ and applying
\Cref{cor:sat-many-influential-coordinates} proves the displayed lower bound.
The final sentence is immediate.
\end{proof}

The strength of \Cref{cor:sat-low-influence-meta} is that it turns
\Cref{cor:sat-deterministic-cumulative-influence} into a black-box lower-bound principle:\ \emph{any} coordinatewise representation class whose seedwise realizations have sublinear total
influence is automatically ruled out on linearly many output coordinates on average over the seed.
The next three subsections record concrete instantiations.

\subsection{Anti-Junta and Anti-Pseudo-Junta Consequences}

Recall that a Boolean function on $\{0,1\}^k$ is a \emph{$K$-junta} if it depends on at most $K$
input coordinates. We also use Hatami's notion of a \emph{$K$-pseudo-junta}~\cite{Hatami12},
which generalizes juntas by allowing the relevant coordinates to vary with the input in a
structured way. We will not need the full formal definition here; the only property used below is
that every $K$-pseudo-junta has total influence at most $2K$.

\begin{corollary}[On average over the seed, linearly many outputs are far from juntas and pseudo-juntas]
\label{cor:sat-anti-junta}
There exist constants $\gamma,\delta>0$ such that the following holds for all sufficiently large $k$.

\begin{enumerate}
\item For every $K \le \gamma k$, one has
\[
\mathbb{E}_{R}
\Bigl[
\bigl|\{v \in [n] : \operatorname{dist}(f_{v,R},\{\text{$K$-juntas}\}) \ge \delta\}\bigr|
\Bigr]
\ge \delta n.
\]

\item For every $K \le \gamma k$, one has
\[
\mathbb{E}_{R}
\Bigl[
\bigl|\{v \in [n] : \operatorname{dist}(f_{v,R},\{\text{$K$-pseudo-juntas}\}) \ge \delta\}\bigr|
\Bigr]
\ge \delta n.
\]
\end{enumerate}
\end{corollary}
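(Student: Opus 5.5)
The plan is to obtain both items as direct instantiations of \Cref{cor:sat-low-influence-meta}. That corollary needs only a uniform total-influence bound $\beta(k)\le \alpha k/2$ on the class. Let $\alpha,\gamma_0>0$ be the constants from \Cref{cor:sat-many-influential-coordinates}. We will choose
\[
\gamma := \alpha/4, \qquad \delta := \min\{\alpha/4,\ \gamma_0\},
\]
where $\gamma$ controls the junta size and $\delta$ is the output constant. We may shrink $\delta$ freely because both conclusions are monotone in $\delta$. The condition ``$k$ sufficiently large'' is only needed so that $\gamma k\ge 1$, making the statement nonvacuous. Nothing else depends on $k$.

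For item 1, fix $K\le \gamma k$ and let $g$ be a $K$-junta. Any coordinate $i$ outside the at most $K$ relevant coordinates satisfies $g(m)=g(m^{\oplus i})$ for all $m$. So at most $K$ terms of $\operatorname{Inf}(g)$ are nonzero, and each is at most $1$. Hence
\[
\sup_g \operatorname{Inf}(g)\le K\le \alpha k/4\le \alpha k/2.
\]
Applying \Cref{cor:sat-low-influence-meta} with $\beta(k)=K$ gives
\[
\mathbb{E}_R\bigl|\{v:\operatorname{dist}(f_{v,R},\{K\text{-juntas}\})\ge \alpha/4\}\bigr|\ge \gamma_0 n.
\]
Since $\delta\le\alpha/4$ and $\delta\le\gamma_0$, replacing both constants by $\delta$ only weakens the bound, which yields the claim.

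For item 2, we use the property recorded just before the statement: every $K$-pseudo-junta has total influence at most $2K$. With $K\le \gamma k=\alpha k/4$, this gives $\beta(k)=2K\le \alpha k/2$, so the hypothesis of \Cref{cor:sat-low-influence-meta} holds again. The same monotonicity argument then gives the bound with the same $\delta$. Note that the choice $\gamma=\alpha/4$, rather than $\alpha/2$, is exactly what makes one constant serve both items.

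There is no real obstacle in either step. The only point needing care is the pseudo-junta influence bound $2K$. We rely on it as quoted from Hatami's framework rather than reprove it. If it were unavailable, we would fall back on \Cref{cor:sat-many-influential-coordinates} together with the definition of pseudo-juntas to bound influence directly.
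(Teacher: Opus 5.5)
Your proposal is correct and matches the paper's proof: both items are instantiations of \Cref{cor:sat-low-influence-meta} using the influence bounds $\operatorname{Inf}(g)\le K$ for $K$-juntas and $\operatorname{Inf}(g)\le 2K$ for $K$-pseudo-juntas (Hatami), with $\delta=\min\{\alpha/4,\gamma_0\}$ and $\gamma$ small enough that $2K\le \alpha k/2$. Your only differences are cosmetic: you prove the junta influence bound directly instead of citing Friedgut, and you fix $\gamma=\alpha/4$ explicitly where the paper just shrinks $\gamma$ by an absolute factor.
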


\begin{proof}
Let $\alpha_0,\gamma_0$ be the constants from
\Cref{cor:sat-many-influential-coordinates}.
Every $K$-junta satisfies $\operatorname{Inf}(g) \le K$~\cite{Friedgut98}, and every $K$-pseudo-junta satisfies
$\operatorname{Inf}(g) \le 2K$ by Hatami's proposition on pseudo-juntas~\cite{Hatami12}.
Thus both claims follow from \Cref{cor:sat-low-influence-meta} by taking
$\delta := \min\{\alpha_0/4,\gamma_0\}$ and shrinking the displayed constant $\gamma$ by
an absolute factor if needed so that both $K$ and $2K$ are at most $\alpha_0 k/2$.
\end{proof}

The point of \Cref{cor:sat-anti-junta} is that the randomized task of finding near-satisfying
assignments for an explicit satisfiable
Max E3SAT family forces averaged non-junta structure: linearly many output coordinates
stay a constant distance away from every small junta and every small pseudo-junta on
average over the internal randomness.

\subsection{Decision Trees and Local Dependence}

For a Boolean function $f : \{0,1\}^k \to \{0,1\}$, let $D(f)$ denote its deterministic
decision-tree depth, that is, the minimum worst-case number of input bits that must be queried in
order to determine $f(m)$.

\begin{corollary}[On average over the seed, linearly many outputs have linear decision-tree depth]
\label{cor:sat-dt-depth}
There exist constants $\alpha,\gamma>0$ such that
\[
\mathbb{E}_{R}
\Bigl[
\bigl|\{v \in [n] : D(f_{v,R}) \ge \alpha k\}\bigr|
\Bigr]
\ge \gamma n,
\]
Equivalently, averaged over the internal randomness of the algorithm,
linearly many output coordinates cannot be computed by reading $o(k)$ message bits.
\end{corollary}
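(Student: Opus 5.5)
This corollary follows from \Cref{cor:sat-many-influential-coordinates} together with the standard comparison between total influence and decision-tree depth. For every seed $r$ and output coordinate $v$ we have $\operatorname{Inf}(f_{v,r}) \le D(f_{v,r})$. This holds because a depth-$D$ decision tree computing $f$ has, on each root-to-leaf path, at most $D$ queried coordinates, and a coordinate $i$ can be sensitive at $m$ only if $i$ is queried on the path followed by $m$. Indeed, if $i$ is not queried on that path, then $m$ and $m^{\oplus i}$ follow the same path and reach the same leaf. Consequently
\[
\operatorname{Inf}(f)=\sum_{i=1}^k \Pr[m]{f(m)\neq f(m^{\oplus i})} \le \E[m]{\#\{i : i\text{ queried on the path of } m\}} \le D(f).
\]

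Next I will take the constants $\alpha,\gamma$ supplied by \Cref{cor:sat-many-influential-coordinates} and fix a seed $r$. Every coordinate $v$ with $\operatorname{Inf}(f_{v,r}) \ge \alpha k$ then satisfies $D(f_{v,r}) \ge \alpha k$. This gives the pointwise set inclusion
\[
\{v : \operatorname{Inf}(f_{v,r})\ge \alpha k\} \subseteq \{v : D(f_{v,r})\ge \alpha k\}
\]
for every $r$. Taking expectations over $R$ and applying \Cref{cor:sat-many-influential-coordinates} yields the displayed bound $\gamma n$, with the same constants.

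The ``equivalently'' clause follows by unwinding definitions. A function computable by reading $o(k)$ message bits, adaptively or non-adaptively, has decision-tree depth $o(k)$, which is below $\alpha k$ for large $k$. Such a coordinate therefore lies outside the counted set.

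There is no real obstacle here. The only point needing care is the inequality $\operatorname{Inf}\le D$, and specifically the observation that flipping an unqueried coordinate preserves the path taken through the tree. An alternative route would be to invoke \Cref{cor:sat-low-influence-meta} with $\mathcal G_k$ equal to the class of depth-$\le\alpha k/2$ trees. However, the direct set inclusion above is cleaner, and it gives depth at least $\alpha k$ exactly rather than only distance from shallow trees.
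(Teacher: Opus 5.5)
Your proposal is correct and is essentially the paper's proof: combine the inequality $\operatorname{Inf}(f)\le D(f)$ with \Cref{cor:sat-many-influential-coordinates} via the pointwise set inclusion, keeping the same constants $\alpha,\gamma$. The only difference is that you prove $\operatorname{Inf}\le D$ yourself, by noting that flipping an unqueried coordinate leaves the root-to-leaf path unchanged, whereas the paper simply cites it as classical.
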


\begin{proof}
The inequality $\operatorname{Inf}(f) \le D(f)$ is classical for Boolean functions; see, for
example, Rossman's discussion of formula average sensitivity~\cite{Rossman18}.
Therefore every coordinate counted in
\Cref{cor:sat-many-influential-coordinates} also has decision-tree depth $\Omega(k)$.
\end{proof}

What is new here is an explicit satisfiable bounded-occurrence family for which, even allowing
internal randomness and approximation in expectation, linearly many seedwise output bits have
linear deterministic query complexity with respect to the message coordinates, on average over the
seed. This contrasts with local-access results that achieve polylogarithmic query complexity under
strong LLL-type sparsity assumptions or in the high-$k$/low-degree regime~\cite{RTVX11,DongMani24}.

\subsection{Constant-Depth Circuits and Formulas}

We now record a bounded-depth consequence. Boppana's average-sensitivity bound for
constant-depth circuits and Rossman's strengthening for bounded-depth formulas place
polynomial-size bounded-depth computation in the low-influence regime on the message
cube~\cite{Boppana97,Rossman18}. Since our hard family is embedded into the standard SAT
encoding by a linear-size $NC^0$ map, the same barrier transfers to randomized bounded-depth
computation for finding near-satisfying assignments on satisfiable Max E3SAT instances.

\begin{corollary}[No polynomial-size constant-depth randomized circuits for finding near-satisfying assignments]
\label{cor:sat-witness-circuit-lb}
Fix the standard binary encoding of $3$-CNF formulas with $n$ variables and $M$ clauses.
Then there is no randomized polynomial-size constant-depth circuit family that,
on every satisfiable input formula of this size, outputs in expectation a $(1-\eta)$-approximate
assignment. The same holds for randomized bounded-depth formula families.

More quantitatively, for every fixed depth $D$, any such depth-$D$ randomized circuit
family must have size $\exp\bigl(n^{\Omega_D(1)}\bigr)$.
\end{corollary}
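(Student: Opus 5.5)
The plan is to argue by contradiction through \Cref{cor:sat-low-influence-meta}, with the class $\mathcal G_k$ taken to be the functions computed by depth-$D$ circuits of the relevant size, composed with the hard embedding. Suppose a randomized depth-$D$ circuit family $\{\mathsf{C}_R\}$ of size $s$ outputs, in expectation, a $(1-\eta)$-approximate assignment on every satisfiable formula of the given size. This defines a one-sided $(1-\eta)$-approximate algorithm $A$ for Max E3SAT in the sense of \Cref{sec:sat-cumulative-influence}. For each seed $r$ we obtain the slice $f_{v,r}(m) = \mathsf{C}_r(x(m))_v$, where $x(m)=\operatorname{Enc}_{n,M}(\Phi(m))$ is the embedding defined above. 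That embedding is a linear-size $NC^0$ map in which every bit depends on $O(1)$ message coordinates. So each slice $f_{v,r}$ is computed by a depth-$(D+O(1))$ circuit of size $s+O(N)$, with $N=O(n\log n)$. Each input bit of the original circuit is replaced by a constant-size DNF/CNF over message bits, and merging layers adds at most a constant to the depth.

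Next I will bound the influence of these slices. Boppana's bound \cite{Boppana97} says a depth-$D'$ circuit of size $S$ has total influence $O(\log S)^{D'-1}$. Rossman's bound \cite{Rossman18} gives the analogous estimate for bounded-depth formulas. Hence every $f_{v,r}$ satisfies $\operatorname{Inf}(f_{v,r}) \le O(\log(s+N))^{D+O(1)}$. On the other side, \Cref{cor:sat-many-influential-coordinates} shows that for some seed $r$ at least $\gamma n\ge 1$ coordinates have $\operatorname{Inf}(f_{v,r})\ge \alpha k$, with $k=\Theta(n)$. Combining the two bounds gives $\alpha k \le O(\log(s+N))^{D+O(1)}$, so $\log s \ge n^{\Omega(1/(D+O(1)))}$ once $s\ge N$. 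This yields $s\ge \exp(n^{\Omega_D(1)})$ and rules out polynomial size. Equivalently, I can apply \Cref{cor:sat-low-influence-meta} with $\beta(k)$ equal to the Boppana bound, which is $\le \alpha k/2$ for polynomial $s$. That gives a positive expected number of slices at distance $\ge \alpha/4$ from the class, contradicting the fact that every slice lies in the class.

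For formulas I will use the same argument with Rossman's average-sensitivity bound for depth-$D$ formulas of size $s$. This bound is also polylogarithmic in $s$ with the exponent depending on $D$. I will check that substituting $O(1)$-size formulas for the input leaves that form intact: each input leaf is replaced by a constant-size formula over message bits, so the size grows by only a constant factor and the depth by a constant.

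I expect the main obstacle to be the careful bookkeeping of the composition with the $NC^0$ embedding. Input literals, including negations, must be replaced by constant-depth, constant-size gadgets without spoiling the depth parameter in the influence bounds. The size blow-up must stay polynomial, which is immediate because each gadget has $O(1)$ size. The variable-index bits that are fixed constants simply become constants. One further point needs a short argument: the circuit only needs correctness on satisfiable inputs, and the family $\Phi(m)$ consists exactly of satisfiable formulas of the stated shape. So the one-sided hypothesis of \Cref{thm:sat-one-sided-cumulative-influence} is met, and the remaining steps are direct substitutions.
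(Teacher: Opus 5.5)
Your proposal is correct and follows essentially the same route as the paper. You compose each seed slice with the $NC^0$ hard embedding $m\mapsto x(m)$. You then bound every output bit's total influence, uniformly over seeds, by Boppana's (respectively Rossman's) average-sensitivity estimate, and contradict the linear influence lower bound that comes from the LTC instability.

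The differences are cosmetic. You invoke the single-seed ``many influential coordinates'' corollary (or the meta-corollary) instead of the seed-averaged cumulative-influence bound. You also spell out the depth/size bookkeeping and the calculation $\log s \ge n^{\Omega(1/(D+O(1)))}$, which the paper leaves to ``the quantitative form of the same average-sensitivity upper bounds.''
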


\begin{proof}
Suppose there were such a randomized polynomial-size constant-depth circuit family for the
encoded $3$-CNF inputs. Write it as a deterministic circuit family
\[
C_{n,M}(x,r)
\]
with an explicit random seed $r$, and let $A$ denote the induced output distribution when $r$ is
sampled uniformly.
Composing with the hard embedding
\[
m \longmapsto x(m)=\operatorname{Enc}_{n,M}(\Phi(m))
\]
yields, for each fixed seed $r$, a deterministic polynomial-size constant-depth multi-output circuit
\[
F_r(m)=\bigl(f_{1,r}(m),\dots,f_{n,r}(m)\bigr)
\]
on the message cube, because $x$ is computable by a linear-size $NC^0$ circuit.
(The same substitution argument applies to formulas.)
By \Cref{cor:sat-deterministic-cumulative-influence}, one has
\[
\mathbb{E}_{R}
\Bigl[
\sum_{v=1}^n \operatorname{Inf}(f_{v,R})
\Bigr]
= \Omega(nk).
\]

On the other hand, Boppana's average-sensitivity upper bound for bounded-depth circuits
and Rossman's strengthening for bounded-depth formulas imply that every output bit of a
polynomial-size constant-depth circuit or formula has sublinear influence on $\{0,1\}^k$,
uniformly over the fixed seed $r$~\cite{Boppana97,Rossman18}.
Hence
\[
\mathbb{E}_{R}
\Bigl[
\sum_{v=1}^n \operatorname{Inf}(f_{v,R})
\Bigr]
= o(nk),
\]
contradicting \Cref{cor:sat-deterministic-cumulative-influence}
for all sufficiently large $n$.
The quantitative lower bound follows from the quantitative form of the same
average-sensitivity upper bounds.
\end{proof}

The point of \Cref{cor:sat-witness-circuit-lb} is that the obstruction already appears for
randomized generation of near-satisfying assignments on satisfiable SAT formulas. Even though each
satisfiable instance has many acceptable nearly optimal assignments and the algorithm may use internal
randomness, the seedwise output bits still accumulate linear total influence on the hard
family, and this rules out the usual polynomial-size bounded-depth circuit and formula classes.

\bibliographystyle{abbrv}
\bibliography{main}
\appendix
\section[Proof of Lemma~\ref*{lem:lin-sys}]{Proof of \Cref{lem:lin-sys}}\label{sec:lin-sys}

It suffices to exhibit an encoder $f$ whose image is $C$ and whose message coordinates appear in fixed positions after a permutation of coordinates.
Since $C$ is linear, $C = M\mathbb F_p^k$ for some $M\in \mathbb F_p^{n\times k}$ of rank $k$.
By the definition of rank, there exists a set of indices $I=\{i_1, i_2,\dots,i_k\}\subseteq [n]$ such that the rows of $M$ indexed by $I$ span $\mathbb F_p^k$.
Let $\Pi\in\mathbb F_p^{n\times n}$ be the permutation matrix that moves the coordinates in $I$ to the first $k$ positions and let $\widetilde M := \Pi M$.
The first $k$ rows of $\widetilde M$ span $\mathbb F_p^k$, so the projection matrix $P\in\mathbb F_p^{k \times n}$ onto those rows satisfies that $P\widetilde M$ is invertible.
Define $\tilde f(m) := \widetilde M(P\widetilde M)^{-1} m$ for $m\in\mathbb F_p^k$.
We have 
\begin{enumerate}
    \item $\tilde f(\mathbb F_p^k) = \widetilde M\mathbb F_p^k = \Pi C$;
    \item The first $k$ coordinates of $\tilde f(m)$ equal $P\tilde f(m) = P\widetilde M(P\widetilde M)^{-1}m = m$.
\end{enumerate}
Thus $\tilde f$ is a systematic encoder for the permuted code $\Pi C$, witnessing that $\Pi C$ is systematic with the same parameters.
Undoing the permutation shows that $C$ becomes systematic after the fixed coordinate permutation $\Pi$, as claimed.

\end{document}